\documentclass{article}

\usepackage{PRIMEarxiv}
\usepackage{amsfonts}       % blackboard math symbols
\usepackage{amscd}          % commutative diagrams
\usepackage[utf8]{inputenc} % allow utf-8 input
\usepackage[T1]{fontenc}    % use 8-bit T1 fonts
\usepackage{hyperref}       % hyperlinks
\usepackage{url}            % simple URL typesetting
\usepackage{booktabs}       % professional-quality tables
\usepackage{amsfonts}       % blackboard math symbols
\usepackage{nicefrac}       % compact symbols for 1/2, etc.
\usepackage{tikz-cd}        % commutative diagrams with tikzcd
\usepackage{bm}
\usepackage{microtype}      % microtypography
\usepackage{lipsum}
\usepackage{fancyhdr}       % header
\usepackage{graphicx}       % graphics
\usepackage{multirow}
\usepackage{amsmath}
\usepackage{subcaption}
\usepackage{array}
\usepackage{graphicx}
\usepackage{threeparttable} 
\usepackage{bm}
\graphicspath{{media/}}     % organize your images and other figures under media/ folder
\usepackage{xcolor} 
\usepackage{color}
\usepackage{enumitem}
\usepackage{algorithm}
\usepackage{algpseudocode}
\usepackage{tabularx}  
\usepackage{mdframed}
\usepackage{tcolorbox}
\usepackage{amsmath,amssymb} % 数学符号依赖
\usepackage{pgfplots}        % 绘图核心包
\pgfplotsset{compat=1.17}    % 兼容pgfplots版本（按需调整）
\usepackage{wrapfig}
\usepackage{tikz}  % 先加载核心绘图包（必须在前）
\usetikzlibrary{arrows.meta}  % 可选：用于美观箭头（如Stealth）
\usetikzlibrary{shapes.geometric}  % 加载几何形状扩展库（支持圆角矩形、平行四边形等）

\newtcolorbox{textbox}{
    colframe=gray!60,    % 灰色边框（60%灰度，可调整数值改变深浅）
    colback=white,       % 无底色（与文档背景一致）
    boxrule=1pt,         % 边框粗细（1pt 较适中）
    left=10pt,           % 左侧内边距
    right=10pt,          % 右侧内边距
    top=6pt,             % 顶部内边距
    bottom=6pt,          % 底部内边距
    sharp corners        % 直角边框（可选，默认是圆角）
}
\newcommand{\qizhi}[1]{\textcolor{black}{#1}}
\newenvironment{Qizhi}
{\color{black}}  % 开始：蓝色
{}              % 结束：无

\newcommand{\wqruan}[1]{\textcolor{black}{#1}}
\newcommand{\linyu}[1]{\textcolor{black}{#1}}

\newcolumntype{P}[1]{>{\centering\arraybackslash}p{#1}}

\usepackage{amssymb, amsthm, amsmath}
\newtheorem{definition}{Definition}
\newtheorem{theorem}{Theorem}

\newtheorem{remark}{Remark}
\newtheorem{corollary}{Corollary}[theorem]

\title{Towards Privacy-Preserving LLM Inference via Covariant Obfuscation (Technical Report)
}

\author{
  Yu Lin, Qizhi Zhang, Wenqiang Ruan, Daode Zhang, Jue Hong, Ye Wu \thanks{The authors thank Prof. Cong Wang of the City University of Hong Kong for his valuable suggestions on revising this paper.}\\
  ByteDance 
  \And
  Hanning Xia, Yunlong Mao, Sheng Zhong \\
  Nanjing University
}

\begin{document}
\maketitle

\begin{abstract}
%-------------------------------------------------------------------------------

The rapid development of large language models (LLMs) has driven the widespread adoption of cloud-based LLM inference services, while also bringing prominent privacy risks associated with the transmission and processing of private data in remote inference. For privacy-preserving LLM inference technologies to be practically applied in industrial scenarios, three core requirements must be satisfied simultaneously: (1) Accuracy and efficiency losses should be minimized to mitigate degradation in service experience. (2) The inference process can be run on large-scale clusters consist of heterogeneous legacy xPUs. (3) Compatibility with existing LLM infrastructures should be ensured to reuse their engineering optimizations. To the best of our knowledge, none of the existing privacy-preserving LLM inference methods satisfy all the above three constraints while delivering meaningful privacy guarantees. In this paper, we propose \texttt{AloePri}, the first privacy-preserving LLM inference method for industrial applications. \texttt{AloePri} protects both the input and output data by covariant obfuscation, which jointly transforms data and model parameters to achieve better accuracy and privacy. We carefully design the transformation for each model component (e.g., Attention, FFN) to ensure inference accuracy and data privacy while keeping full compatibility with existing infrastructures of Language Model as a Service (LMaaS). \texttt{AloePri} has been integrated into an industrial LMaaS system deployed on a multi-node GPU cluster for the evaluation of mainstream LLMs. The evaluation on Deepseek-V3.1-Terminus model (671B parameters) demonstrates that \texttt{AloePri} causes accuracy loss of $0.0\% \sim 3.5\%$ and exhibits efficiency equivalent to that of plaintext inference. Meanwhile, \texttt{AloePri} successfully resists state-of-the-art attacks (e.g., internal state inversion), with less than 5\% of tokens recovered. To the best of our knowledge, \texttt{AloePri} is the first method to exhibit practical applicability to large-scale models in real-world systems.

\end{abstract}

% keywords can be removed
\keywords{Large language model \and Privacy-preserving inference \and Covariant obfuscation}

\section{Introduction}
% TODO: 
% 1. 强调实际应用，与AI infra的结合，如kv cache, pd分离。之前的方案不兼容现有的AI infra，例如需要交互等，而本方案实现一体化的隐私保护推理。
% 2. Contribution方案设计思路的两点，并在方案设计部分对应上

In the era of large language models (LLMs), a surge in application demands from clients (e.g. financial companies and hospitals) drives cloud service providers to offer low-cost language model as a service (LMaaS). However, to use LMaaS, clients have to transmit their private data to remote LLM services, bringing severe privacy concerns. While resolving the privacy concerns is important, the following three constraints must be satisfied in an industrial scenario to keep the product competitiveness and a high return on investment (ROI) when designing privacy-enhancement technologies: 
\begin{enumerate}

    \item \textbf{Accuracy and Efficiency Constraint.} Privacy protection methods must not incur perceptible accuracy degradation or efficiency loss. A model performance gap larger than a generational leap (e.g., 88.3\% for Claude 3.5 Sonnet vs. 85.7\% for Claude 3 on MMLU~\cite{anthropic}) correlates with perceptible disparities in client-facing output quality~\cite{hendrycks2021measuring,liang2022holistic,zheng2023judging}. End-to-end latency exceeding 100~ms is perceptually discernible to clients~\cite{miller1968response}. Such accuracy and efficiency impairments would result in an unacceptable decline in product competitiveness.
    
    % \item \wqruan{\textbf{Accuracy and Efficiency Constraint.} Privacy protection methods must not incur significant losses in accuracy and efficiency. A model performance gap equivalent to a generational leap (e.g., GPT-4 vs. GPT-3.5), or an increase in token latency of more than xx ms, will result in an unacceptable decline in product competitiveness.}
    % \item \qw{\textbf{User Experience Constraint. }For commercial systems, privacy-preserving mechanisms must incur only imperceptible overheads.
    % Prior Human--Computer Interaction studies indicate that end-to-end latency beyond approximately 100\,ms becomes perceptually noticeable to users~\cite{miller1968response}.
    % Standardized benchmarks such as MMLU show that contemporary large language models differ by only several percentage points (e.g., 88.7\% of Claude 3.5 Sonnet v.s. 86.8\% of Claude 3~\cite{anthropic}) in absolute accuracy, and these quantitative gaps correlate with perceptible differences in user-facing output quality~\cite{hendrycks2021measuring,liang2022holistic,zheng2023judging}.
    % }

    \item  \textbf{Hardware Compatibility Constraint.} LLM inference clusters of cloud service providers typically consist of heterogeneous legacy xPUs, e.g., Graphics Processing Units (GPUs), Neural Processing Units (NPUs), and Field Programmable Gate Arrays (FPGAs). Therefore, to keep a high ROI, a privacy protection method must be able to run on heterogeneous legacy xPUs clusters.
    
    \item  \textbf{Software Compatibility Constraint.} The method should be compatible with existing LLM infrastructures so that their engineering optimizations can be reused. The current inference frameworks (e.g., vLLM, SGLang) have integrated many software optimizations (e.g., KV Cache, P/D disaggregation). Any privacy protection method that is not compatible with these frameworks would require huge engineering efforts to re-implement these software optimizations, leading to a low ROI.
\end{enumerate}

The above constraints severely restrict the applicability of methods based on cryptography or trusted execution environment (TEE). While cryptographic methods~\cite{gupta2023sigma,zhang2024secure} can provide rigorous privacy guarantees, their substantial computational overhead and incompatibility with LMaaS infrastructures render them impractical for industrial applications. Meanwhile, a large number of existing computing clusters contain legacy xPUs without hardware-native security guarantees. Therefore, TEE-based methods \cite{tan2025pipellm, ng2020goten,10646815} cannot be widely deployed in such scenarios.

While obfuscation-based methods~\cite{tong2025inferdpt, chowdhury2025pr, tang2023privacy, siyan2024papillon, shen2024fire, du2023dp, roberts2025learning, yuan2023secure, nguyen2026noirprivacypreservinggenerationcode} are promising for privacy-preserving LLM inference because they are efficient and broadly compatible with existing infrastructure, these methods are still far from practical usage. 
% Common obfuscation techniques—such as noise injection, transformations, anonymization, and projection—have proven practical and effective in machine learning \cite{al2019privacy, bakken2004data}.

\qizhi{
To quantify privacy guarantees, researchers commonly adopt formal privacy frameworks such as Differential Privacy (DP) \cite{dwork2006differential}, Rényi Differential Privacy (RDP) \cite{mironov2017renyi}, and metric Differential Privacy (mDP) \cite{xie2025decademetricdifferentialprivacy}. Mechanisms for realizing these privacy definitions typically employ either additive noise injection (e.g., the Gaussian and Laplace mechanisms) or random replacement strategies (e.g., the exponential mechanism).
} 
Building on these mechanism, recent work has proposed obfuscation methods for LLM inference that operate at either the token level \cite{tong2025inferdpt, chowdhury2025pr, tang2023privacy, siyan2024papillon, shen2024fire} or the embedding level \cite{du2023dp, roberts2025learning, yuan2023secure, nguyen2026noirprivacypreservinggenerationcode}. Despite their promise, such methods suffer poor trade-off between privacy and accuracy. 
% Generative LLMs depend on rich contextual semantic dependencies between inputs and outputs, which restricts how aggressively data can be obfuscated without degrading accuracy. Moreover, recent inversion attacks show that constrained obfuscation often fails to prevent reconstruction of private data \cite{kugler2021invbert, lin2024inversion, dong2025depth}.
As we show in Section \ref{sec:exp}, several state-of-the-art obfuscation methods \cite{roberts2025learning, du2023dp, tong2025inferdpt, yue2021differential} suffer an accuracy loss of more than 30\% while still being vulnerable to inversion attacks that recover over 50\% of tokens.

In this paper, to achieve better privacy, efficiency, and accuracy, we first propose \textit{Covariant Obfuscation}-a novel privacy-preserving LLM inference mechanism that jointly transforms input data and model weights. By transforming model weights to mitigate the impact of input obfuscations during inference, covariant obfuscation can provide a stronger privacy guarantee than data-only obfuscation methods at equal model accuracy. Furthermore, we introduce three composition theorems (sequential, parallel, and summation) to streamline the construction of covariant obfuscation. 

Second, following covariant obfuscation, we propose \texttt{AloePri}, an \texttt{A}ccurate, \texttt{lo}w-cost, and \texttt{e}fficient solution for \texttt{Pri}vacy-preserving LLM inference. As illustrated in Figure~\ref{fig:overview}, model obfuscation is performed offline: the client generates a secret token-level permutation and applies it to token-correlated model weights, including the embedding layer and model head. Additional Gaussian noise and carefully selected invertible matrices are further employed to transform the weights of all model components (e.g., attention, FFN) to hide the secret permutation while preserving model inference accuracy. During online inference, the client obfuscates its prompts before sending them to the server and decodes the returned responses using the secret permutation. Under this covariant obfuscation, the server observes only obfuscated inputs and outputs and thus cannot access the client’s private data.

\begin{figure}[t]
\centering 
\includegraphics[width=0.6\textwidth]{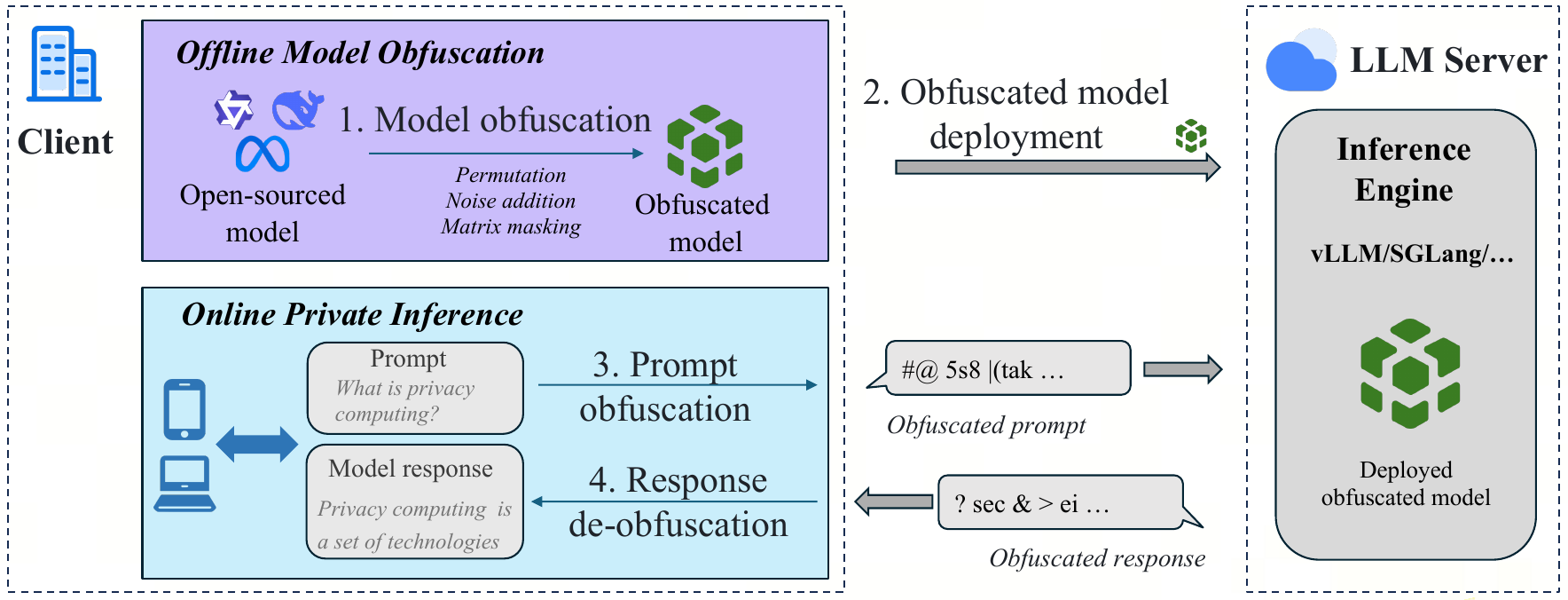}
\caption{Workflow of \texttt{AloePri}. The client locally obfuscates the model and deploys it to the server, and handles prompt obfuscation and response de-obfuscation during online phase.}\label{fig:overview}
\end{figure}

% \wqruan{To quantify the privacy guarantees provided by \texttt{AloePri}, we propose a new privacy notion by extending classical Probably Approximately Correct (PAC) Privacy~\cite{DBLP:conf/crypto/XiaoD23}, namely extended PAC privacy. Based on extended PAC privacy, we quantitatively characterize how model properties (e.g., number of weights, parameter dimension, and vocabulary size) and privacy weights affect \texttt{AloePri}'s privacy guarantee. The quantitative relationship not only shows that \texttt{AloePri} can effectively protect users' private information, but also provides a useful guideline for parameter selection of \texttt{AloePri}.  }

We evaluate \texttt{AloePri} via theoretical analysis and empirical experiments. Theoretical analysis characterizes the computational error introduced by \texttt{AloePri}, establishes upper bounds on its information leakage, and quantifies the corresponding attack \wqruan{success rate}. On the other hand, we conduct comprehensive empirical evaluations on mainstream open-source LLMs (including Qwen2.5/Qwen3, Llama3, Qwen3-MoE, Deepseek-R1-Distill-Qwen, and Deepseek-V3.1-Terminus) and test \texttt{AloePri}'s resilience against representative attacks~\cite{dong2025depth, thomashidden, kugler2021invbert, aldarrab-may-2021-sequence, kambhatla2023decipherment} on privacy-preserving LLM inference methods. Experimental results show that \texttt{AloePri} significantly mitigates privacy leakage while retaining both model accuracy and inference efficiency. For instance, on Deepseek-V3.1-Terminus, \texttt{AloePri} incurs only  $0\% \sim 3.5\%$ accuracy loss, with fewer than 5\% of tokens recovered by several state-of-the-art attacks~\cite{dong2025depth, thomashidden, kugler2021invbert, aldarrab-may-2021-sequence, kambhatla2023decipherment}. Furthermore, as its inference pipeline remains nearly identical to plaintext inference, \texttt{AloePri} supports direct integration into mainstream inference frameworks (e.g., vLLM \cite{kwon2023efficient} and SGLang \cite{zheng2024sglang}) with negligible efficiency overhead and engineering efforts.

We highlight our contributions as follows:
\begin{itemize}
    \item We propose covariant obfuscation, a novel privacy-preserving LLM inference mechanism. We formalize its compositional properties, enabling the seamless integration into sophisticated LLM architectures. (Section~\ref{sec:cov_obfus}) 
    \item Based on the covariant obfuscation mechanism, we propose \texttt{AloePri} that achieves efficiency comparable to plaintext model inference, provides robust privacy protection on both input and output data, and incurs slight accuracy loss. Notably, \texttt{AloePri} does not change the model structure and is compatible with existing LLM infrastructures. (Section~\ref{AloePri})
    \item \linyu{We introduce a novel Rényi-metric differential privacy definition tailored for analyzing the privacy guarantees of high-dimensional Gaussian noise. Building on this framework, we theoretically prove that \texttt{AloePri} achieves a stronger privacy guarantee than data-only obfuscation by incorporating obfuscation on model weights. (Section~\ref{sec:dp_security})}
    \item We conduct comprehensive experiments to validate the effectiveness of \texttt{AloePri}. Experimental results across multiple datasets and LLMs demonstrate that \texttt{AloePri} significantly outperforms state-of-the-art obfuscation methods in both data privacy and model inference accuracy. (Section~\ref{sec:exp})
\end{itemize}

\section{Related Works}
\label{sec:related}
\subsection{Methods based on Obfuscation}
A series of obfuscation-based methods have been proposed for privacy-preserving LLM inference, primarily via three techniques: token substitution \cite{tong2025inferdpt, yue2021differential, chowdhury2025pr, shen2024fire}, embedding transformation \cite{du2023dp, yuan2023secure, roberts2025learning, zeng-etal-2025-privacyrestore, mai2023split, nguyen2026noirprivacypreservinggenerationcode}, LLM-aided rewriting \cite{siyan2024papillon, lin2025emojiprompt}.

\textbf{Token Substitution.} Since prompts are composed of a sequence of tokens, the most intuitive privacy preservation method is to substitute sensitive tokens within the text. SANTEXT \cite{yue2021differential} and RANTEXT \cite{tong2025inferdpt} leverage local differential privacy to replace sensitive tokens with alternatives based on token embedding similarity. ProSan \cite{shen2024fire} quantifies word importance and privacy risk to dynamically balance accuracy and privacy. However, these methods only protect a limited set of sensitive tokens and lack semantic-level protection. Moreover, if substituted tokens correlate with task objectives, they undermine the consistency between input text and task intent.

\textbf{Embedding Transformation.} Some studies obfuscate the token embeddings of input data to achieve semantic-level privacy protection. SGT \cite{roberts2025learning} learns an embedding transformation model via a mutual information-based training loss, converting raw user inputs into noisy embeddings for secure remote LLM inference. DP-Forward \cite{du2023dp} designs a differential privacy mechanism to set privacy budgets for high-dimensional embedding perturbation, balancing privacy and inference accuracy. STIP \cite{yuan2023secure} adopts feature-space random permutations to enable efficient private inference in a three-party scenario. These methods require the client and server to transmit data in the form of embeddings, which increases communication overhead. Meanwhile, existing inversion attacks \cite{kugler2021invbert, lin2024inversion, dong2025depth, thomashidden} pose a critical threat, as adversaries can invert obfuscated embeddings to recover the original private texts. 
In addition, following the framework of split inference, some studies deploy part of models at the client side to process and obfuscate data. SnD \cite{mai2023split} splits LLMs into local/cloud encoders, such that the client can add noise to hidden states outputted by local encoders and use pre-trained denoiser on cloud encoders' outputs to obtain high-quality inference results. NOIR \cite{nguyen2026noirprivacypreservinggenerationcode} splits open-source LLMs into a lightweight client-side encoder/decoder and a cloud-hosted middle layer, which are trained jointly to achieve accuracy and privacy. Meanwhile, it introduces the concept of indistinguishability-preserving vocabulary, which requires the client to locally randomize token embeddings and process them with the client-side encoder. While split inference enables the client to better balance accuracy and privacy, it requires the client to participate in a training process and take part of forward computational model layers during inference, which is not compatible with the existing LLM infrastructure.

\textbf{LLM-aided Rewriting.} This type of method leverages LLMs themselves to hide private information in input data. PAPILLON \cite{siyan2024papillon} deploys a multi-stage LLM pipeline (local + internet models) to preserve privacy while maintaining high task response quality. EmojiPrompt \cite{lin2025emojiprompt} leverages cloud-hosted LLMs to rewrite private texts with special symbols and emojis. These methods rely on LLMs deployed in a trusted local or third-party environment to perform private text rewriting, and quantifying the privacy guarantees of the rewritten texts remains a significant challenge.

% \textbf{Embedding-level Obfuscation Methods.}  Different from token-level obfuscation methods, embedding-level obfuscation methods directly add noise to the embeddings, ensuring that users’ private text data is not directly exposed to the server. For example, DP-Forward \cite{du2023dp} proposes a novel differential privacy mechanism for setting privacy budgets that balance privacy and accuracy in high-dimensional embedding perturbation. STIP \cite{yuan2023secure} is built on a three-party scenario and adopts feature-space random permutations to realize efficient and private inference. SGT \cite{roberts2025learning} includes a mutual information-based training loss function to learn an embedding transformation model. The transformation model converts raw user inputs into noisy embeddings that can be safely used for LLM inference on remote servers. Nevertheless, existing inversion attacks \cite{kugler2021invbert, lin2024inversion, dong2025depth, thomashidden} remain a significant threat to these embedding-level methods. That is, attackers can invert the obfuscated embeddings to recover the original user inputs. In our experiments, we extend these attacks according to the characteristics of \texttt{AloePri} to evaluate the privacy level of \texttt{AloePri} comprehensively.

\subsection{Methods based on Cryptography and TEE}
Cryptographic primitives provide rigorous privacy guarantees, spurring many cryptography-based methods \cite{gupta2023sigma, zhang2024secure,BumbleBee}. Gupta et al.~\cite{gupta2023sigma} combine additive secret sharing (ASS) and function secret sharing for efficient online inference at the cost of massive offline computation and communication overhead. Zhang et al.~\cite{zhang2024secure} implement one-round inference via fully homomorphic encryption (FHE). Some studies \cite{BumbleBee} combine ASS and FHE to balance communication and computation overhead. Li et al.~\cite{mpcformer} optimize such methods by modifying activation functions. Despite recent advances, these methods remain far from practical. For example, BumbleBee~\cite{BumbleBee} takes about 8 minutes to generate a single token for LLaMA-7B.

TEE-based methods constitute another key route for privacy-preserving LLM inference. PipeLLM \cite{tan2025pipellm} and ccAI \cite{10.1145/3725843.3756104} leverage GPU-TEE to enable confidential LMaaS. By adopting speculative pipelined encryption, PipeLLM limits the additional throughput overhead to less than 20\% for LLMs with parameter sizes spanning 13B to 175B.  ccAI adds an additional FPGA-based PCI link to secure PCIe packet transmission across diverse xPU types without requiring modifications to applications or drivers. While these methods have promising efficiency, they are either not suitable for heterogeneous legacy xPU clusters or require a huge hardware retrofit cost. Additionally, several approaches \cite{tramer2018slalom,ng2020goten,10646815} have been proposed to leverage CPU-TEE and legacy GPUs for privacy-preserving inference. However, frequent data transfers between CPU-TEE and legacy GPUs render these methods incompatible with key optimizations in modern LLM infrastructure (e.g., KV Cache, P/D disaggregation).

\section{Preliminaries}

\subsection{System Model}
We focus on the general scenario of LMaaS. A \textit{client} holding private data aims to access the LLM inference service provided by a cloud \textit{server}. The client is only capable of undertaking a certain level of computational overhead during the offline phase. Meanwhile, during the online inference phase, to avoid compromising inference throughput, they can only handle lightweight computational tasks. The LLM inference service provided shall be compatible with the interfaces of modern LLM frameworks, i.e., the service takes prompts as inputs and outputs inference results via one-round communication. For instance, a software company requests a remote LLM server to use code generation agents.

\subsection{Threat Model}
Our goal is to protect the private prompt and model response during LLM inference. We assume the server is honest-but-curious: it faithfully provides the inference service but actively seeks to learn the client's private information. The attacker can access the model and observe the service's inputs and outputs. 
% In addition, the attacker possesses the following capabilities
\wqruan{For example, the attacker can try recovering users' private data through the following three methods}: 1) \textbf{Obfuscation recovery}: the attacker knows the obfuscation mechanism and tries to breach it by exploiting its structural weaknesses \cite{lin2024inversion, thomashidden}. 2) \textbf{Training-based inversion}: the attacker can use substantial computational resources to recover private data via dedicated training-based inversion \cite{dong2025depth, kugler2021invbert}. 3) \textbf{Token-frequency exploitation}: the attacker continuously monitors input and output tokens and records their frequencies over time to infer private information \cite{aldarrab-may-2021-sequence}.

\subsection{Notations}
We use $\mathcal{V}$ and $\theta$ to represent the vocabulary and model weights, respectively. $n$ denotes the number of tokens, and $d$ represents the hidden size. Throughout this paper, $S_{n}$ is defined as the permutation group over $[1, n]$, $O_d$ is the orthogonal group of $d\times d$ matrices, and $\text{id}_{A}$ denotes the identity transformation on the space $A$. Noised data is denoted by $\square^{\star}$, while randomly sampled data is marked with $\hat{\square}$, and $\widetilde{\square}$ denotes obfuscated model weights. 
% We use $\mathcal{I}(x; y)$ to denote the mutual information between two random variables $x$ and $y$, and $H(x)$ represents the information entropy of $x$. 
We summarize all notations used in this paper in Appendix \ref{appx:notation}.

\subsection{Large Language Model}
\label{sec:llm_structure}
\subsubsection{Model Structure}
A typical LLM consists of a vocabulary $\mathcal{V}$ with $n$ tokens, as well as five key components: an embedding layer, a model head, $L$ attention layers, $L$ feed-forward network (FFN) layers, and multiple normalization layers.

\textbf{Vocabulary, Embeddings, and Model Head.} The vocabulary $\mathcal{V}$ and corresponding merge rules serve to tokenize textual prompts into a sequence of token indices. The weight matrix of the embedding layer $W_{e}$ is an $n\times d$-dimensional matrix, and that of the model’s head layer $W_{h}$ is a $d\times n$-dimensional matrix. Each token index in $\mathcal{V}$ corresponds to a $d$-dimensional vector in $W_{e}$ and $W_{h}$, respectively.

\textbf{FFN}. The structures of FFN vary between two types of LLMs: dense models and Mixture-of-Experts (MoE) models \cite{lepikhin2020gshard}. In dense models, each FFN layer consists of three sets of weights, denoted as $\omega_{\text{ffn}} = (W_{\text{gate}}, W_{\text{up}}, W_{\text{down}})$. Given the hidden states $x$ as input, the forward pass output of the FFN can be calculated as:
\begin{equation}
\label{ffn}
    f_{\text{ffn}}(x, \omega_{\text{ffn}}) = \left(
\text{SiLU}(x W_{\text{gate}}) \odot (x W_{\text{up}})
\right) W_{\text{down}},
\end{equation}
where $\text{SiLU}$ is a commonly used activation function in LLMs, and $\odot$ denotes the Hadamard product. In MoE models, each expert corresponds to an FFN layer with the same structure as that in dense models. Furthermore, an expert router (denoted as $W_{\text{router}}$) is employed to select activated experts during model inference.

\textbf{Attention Layer}.
In this paper, we mainly consider four popular attention mechanisms: Multi-head Attention (MHA) \cite{vaswani2017attention}, Multi-Query Attention (MQA) \cite{shazeer2019fast}, Grouped-Query Attention (GQA) \cite{ainslie2023gqa}, and Multi-head Latent Attention (MLA) \cite{deepseekai2025deepseekr1incentivizingreasoningcapability}. The weights of the attention layer can be denoted as $\omega_{\text{attn}} = (W_{\text{q}}, W_{\text{k}}, W_{\text{v}}, W_{\text{o}})$. We define the total number of attention heads as $m$, and the number of heads corresponding to the attention key and value weights as $m_{\text{kv}}$. For example, $\frac{m}{m_{\text{kv}}} > 1$ for GQA, $\frac{m}{m_{\text{kv}}} = 1$ for MHA, and $\frac{m}{m_{\text{kv}}} = m$ for MQA. Taking GQA as an example, the attention weights can be split by the number of heads as: \(
    W_{\text{q}} = \left[ \begin{matrix} W^{(1)}_{\text{q}}  & \dots &W^{(m)}_{\text{q}} \end{matrix} \right], 
    W_{\text{k}} = \left[ \begin{matrix} W^{(1)}_{\text{k}} & \dots & W^{(m_{kv})}_{\text{k}} \end{matrix} \right], 
    W_{\text{v}} = \left[ \begin{matrix} W^{(1)}_{\text{v}} & \dots & W^{(m_{kv})}_{\text{v}} \end{matrix} \right], 
    W_{\text{o}} = \left[ \begin{matrix} W^{(1)}_{\text{o}} & \dots &W^{(m)}_{\text{o}} \end{matrix} \right]^T.
\)
For the input hidden states $x$, the attention output of GQA is computed as
\begin{align}
\label{eq:gqa}
    & f_{\text{attn}}(x, \omega_{\text{attn}})
     =  \sum_{i}
    g\left(
    \frac{
    \mathcal{G}(xW^{(i)}_{\text{q}}) \mathcal{G}(x W^{\eta(i)}_{\text{k}})^T
    }{
    \sqrt{d_{\text{head}}}
    }
    \right) x W^{\eta(i)}_{\text{v}}
     W^{(i)}_{out}.
\end{align}

In the above equation, $i$ denotes the head index for $W_{\text{q}}$ and $W_{\text{o}}$, while $\eta(i)$ represents the corresponding head index for $W_{\text{k}}$ and $W_{\text{v}}$. $g$ denotes the softmax function. $\mathcal{G}$ stands for the positional embedding function, such as Rotary Position Embeddings (RoPE)~\cite{su2024roformer}.
% The multiplication of attention query and key is also called \textit{attention score}. 

% As for MLA, it uses low-rank weights for attention query and key to reduce computational and memory overhead. Meanwhile, it also targets the characteristics of rotary positional encoding, optimizing its integration with the low-rank attention mechanism to maintain effective positional information modeling while ensuring efficiency.

\textbf{Layer Normalization}. 
In this paper, we primarily focus on the Root Mean Square Layer Normalization (RMSNorm), the most commonly used layer normalization method in modern LLMs. RMSNorm utilizes a weight parameter $w_{\text{norm}}$ and the root mean square to normalize an input state $x$ with dimension $d$, which is calculated as:
$\text{RMSNorm}(x) = \frac{x \odot w_{\text{norm}}}{\sqrt{\frac{1}{d}\sum_{i=1}^{d} x_i^2}}$.

\subsubsection{Workflow of Auto-regressive Text Generation}

During inference, the model accepts a textual prompt as input. Using the vocabulary $\mathcal{V}$, the prompt is first tokenized into a sequence of $l$ tokens $\mathcal{T} =  \{tok_1, \dots, tok_l\}$, which are further converted to a token index sequence $x = \{x_i\}_{i=1}^l$. Let $\theta$ denote the set of all model weights. The model’s forward computation can be formulated as $f(x, \theta) = y$, which yields a token index $y \in \mathbb{Z}_n$. Specifically, the $l$ token indices first retrieve the corresponding $l$ embeddings via $W_{e}$. These embeddings are then fed into the attention and FFN layers to extract semantic information. Finally, the extracted features are used to compute logits with $W_{h}$, from which the index of the next generated token $y$ is sampled. To enable auto-regressive generation, the generated token index $y$ is appended to the original token index sequence $x$ to serve as the input for the subsequent inference step.

% \subsection{Mutual Information and PAC Privacy}
% \begin{definition}[$(\delta, \rho, D)$-PAC Privacy]
% \label{def:pac}
% For a data distribution $D$ over a space $\mathcal{X}$, an obfuscation mechanism $\mathcal{M}: \mathcal{X} \rightarrow \mathcal{X}^{\star}$, and a measure function $\rho$, $\mathcal{M}$ is said to satisfy $(\delta, \rho, D)$-PAC Privacy if it holds that: There does not exist an adversary that can recover an approximation $\tilde{X}$ of X with a probability higher than $1-\delta$ so that $\rho(\tilde{X}, X) = 1$ given only $D$, $\mathcal{M}$, and $\mathcal{M}(X)$.
% \end{definition}

% PAC Privacy \cite{xiao2023pac} models the hardness of data reconstruction for an obfuscation mechanism $\mathcal{M}$. The adversary's probability of failure, denoted by $\delta$, and subject to a specific measure function $\rho$, serves as an effective metric for the security level of the given scenario. The measure function $\rho$ can be tailored to different threat scenarios, depending on the adversary's target and the server's security requirements.

% Let $\delta_0$ be the prior failure probability without observing any private information. The Kullback-Leiber Divergence between $\delta_0$ and $\delta$ is bounded by the mutual information between $X$ and $\widetilde{X}$:
% \begin{equation}
%     \Delta_{KL} \delta \leq \mathcal{I}(X; \widetilde{X}),
% \end{equation}
% where $\Delta_{KL}\delta = \delta \log(\frac{\delta}{\delta_0}) + (1 - \delta) \log(\frac{1 - \delta}{1 - \delta_0})$.

\section{Covariant Obfuscation}
\label{sec:cov_obfus}
% This section elaborates on covariant obfuscation, a more practical obfuscation mechanism aimed at optimizing the utility-privacy trade-off during model inference.

\linyu{In this section, we first present the definition of covariant obfuscation.
% and demonstrate its advantages in mitigating information leakage compared to traditional data-only obfuscation in LLM inference scenarios. 
We then introduce three composition theorems that allow us to separately design covariant obfuscation for each component of the LLM, and subsequently combine these individual covariant obfuscations as a covariant obfuscation for the whole model.}

\subsection{Definition and Property}

\subsubsection{Covariant Obfuscation on Inference Function}
We consider an inference function \( f: X \times \Theta \to Y \), where \( X\) is the data space, \( \Theta \) is the parameter space, and \( Y \) is the prediction space, and $\tilde{X}, \tilde{Y}, \tilde{\Theta}$ are their obfuscated spaces respectively. A covariant obfuscation $C$ with obfuscation error (a constant) $e_C$ for the function $f$ can be represented as the following quintuple \( (\phi_X, \phi_\Theta, \phi_Y, \psi_Y, \tilde{f}) \): (1) Data obfuscation: \( \phi_X: X \to \tilde{X} \). (2) Model transformation: \( \phi_\Theta: \Theta \to \tilde{\Theta} \). (3) Label obfuscation: \( \phi_Y: Y \to \tilde{Y} \). (4) Label de-obfuscation \( \psi_Y: \tilde{Y} \to Y \). (5) Inference in the obfuscated space: \( \tilde{f}: \tilde{X} \times \tilde{\Theta} \to \tilde{Y} \).
They satisfy the following conditions:

\begin{enumerate}
    \item[ A. ] Commutation condition: 
    % \( \widetilde{f}(\phi_X(x), \phi_\Theta(\theta)) \approx \phi_Y \circ f(x, \theta) \)
    \wqruan{The operations of the obfuscation and the inference function shall be commutative. That is, } $\mathbb{E}[d(\tilde{y},  \phi_Y \circ f(x, \theta))]\leq e_C $, where $\tilde{y}=\tilde{f}(\phi_X(x), \phi_\Theta(\theta))$, and $d(\cdot, \cdot)$ is a distance function on $\tilde{Y}$ (e.g. Euclid Distance), indicating that the error incurred by the obfuscation is bounded by $e_C$.
    % for any \( x \in X \) and \( \theta \in \Theta \).
    \[
    \scriptsize
    \begin{CD}
    X \times \Theta @>f>> Y \\
    @V(\phi_X, \phi_\Theta) VV @V\phi_Y VV \\
    \tilde{X} \times \tilde{\Theta} @>\tilde{f}>> \tilde{Y}
    \end{CD}
    \]
    \item[ B. ] De-obfuscation condition: \( \psi_Y \circ \phi_Y = \text{id}_Y \), where $\text{id}_Y$ denotes the identity transformation on $Y$.
\end{enumerate}

Notably, when \(\tilde{X} = X\), \(\tilde{\Theta} = \Theta\), \(\tilde{Y} = Y\), \(\phi_\Theta = \text{id}_\Theta\), and \(\phi_Y = \psi_Y = \text{id}_Y\), the covariant obfuscation degenerates to a data-only obfuscation given by \(\phi_X: X \to X\).

\subsection{Composition Theorems}
\label{sec:composition_theorem}

% To design covariant obfuscation for general model inference functions, we aim to first design covariant obfuscation for some basic modules, and then construct the covariant obfuscation of the entire model inference function through the composition of these modules. This requires covariant obfuscation to satisfy the module composition property: the composition of covariant obfuscations remains a covariant obfuscation.

% It is noted that the process of model inference can be decomposed into the series composition, parallel composition, and summation of modules (e.g., residual links). Below, we discuss the composition property of covariant obfuscation for these three cases.

% We present the composition theorems of covariant obfuscation and show their proofs in Appendix \ref{appx:proof_cov_obfus_the}.
As LLM includes complicated structures, we formalize the compositional theorems (sequential, parallel, and summation) of covariant obfuscation to enable the design of covariant obfuscation for basic model components. The proofs of the theorems are presented in Appendix \ref{appx:proof_of_comp_the}.

% for covariant obfuscation on two connected functions $f$ and $g$, e.g. the sequential pattern for $f\circ g$,  the parallel pattern for $f|| g$, and the summation pattern for $f+g$.

%\linyu{We show that covariant obfuscation satisfies sequential composition (Theorem \ref{the:sequantial}), parallel composition (Theorem \ref{the:parallel}), and summation composition (Theorem \ref{the:summation}). These theorems enable us to decompose complex LLM inference into small components (e.g., attention $\omega_{\text{attn}}$, FFN $\omega_{\text{ffn}}$), and construct covariant obfuscation for each component separately.}

% \begin{theorem}
% If $(\phi_X, \phi_\Theta, \psi _Y, \tilde{f})$ is a covariant obfuscation of $f: X \times \Theta  \to Y$, $(\phi _Y, \phi _\Xi, \psi _Z, \tilde{g})$ is a  covariant of $g: Y \times \Xi  \to Z$, we can construct a covariant obfuscation of $X \times (\Theta \times \Xi) \to Z$ as follows:
% \[
% \begin{array}{rlcrcl}
% f: & X & \times& (\Theta \times \Xi )& \longrightarrow  & Y \\
% & \downarrow ^ {\phi_X}  & & \downarrow  ^ {(\phi_\Theta, \phi_\Xi)} && \uparrow  ^ {\psi _Z} \\
% \tilde{f}: & \tilde{X} & \times& (\tilde{\Theta} \times \tilde{\Xi}) & \longrightarrow & \tilde{Y}
%  \end{array}
% \]
% \end{theorem}

\subsubsection{Sequential Composition}
We show that two sequentially connected covariant obfuscations designed for two LLM adjacent components (e.g., $W_{\text{v}}$ and $W_{\text{o}}$) still constitute a covariant obfuscation, whose obfuscation error is controlled by the obfuscation errors of two original covariant obfuscations.

\begin{theorem}[Sequential Composition Theorem $f \circ g$]
\label{the:sequantial}
Assume that a covariant obfuscation \( C_1 \) for \( f: X \times \Theta \to Y \), and a covariant obfuscation \( C_2 \) for \( g: Y \times \Xi \to Z \) satisfying the boundary condition: \( {\phi_Y}_{|C_1} = {\phi_Y}_{|C_2} \) (denoted as \( \phi_Y \) uniformly below). Then, \( C_2 \circ C_1 := (\phi_X, (\phi_\Theta, \phi_\Xi), \phi_Z, \psi_Z, \tilde{h}) \) is a covariant obfuscation for \( h: X \times (\Theta \times \Xi) \to Z \) where \( h(x, (\theta, \xi)) := g(f(x, \theta), \xi) \). 
%such that \( \tilde{h}(\tilde{x}, (\tilde{\theta}, \tilde{\xi})) := \tilde{g}(\tilde{f}(\tilde{x}, \tilde{\theta}), \tilde{\xi}) \).

% \[
% \scriptsize
% \xymatrix{
% X \times \Theta \ar[r]^f \ar[d]_{(\phi_X, \phi_\Theta)} & Y \ar[d]^{\phi_Y} \\
% \tilde{X} \times \tilde{\Theta} \ar[r]_{\tilde{f}} & \tilde{Y}
% \ar@{}[ul]|{C_1} % 直接在中间位置放文字，无语法风险
% }
% \quad 
% \xymatrix{
% Y \times \Xi \ar[r]^g \ar[d]_{(\phi_Y, \phi_\Xi)} & Z \ar[d]^{\phi_Z} \\
% \tilde{Y} \times \tilde{\Xi} \ar[r]_{\tilde{g}} & \tilde{Z}
% \ar@{}[ul]|{C_2} % 中间嵌入文字，不破坏箭头结构
% }
% \quad 
% \xymatrix{
% X \times (\Theta \times \Xi) \ar[r]^{h} _ {\underset{C_2 \circ C_1}{}} \ar[d]^{(\phi_X, (\phi_\Theta, \phi_\Xi))} & Z \ar[d]^{\phi_Z} \\
% \tilde{X} \times (\tilde{\Theta} \times \tilde{\Xi}) \ar[r]_{\tilde{h}} & \tilde{Z}
% % \ar@{}[ul]|{C_2 \circ C_1} % 中间居中嵌入文字，不影响箭头和节点
% }
% \]

Furthermore, let \( e_{C_1} \) and \( e_{C_2} \) be the obfuscation errors of \( C_1 \) and \( C_2 \), respectively. For all \( \tilde{y}_1, \tilde{y}_2 \in \tilde{Y} \) and \( \tilde{\xi} \in \tilde{\Xi} \), if \( \tilde{g}(\tilde{y}, \tilde{\xi}) \) satisfies the Lipschitz condition: 
% with respect to \( \tilde{y} \):
\[
d(\tilde{g}(\tilde{y}_1, \tilde{\xi}), \tilde{g}(\tilde{y}_2, \tilde{\xi})) \leq M_g \cdot d(\tilde{y}_1, \tilde{y}_2),
\]
where $M_g$ is the Lipschitz constant for $\tilde{g}$. The obfuscation error of the series composition satisfies:
\[
e_{C_2 \circ C_1} \leq M_g \cdot e_{C_1} + e_{C_2}.
\] 
\end{theorem}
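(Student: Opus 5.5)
The plan is to define the composed inference map explicitly as \( \tilde{h}(\tilde{x}, (\tilde{\theta}, \tilde{\xi})) := \tilde{g}(\tilde{f}(\tilde{x}, \tilde{\theta}), \tilde{\xi}) \) and the composed parameter transformation as \( (\phi_\Theta, \phi_\Xi)(\theta,\xi) := (\phi_\Theta(\theta), \phi_\Xi(\xi)) \). The data obfuscation \( \phi_X \) is taken from \( C_1 \), and the label obfuscation and de-obfuscation \( \phi_Z, \psi_Z \) are taken from \( C_2 \). With these choices, the de-obfuscation condition B for \( C_2 \circ C_1 \) is immediate, since \( \psi_Z \circ \phi_Z = \text{id}_Z \) is exactly condition B of \( C_2 \). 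The pair \( \phi_Y,\psi_Y \) is used only internally, and the boundary condition \( {\phi_Y}_{|C_1} = {\phi_Y}_{|C_2} \) is what allows the two commuting squares to be glued along \( Y \).

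For the commutation condition A and the error bound, fix \( x,\theta,\xi \). Set \( y = f(x,\theta) \), \( \tilde{y} = \tilde{f}(\phi_X(x), \phi_\Theta(\theta)) \) and \( \tilde{\xi} = \phi_\Xi(\xi) \). The quantity to bound is \( d(\tilde{g}(\tilde{y}, \tilde{\xi}), \phi_Z(g(y,\xi))) \). I will insert the intermediate point \( \tilde{g}(\phi_Y(y), \tilde{\xi}) \) and apply the triangle inequality for \( d \) on \( \tilde{Z} \):
\[
d(\tilde{g}(\tilde{y}, \tilde{\xi}), \phi_Z(g(y,\xi))) \leq d(\tilde{g}(\tilde{y}, \tilde{\xi}), \tilde{g}(\phi_Y(y), \tilde{\xi})) + d(\tilde{g}(\phi_Y(y), \tilde{\xi}), \phi_Z(g(y,\xi))).
\]
The Lipschitz hypothesis bounds the first term by \( M_g\, d(\tilde{y}, \phi_Y(f(x,\theta))) \). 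The second term is exactly the commutation defect of \( C_2 \) evaluated at the input \( y \) and parameter \( \xi \). Taking expectations over the randomness of the obfuscations and using linearity, the first term contributes at most \( M_g e_{C_1} \) by condition A for \( C_1 \), and the second contributes at most \( e_{C_2} \) by condition A for \( C_2 \). Together these give \( e_{C_2\circ C_1} \le M_g e_{C_1} + e_{C_2} \), and in particular the expected error is bounded by a constant, so \( C_2 \circ C_1 \) is a covariant obfuscation.

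The step needing the most care is the second term. Condition A for \( C_2 \) is stated for a clean input \( y \) obfuscated by \( \phi_Y \), and that is exactly our situation: the intermediate point was chosen as \( \phi_Y(y) \) rather than \( \tilde{y} \) precisely so that \( C_2 \)'s guarantee applies. Because \( y = f(x,\theta) \) is deterministic given \( x,\theta \), the bound \( e_{C_2} \) holds uniformly over \( y \). There is one subtlety about independence. If \( \phi_Y \) in \( C_2 \) and the randomness in \( C_1 \) share random keys (for example, a common secret permutation), then the expectation in the second term is over the joint randomness. In that case I will argue that \( e_{C_2} \) is a worst-case-over-inputs bound, so conditioning on the \( C_1 \)-randomness does not increase it. I expect this to be the main point requiring explicit justification; everything else is the triangle inequality plus the Lipschitz property.
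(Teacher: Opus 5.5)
Your proposal is correct and follows the paper's proof. It uses the same intermediate point $\tilde{g}(\phi_Y(f(x,\theta)),\phi_\Xi(\xi))$, the triangle inequality with the Lipschitz bound on the first term, $C_2$'s commutation condition on the second term, and condition B inherited directly from $C_2$. You are in fact more careful than the paper, which writes $\tilde{g}(\cdot,\xi)$ and never makes the expectation explicit.

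One correction to your closing remark: no independence or conditioning argument is needed. Your claim that conditioning on the $C_1$-randomness cannot increase the bound is false in general: if $\phi_Y$ is a shared random key, fixing it can raise the conditional defect above $e_{C_2}$. Instead, note that the second term depends only on the deterministic $y=f(x,\theta)$ and on $C_2$'s own random ingredients. Its unconditional expectation under the joint law therefore equals its expectation under $C_2$'s law, which is at most $e_{C_2}$ by linearity of expectation alone.
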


\subsubsection{Parallel Composition}
Some LLM components are connected in parallel, e.g., attention queries and keys. When the covariant obfuscations of these components are connected in parallel, they also constitute a covariant obfuscation.
% \qizhi{which obfuscation error is controlled by the obfuscation errors of two origin  covariant obfuscations.}

\begin{theorem}[Parallel Composition Theorem $f || g$]
\label{the:parallel}
Assume that a covariant obfuscation \( C_1 \) for \( f: X \times \Theta \to Y \) and a covariant obfuscation \( C_2 \) for \( g: X \times \Xi \to Z \) satisfying the boundary condition: \( \phi_{X_{|C_1}} = \phi_{X_{|C_2}} \) (denoted as \( \phi_X \) uniformly below). Then, \( C_1 || C_2 := (\phi_X, (\phi_\Theta, \phi_\Xi), (\phi_Y, \phi_Z), (\psi_Y, \psi_Z), \tilde{h}) \) is a covariant obfuscation for \( h: X \times (\Theta \times \Xi) \to Y \times Z \), where \( h(x, (\theta, \xi)) := (f(x, \theta), g(x, \xi)) \).
%such that \( \tilde{h}(x, (\theta, \xi)) := (\tilde{f}(x, \theta), \tilde{g}(x, \xi)) \).

% \[
% \scriptsize
% \begin{CD}
% X \times \Theta @>f>> Y \\
% @V\phi_X VV       @V\phi_Y VV \\
% \tilde{X} \times \tilde{\Theta} @>\tilde{f}>> \tilde{Y}
% \end{CD}
% \quad \quad \quad \quad \quad \quad 
% \begin{CD}
% X \times \Xi @>g>> Z \\
% @V\phi_X VV       @V\phi_Z VV \\
% \tilde{X} \times \tilde{\Xi} @>\tilde{g}>> \tilde{Z}
% \end{CD}
% \]

Furthermore, let \( e_{C_1} \) and \( e_{C_2} \)be the obfuscation errors of \( C_1 \) and \( C_2 \), if the distance function \( d_{Y \times Z} \) on \( Y \times Z \) satisfies the control condition with respect to the distance functions \( d_Y \) on \( Y \) and \( d_Z \) on \( Z \):
\[
d_{Y \times Z}((y_1, z_1), (y_2, z_2)) \leq d_Y(y_1, y_2) + d_Z(z_1, z_2),
\]
then \( e_{C_1 || C_2} \leq e_{C_1} + e_{C_2} \). 
\end{theorem}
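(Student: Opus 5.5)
We verify the two defining conditions of a covariant obfuscation for $C_1 || C_2$ directly, using the natural product constructions. Set $\tilde{h}(\tilde{x}, (\tilde{\theta}, \tilde{\xi})) := (\tilde{f}(\tilde{x}, \tilde{\theta}), \tilde{g}(\tilde{x}, \tilde{\xi}))$. Take $\phi_{\Theta\times\Xi} := (\phi_\Theta, \phi_\Xi)$ acting componentwise, and likewise $\phi_{Y\times Z} := (\phi_Y, \phi_Z)$ and $\psi_{Y\times Z} := (\psi_Y, \psi_Z)$. The boundary condition $\phi_{X_{|C_1}} = \phi_{X_{|C_2}}$ is exactly what makes the single data obfuscation $\phi_X$ feed consistently into both branches. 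Without it, $\tilde{h}$ could not be written as a function of one obfuscated input $\tilde{x} = \phi_X(x)$.

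\textbf{De-obfuscation condition.} This is immediate componentwise: $\psi_{Y\times Z}\circ\phi_{Y\times Z}(y,z) = (\psi_Y\phi_Y(y), \psi_Z\phi_Z(z)) = (y,z)$, by condition B for $C_1$ and for $C_2$.

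\textbf{Commutation condition.} Fix $x, \theta, \xi$, and note that $\phi_{Y\times Z}\circ h(x,(\theta,\xi)) = (\phi_Y f(x,\theta), \phi_Z g(x,\xi))$. Write $\tilde{y} = \tilde{f}(\phi_X(x), \phi_\Theta(\theta))$ and $\tilde{z} = \tilde{g}(\phi_X(x), \phi_\Xi(\xi))$; here the shared $\phi_X$ is used again. Apply the control condition to the distance on the obfuscated product space, i.e.\ on $\tilde{Y}\times\tilde{Z}$, which I interpret as the intended reading since distances in condition A live on the obfuscated label space. This gives
\[
d_{Y\times Z}\big((\tilde{y},\tilde{z}), \phi_{Y\times Z} h(x,(\theta,\xi))\big) \leq d_Y(\tilde{y}, \phi_Y f(x,\theta)) + d_Z(\tilde{z}, \phi_Z g(x,\xi)).
\]
Taking expectations over the randomness of the obfuscation maps and using linearity of expectation, the right-hand side is at most $e_{C_1} + e_{C_2}$ by condition A for each of $C_1$ and $C_2$. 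Hence $e_{C_1||C_2} \le e_{C_1}+e_{C_2}$, and in particular the error is finite, so $C_1 || C_2$ is a covariant obfuscation.

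\textbf{Main subtlety.} The one delicate point is the probabilistic coupling. Both branches use the same random draw of $\phi_X$, so the two error terms are dependent. Linearity of expectation needs no independence, though, so the bound still holds; I would remark on this explicitly. I would also note that, without the control condition, only the qualitative statement (conditions A and B with some error) survives, not the additive bound.
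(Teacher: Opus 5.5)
Your proposal is correct and follows the paper's proof almost exactly: de-obfuscation is checked componentwise as $(\psi_Y\circ\phi_Y,\psi_Z\circ\phi_Z)=\text{id}_{Y\times Z}$, and the error bound comes from applying the control condition to the pointwise distance between $\tilde{h}(\phi_X(x),(\phi_\Theta(\theta),\phi_\Xi(\xi)))$ and $(\phi_Y,\phi_Z)\circ h(x,(\theta,\xi))$. Two steps the paper leaves implicit are made explicit in your version: taking expectations by linearity, which is valid even though both branches share the same random $\phi_X$, and reading the control condition as a condition on $\tilde{Y}\times\tilde{Z}$.
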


\subsubsection{Summation Composition} 
The covariant obfuscations for bypass computations, e.g., residual connections, can also be composed.
\begin{theorem}[Summation Composition Theorem $f + g$]
\label{the:summation}
Assume that a covariant obfuscation \( C_1 \) for \( f: X \times \Theta \to Y \) and a covariant obfuscation \( C_2 \) for \( g: X \times \Xi \to Y \) satisfying the boundary condition: $\phi_{X_{|C_1}} = \phi_{X_{|C_2}}, \phi_{Y_{|C_1}} = \phi_{Y_{|C_2}}, \psi_{Y_{|C_1}} =  \psi_{X_{|C_2}}$ (denoted as $\phi_X, \phi_Y, \psi_Y$ uniformly below), where \( Y \) and \( \tilde{Y} \) are Abelian groups, and \( \phi_Y: Y \to \widetilde{Y} \) is a group homomorphism. Then,  \( C_1 + C_2 := (\phi_X, (\phi_\Theta, \phi_\Xi), \phi_Y, \psi_Y, \tilde{h}) \) is a covariant obfuscation for for \( h: X \times (\Theta \times \Xi) \to Y \) where \( h(x, (\theta, \xi)) := f(x, \theta) + g(x, \xi) \) such that \( \tilde{h}(x, (\theta, \xi)) := \tilde{f}(x, \theta) + \tilde{g}(x, \xi) \).

% \[
% \scriptsize
% \begin{CD}
% f: X \times \Theta @>>> Y \\
% @V(\phi_X, \phi_\Theta) VV @V\phi_Y VV \\
% \tilde{f}: \tilde{X} \times \tilde{\Theta} @>>> \tilde{Y}
% \end{CD}
% \quad \quad \quad \quad 
% \begin{CD}
% g: X \times \Xi @>>> Y \\
% @V(\phi_X, \phi_\Xi) VV @V\phi_Y VV \\
% \tilde{g}: \tilde{X} \times \tilde{\Xi} @>>> \tilde{Y}
% \end{CD}
% \]

Furthermore, let \( e_{C_1} \) and \( e_{C_2} \) be the obfuscation errors of \( C_1 \) and \( C_2 \), if the distance function on \( Y \) satisfies translation invariance, i.e.,
\[
d(x, y) = d(x + z, y + z),
\]
then \( e_{C_1 + C_2} \leq e_{C_1} + e_{C_2} \).   

\end{theorem}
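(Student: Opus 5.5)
We verify the two defining conditions of a covariant obfuscation for the quintuple $C_1 + C_2 = (\phi_X, (\phi_\Theta, \phi_\Xi), \phi_Y, \psi_Y, \tilde{h})$, with $\tilde{h}(\tilde{x}, (\tilde{\theta}, \tilde{\xi})) := \tilde{f}(\tilde{x}, \tilde{\theta}) + \tilde{g}(\tilde{x}, \tilde{\xi})$, and then bound the error. The map types are immediate: $\phi_X$ is shared by the boundary condition, the parameter obfuscation acts componentwise on $\Theta \times \Xi$, and $\tilde{h}$ lands in $\tilde{Y}$ because $\tilde{Y}$ is an Abelian group, so addition is defined there.

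\textbf{De-obfuscation condition.} This is inherited directly. Since $\phi_Y$ and $\psi_Y$ are the common maps of $C_1$ and $C_2$, condition B for $C_1$ already gives $\psi_Y \circ \phi_Y = \text{id}_Y$.

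\textbf{Commutation and error bound.} These are handled together. Fix $x, \theta, \xi$ and write
\[
\tilde{y}_1 = \tilde{f}(\phi_X(x), \phi_\Theta(\theta)), \quad \tilde{y}_2 = \tilde{g}(\phi_X(x), \phi_\Xi(\xi)).
\]
Because $\phi_Y$ is a group homomorphism, $\phi_Y(h(x,(\theta,\xi))) = \phi_Y(f(x,\theta)) + \phi_Y(g(x,\xi))$. This is where the homomorphism hypothesis is essential. We then insert the intermediate point $\phi_Y(f(x,\theta)) + \tilde{y}_2$ and apply the triangle inequality for the distance $d$:
\[
d(\tilde{y}_1 + \tilde{y}_2, \phi_Y f + \phi_Y g) \le d(\tilde{y}_1 + \tilde{y}_2, \phi_Y f + \tilde{y}_2) + d(\phi_Y f + \tilde{y}_2, \phi_Y f + \phi_Y g).
\]
Translation invariance reduces the first term to $d(\tilde{y}_1, \phi_Y f)$ by cancelling $\tilde{y}_2$. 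It reduces the second term to $d(\tilde{y}_2, \phi_Y g)$ by cancelling $\phi_Y f$, using commutativity of $\tilde{Y}$ to move the common summand into place. Taking expectations over the randomness of the obfuscation maps, linearity of expectation gives
\[
\mathbb{E}[d(\tilde{h}(\cdot), \phi_Y \circ h)] \le e_{C_1} + e_{C_2}.
\]
This is exactly condition A with error $e_{C_1 + C_2} \le e_{C_1} + e_{C_2}$.

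\textbf{Main obstacle.} The algebra is routine. The only delicate point is bookkeeping about the hypotheses. We implicitly need $d$ to be a genuine metric (triangle inequality) in addition to translation invariance. We also need that a shared random $\phi_X$ is used by both branches, which is why the boundary condition is required. If the paper intends $d$ only as a "distance function", I would state the triangle inequality as an assumption. I would also read the typo $\psi_{X_{|C_2}}$ in the statement as $\psi_{Y_{|C_2}}$. Linearity of expectation needs no independence between $C_1$ and $C_2$, so correlated randomness poses no problem.
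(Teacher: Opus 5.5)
Your proof is correct and follows the paper's argument: the paper inserts the same intermediate point $\phi_Y(f(x,\theta)) + \tilde{g}(\phi_X(x),\phi_\Xi(\xi))$. It uses the homomorphism property to rewrite $\phi_Y(f+g)$ as $\phi_Y f + \phi_Y g$, splits the error into the $C_1$ and $C_2$ terms via the triangle inequality, translation invariance and commutativity, and inherits the de-obfuscation condition directly. Your remarks on needing a genuine triangle inequality, on linearity of expectation not requiring independence, and on the $\psi_{X_{|C_2}}$ typo make explicit what the paper leaves implicit, as is the fact that translation invariance is actually used for the distance on $\tilde{Y}$ rather than $Y$.
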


\section{AloePri}
\label{AloePri}
% 能够明确说明：加噪为什么能保护token、加噪如何确保对效果和安全性的影响
% 故事讲述逻辑：token保护的出发点：置换，置换为什么不影响推理 -> 置换嵌入在参数中，如何保护置换？ 混淆和加噪  -> 混淆和加噪如何不影响模型可用性? 混淆可逆，加噪对效果影响小

In this section, we first present an overview of \texttt{AloePri}, then describe the covariant obfuscation design, as well as the analysis on accuracy and information leakage in detail.

\begin{figure*}[htbp]
    \centering
    \includegraphics[width=0.9\textwidth]{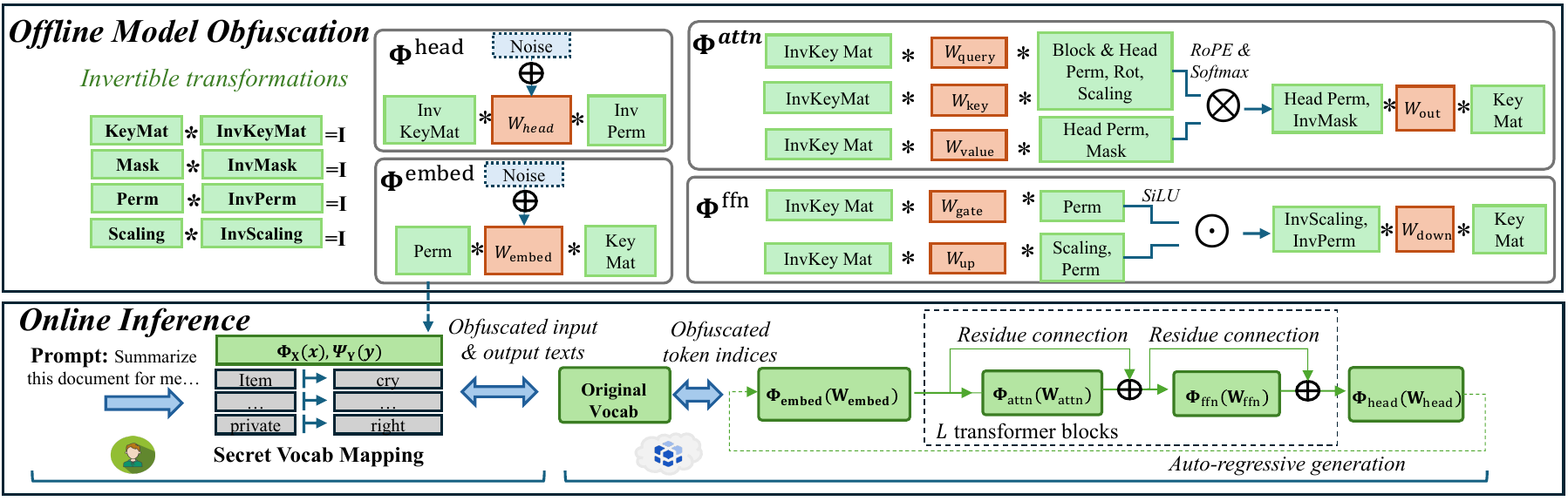}
    \caption{Overview of \texttt{AloePri}. In the offline model obfuscation process, token-level permutation and linear transformations are employed to construct the obfuscations $\phi^{\text{embed}}, \phi^{\text{head}}, \phi^{\text{attn}},$ and $\phi^{\text{ffn}}$. In the online inference process, a secret vocabulary mapping associated with the permutation is used to construct the data obfuscation $\phi_X$ and de-obfuscation $\psi_Y$.}
    \label{fig:AloePri}
\end{figure*}

\subsection{Overview}
As shown in Figure~\ref{fig:AloePri}, based on the covariant obfuscation mechanism, \texttt{AloePri} includes both data obfuscation $\phi_X$ and model obfuscation $\phi_\Theta$ mechanisms for an LLM inference function $f: \mathbb{Z}^{l}_{n} \times \Theta \rightarrow  \mathbb{Z}_{n}$. In an offline model obfuscation process, the client generates a secret permutation $\tau$, which is used in $\phi_X=\tau(x), \phi_Y(y)=\tau(y)$ to permute input and output token indices $x, y$. The permutation $\tau$ is also used to shuffle the embedding and model head at the token level. To prevent the attacker from recovering $\tau$ by comparing plaintext and obfuscated weights of the embedding and model head, a series of sub-obfuscations, including $\phi^{\text{embed}}$, $\phi^{\text{head}}$, $\phi^{\text{attn}}$, and $\phi^{\text{ffn}}$, are applied to all model weights. These sub-obfuscations also protect the internal states generated during model inference. Based on the composition theorems in Section \ref{sec:composition_theorem}, we combine all the sub-obfuscations into $\phi_{\Theta}$, which together with $\phi_X, \phi_Y$ to form the covariant obfuscation of $f$ for the entire model inference process. During the online inference process, the client locally obfuscates its text data at the token level based on $\tau$ before sending them to the server. According to the definition of covariant obfuscation, \texttt{AloePri} can preserve model accuracy when performing inference on obfuscated data using obfuscated model weights.

\textbf{Remark.} \texttt{AloePri} leverages lightweight token-level permutations to protect tokens, and uses invertible transformations and noise injection to conceal the permutation itself. 
Following previous studies~\cite{du2002practical, nasr2021adversary} on practical privacy-enhancement technologies, \texttt{AloePri} aims to provide adjustable security for constrained attackers in real-world scenarios, rather than offering ideal-world security guarantees against worst-case attackers.
% These obfuscation techniques operating over an infinite domain, while inevitably introducing information leakage, have been widely studied and adopted in prior work due to their practicality \cite{}. By design, \texttt{AloePri} aims to provide adjustable security for industrial applications, rather than offering ideal-world security guarantees against worst-case attackers. 
\linyu{To estimate the security of \texttt{AloePri}, we extend the mDP \cite{10.1007/978-3-642-39077-7_5} and RDP \cite{mironov2017renyi} to derive a novel Rényi-metric Differential Privacy, which is more suitable for analyzing high-dimensional Gaussian noise, in Section \ref{sec:dp_security}. We also empirically validate the privacy guarantee against constrained attackers via a series of experiments in Section \ref{sec:exp}.
}

\subsection{Offline Model Obfuscation}
\label{model_transform}
In offline model obfuscation, token-level permutation is applied to weights of the embedding layer and model head, which enables the client to obfuscate texts at the token level and supports auto-regressive text generation tasks. Meanwhile, by introducing \textit{two-side} (left-multiplication and right-multiplication) transformations for weight matrices, the token-level permutation is kept secret from the server. We first present an algorithm to generate (inverse) key matrix used to construct obfuscations. Then we present concrete construction of $\phi^{\text{embed}}$, $\phi^{\text{head}}$, $\phi^{\text{attn}}$, $\phi^{\text{ffn}}$ for popular LLM structures.

\subsubsection{Key Matrix Generation}
During LLM inference, input data is processed layer-by-layer between adjacent layers with residual connections. \texttt{AloePri} utilizes a set of key matrices and their inverses to perform two-side transformations on the weights of adjacent layers. Therefore, without knowledge of the key matrices (or their inverses), the attacker cannot recover the secret token-level permutation imposed on the embedding layer and model head. Meanwhile, each key matrix can be canceled out by any of its inverses, ensuring the correctness of forward computation.

Algorithm \ref{alg:key_mat} shows the generation of key matrices and their inverses. The algorithm takes the hidden size $d$ of the model, expansion size $h$, and matrix coefficient $\lambda$ as inputs to initialize a set of base matrices. Then, \texttt{KeyMatGen} and \texttt{InvKeyMatGen} leverage the base matrices to generate random key matrices $\{\hat{P}\}$ and inverse key matrices $\{\hat{Q}\}$. With the expanded dimension controlled by $h$, we can generate an infinite number of matrices $\hat{P}$ and $\hat{Q}$ that satisfy $\hat{P} \cdot \hat{Q} = I$ using $\texttt{KeyMatGen}$ and $\texttt{InvKeyMatGen}$. Key and inverse key matrices are used to obfuscate model weights, including the embedding, attention, FFN, and model head layers. Since most LLMs store model weights in half-precision, an extra parameter $\lambda$ is employed to regulate the matrix norm, avoiding significantly altering the magnitude of obfuscated weights.

\begin{algorithm}[ht]
\small
\caption{Key Matrix Generation}
\label{alg:key_mat}
\begin{algorithmic}[1]

\Statex \textbf{Parameters:} hidden size $d$, expansion size $h$, coefficient $\lambda$.
\vspace{0.3em}\hrule\vspace{0.3em}

\Function{Init}{$d,h,\lambda$}
\State Uniformly sample $U$ from orthogonal group $O_d$.
\State Sample $V \sim \mathcal{N}(0,\tfrac{1}{d})^{d\times d}$, set $B = U + \lambda V$, compute $B^{-1}$.
\State Sample $E_1 \sim \mathcal{N}(0,\tfrac{1}{d})^{d\times h/2}$, $E_2 \sim \mathcal{N}(0,\tfrac{1}{d})^{h/2\times h}$, set $E = E_1E_2$.
\State Sample $F_1 \sim \mathcal{N}(0,\tfrac{1}{d})^{h\times h/2}$, $F_2 \sim \mathcal{N}(0,\tfrac{1}{d})^{h/2\times d}$, set $F = F_1F_2$.
\State Sample a orthogonal matrix $Z \in O_{d+2h}$ uniformly.
\State \Return $(B,B^{-1},E,F,Z)$.
\EndFunction

\vspace{0.15em}\hrule\vspace{0.15em}

\Function{KeyMatGen}{$B,E,F,Z$}
\State Construct $C\in\mathbb{R}^{d\times h}$ with columns sampled from $\mathrm{null}(F^T)$.
\State \Return $\hat{P} = [\,B\; C\; E\,] Z$.
\EndFunction

\vspace{0.15em}\hrule\vspace{0.15em}

\Function{InvKeyMatGen}{$B^{-1},E,F,Z$}
\State Construct $D\in\mathbb{R}^{h\times d}$ with rows sampled from $\mathrm{null}(E)$.
\State \Return $\hat{Q} = Z^T [\,B^{-1}\; F\; D\,]^T$.
\EndFunction

\end{algorithmic}
\end{algorithm}

\subsubsection{Embedding and Model Head Obfuscation}
\label{sec:token_perturb}
Since the embedding layer and model head are directly associated with the input and output of the model, we also need to take into account the design of the data obfuscation $\phi_X$ when constructing $\phi^{\text{embed}}$ and $\phi^{\text{head}}$. \texttt{AloePri} encompasses three techniques to construct $\phi^{\text{embed}}$ and $\phi^{\text{head}}$: noise addition, permutation, and (inverse) key matrix transformations.

\textbf{Noise Addition}. The client samples noise matrices $\mathcal{E}_{\text{embed}} \sim \mathcal{N}(0, \sigma^2_{e}I_{n} \otimes I_{d})$ and $\mathcal{E}_{\text{head}} \sim \mathcal{N}(0, \sigma^2_{h}I_{d} \otimes I_{n})$, where $ \sigma_e, \sigma_h$ are the standard variation of $W_{e}, W_{h}$. The noisy weights of embedding and model head can be expressed as $W^{\star}_{\text{embed}} = W_{e} + \alpha_{\text{e}} \cdot \mathcal{E}_{\text{embed}}$ and $W^{\star}_{\text{head}} =W_{h} + \alpha_{\text{h}} \cdot \mathcal{E}_{\text{head}}$, where $\alpha_{\text{e}}$ and $\alpha_{\text{h}}$ are noise parameters.

\textbf{Permutation and Key Matrix Transformation}. The client further employs token-level permutation to construct $\phi^{\text{embed}}, \phi^{\text{head}}$. This permutation keeps the support of auto-regressive generation. Meanwhile, the permutation also correlates to the construction of $\phi_X$ and $\phi_Y$ so that the attacker cannot recover private input and output data without knowing the permutation. Specifically, the client samples a permutation $\tau \sim S_{n}$. Let $\Pi$ be the permutation matrix corresponding to $\tau$. This permutation matrix is applied to $W^{\star}_{\text{embed}}, W^{\star}_{\text{head}}$. Simultaneously, a key matrix $\hat{P}_{\text{embed}}$ and a inverse key matrix $\hat{Q}_{\text{head}}$ are generated with Algorithm \ref{alg:key_mat} and used to obfuscate $W^{\star}_{\text{embed}}, W^{\star}_{\text{head}}$. Finally, the obfuscation of embedding and model head are formalized as: $\widetilde{W}_{\text{embed}} = \Pi W^{\star}_{\text{embed}} \hat{P}_{\text{embed}}$, $\widetilde{W}_{\text{head}} = \hat{Q}_{\text{head}} W^{\star}_{\text{head}} \Pi^T$.
% \begin{equation}
% \begin{cases}
%     \phi^{\text{embed}}(W_{e}) =\widetilde{W}_{\text{embed}} = \Pi W^{\star}_{\text{embed}} \hat{P}_{\text{embed}}, \\
%     \phi^{\text{head}}(W_{h}) = \widetilde{W}_{\text{head}} = \hat{Q}_{\text{head}} W^{\star}_{\text{head}} \Pi^T.
% \end{cases}
% \end{equation}
Besides the obfuscation of embedding and model head, the client uses the permutation $\tau$ to generate a secret token mapping $\mathcal{Z} = \{\mathcal{V}[i]: \mathcal{V}[\tau[i]]\}$, which is used to obfuscate data during online inference.

% We aim to ensure that key matrices do not compromise the computational correctness of the model. Given the presence of residual connections in the transformer architecture, it is necessary to consider the consistency of the obfuscation forms when designing weight obfuscation for attention and FFN layers. Specifically, the output hidden states of the attention and FFN layers are consistently obfuscated by key matrices $\{\hat{P}\}$. Meanwhile, the weights of attention, FFN layers and model head are obfuscated with $\{\hat{Q}\}$. Consequently, the hidden states can be correctly evaluated in obfuscated forms.

\subsubsection{Attention Obfuscation}
As described in Eq. \ref{eq:gqa}, an attention layer processes data using each attention head independently, and then aggregates the outputs across all heads. Therefore, we obfuscate the attention weights with the following two types of transformation.

\textbf{Intra-head Transformation.} We use Algorithm \ref{alg:gqa} to obfuscate the weights of a group of attention heads $(W^{(i)}_{\text{q}}, W^{\eta(i)}_{\text{k}}, W^{\eta(i)}_{\text{v}}, W^{(i)}_{\text{o}})$. Key matrices $\hat{Q}_{\text{q}}, \hat{Q}_{\text{k}}, \hat{Q}_{\text{v}}, \hat{P}_{\text{o}}$ (sampled via Algorithm \ref{alg:key_mat}) are applied to transform the weights. Random invertible matrices $\hat{U}_{vo}$ are used for value/output weights to preserve computation correctness. Considering RoPE, 2-dimensional rotary matrices $\hat{R}_{qk}$ and scaling matrices $\hat{H}_{qk}$ are introduced to transform query/key weights. Besides, we find that simultaneously shuffling the RoPE’s 2×2 blocks of query/key weights within a limited window exerts minimal impact on model accuracy, particularly for blocks with larger indices. Therefore, we perform block-wise permutation within a dynamic window to boost the obfuscation level.

\begin{algorithm}[ht]
\small
  \caption{Intra-head attention obfuscation}
  \label{alg:gqa}

  \begin{algorithmic}[1]
    % 初始化部分 - 无行号
    \Statex \textbf{Input:} Attention weights $(W^{(i)}_{\text{q}}, W^{\eta(i)}_{\text{k}}, W^{\eta(i)}_{\text{v}}, W^{(i)}_{\text{o}})$, maximum window size $\beta$, window sampling parameter $\gamma$, RoPE parameter $\zeta$, and block number $m_{\text{blocks}}$.
    \Statex \textbf{Output:} Obfuscated weights $(\widetilde{W}^{(i)}_{\text{q}}, \widetilde{W}^{\eta(i)}_{\text{k}}, \widetilde{W}^{\eta(i)}_{\text{v}}, \widetilde{W}^{(i)}_{\text{o}})$. 

    \vspace{0.15em}
    \hrule
    \vspace{0.15em}
    
    \State Uniformly sample $\rho_i$ from $(0,2\pi]$ and generate rotary matrix $
    \hat{R}_{\text{qk}} = \text{Diag}(\{R_i\}_{i \leq \frac{d_{head}}{2}})$, where $
    R_i = \begin{pmatrix}
        \cos\rho_i & -\sin\rho_i \\
        \sin\rho_i & \cos\rho_i
    \end{pmatrix}
    $.

    \State Sample $s_i \in \mathbb{R}$ and define
    $
    \hat{H}_{\text{qk}} = \mathrm{Diag}(\{s_1 I_2, \dots, s_{d_{\text{head}}/2} I_2\}).
    $

    \State Sample $\hat{Z}_{\text{block}} \leftarrow BlockPerm(\beta, \gamma, \zeta, m_{\text{blocks}})$.

    \State Sample $\hat{U}_{\text{vo}} \sim \mathcal{N}(0, \frac{1}{d_{\text{head}}} I_{d_{\text{head}}} \otimes I_{d_{\text{head}}})$.

    \State Sample $\hat{Q}_{\text{q}}, \hat{Q}_{\text{k}}, \hat{Q}_{\text{v}}, \hat{P}_{\text{o}}$ by Algorithm \ref{alg:key_mat}.

    \State Set
    \(
        \widetilde{W}^{\eta(i)}_{\text{k}} = \hat{Q}_{\text{k}} {W}^{\eta(i)}_{\text{k}} \hat{R}_{\text{qk}} \hat{H}^{-1}_{\text{qk}} \hat{Z}^T_{\text{block}} \), \(
        \widetilde{W}^{\eta(i)}_{\text{v}} = \hat{Q}_{\text{v}} W^{\eta(i)}_{\text{v}} \hat{U}_{\text{vo}}.
    \)
    \State Set  $\widetilde{W}^{(i)}_{\text{q}} = \hat{Q}_{\text{q}} W^{(i)}_{\text{q}} \hat{R}_{\text{qk}} \hat{H}_{\text{qk}} \hat{Z}_{\text{block}}$,  $\widetilde{W}^{(i)}_{\text{o}} = \hat{U}^{-1}_{\text{vo}} W^{(i)}_{\text{o}} \hat{P}_{\text{o}}$. \\

    \Return $(\widetilde{W}^{(i)}_{\text{q}}, \widetilde{W}^{\eta(i)}_{\text{k}}, \widetilde{W}^{\eta(i)}_{\text{v}}, \widetilde{W}^{(i)}_{\text{o}})$

    \vspace{0.3em}\hrule\vspace{0.3em}
    
    \Function{BlockPerm}{$\beta, \gamma, \zeta, m_{\text{blocks}}$}
    \State Initialize a list $\mathcal{U} = \{\}$ and set block index $t = 1$.
    \State Compute $\{\zeta_i = \zeta^{-2(i-1)/m_{\text{blocks}}} | 1 \leq i \leq m_{\text{blocks}} \}$.
    
    \While{$t < m_{\text{blocks}}$}
        \State Set $c = \min(\beta, m_{\text{blocks}} - t)$.
        \State Compute: $u = softmax(\{\zeta_{t + i} - \zeta_t | 1 \leq i \leq c\})$.
        \State Sample window size $w$ from $[1, c]$ with probabilities $u$.
        \State Uniformly sample a permutation matrix $Z \in S_{w}$ and append it to $\mathcal{U}$.
    \EndWhile
    
    \State Concatenate the matrices into a block diagonal form: $\hat{Z}_{\text{block}} = \text{BlockDiag}(\mathcal{U})$.
    
    \State Return $\hat{Z}_{\text{block}}$.

    \EndFunction
    
  \end{algorithmic}
\end{algorithm}

\textbf{Inter-head Permutation.} The attention weights are further obfuscated via attention head permutation, where we sample two random permutations: $\tau_{\text{kv}} \sim S_{m_{\text{kv}}}$ and $\tau_{\text{group}} \sim S_{m/m_{\text{kv}}}$.  $\tau_{\text{kv}}$ shuffles key and value weights at the individual attention head level, query and output weights at the grouped-head level. Meanwhile, $\tau_{\text{group}}$ shuffle query and output weights within each group, which disrupts inter-head correlations while preserving the model’s aggregation capability.
The above obfuscation method can be directly applied to MHA, MQA, and GQA. As for MLA, low-rank matrices are employed for query and key weights, with the integration of decoupled RoPE. Therefore, we obfuscate the low-rank weights in MLA using another set of invertible transformations.

\subsubsection{FFN Obfuscation}
To ensure the correctness of non-linear operations in FFN, we mainly used scaling and permutation transformation to obfuscate $\omega_{\text{ffn}} = \{W_{\text{gate}}, W_{\text{up}}, W_{\text{down}}\}$. Specifically, the obfuscated FFN weights are computed by: \(
    \widetilde{W}_{\text{gate}} = \hat{Q}_{\text{gate}} W_{\text{gate}} \hat{Z}_{\text{ffn}},  
    \widetilde{W}_{\text{up}} = \hat{Q}_{\text{up}} W_{\text{up}} \hat{H}_{\text{ffn}} \hat{Z}_{\text{ffn}}, 
    \widetilde{W}_{\text{down}} = \hat{Z}^{-1}_{\text{ffn}} \hat{H}^{-1}_{\text{ffn}} W_{\text{down}} \hat{P}_{\text{down}}, 
\)
where $\hat{Z}_{\text{ffn}}$ is a permutation matrix uniformly sampled from $S_{d_{\text{ffn}}}$, and $\hat{H}_{\text{ffn}}$ is a randomly sampled scaling matrix.

For MoE models, an additional router weight \( W_{\text{router}} \) is incorporated to select experts. To obfuscate \( W_{\text{router}} \), we first normalize the weight vector corresponding to every expert to get $W_{\text{router}}'$. Meanwhile, we sample a \( m_{\text{exp}} \times m_{\text{exp}} \)-dimensional permutation matrix \( \hat{Z}_{\text{router}} \) to generate \( \widetilde{W}_{\text{router}} = \hat{Q}_{\text{router}} W_{\text{router}}' \hat{Z}_{\text{router}} \). The order of experts is shuffled according to \( \hat{Z}_{\text{router}} \), ensuring that the experts can be selected correctly.

\subsubsection{Layer Normalization Transformation}
% As RMSNorm leverages a one-dimensional weight $\bm{w}_{\text{norm}}$ to normalize hidden layer outputs, we also need to adjust this weight to ensure the correctness of the calculation. 

Let \( W_{\text{norm}} = \text{Diag}(w_{\text{norm}}) \) denote the diagonal matrix corresponding to the RMSNorm weights. Assuming that the input data $x$ of any normalization layer follows a Gaussian distribution, we use $ \kappa = \mathbb{E}[\frac{||x\hat{P}||}{||x||}] $ as the coefficient for the obfuscated normalization layer to adjust for the bias induced by the $\hat{P}$ transformation. We then fuse an RMSNorm layer with weights \( \widetilde{w}_{\text{norm}} = \bm{1} \cdot \kappa \) and a linear layer with weights \( W_{\text{norm}} \) to replace the plaintext RMSNorm layer. The weights of the linear layer \( W_{\text{norm}} \) can be merged into the layer adjacent to the RMSNorm layer before applying weight obfuscation.

\subsection{Online Inference}
\label{sec:aloepri_online}
% Recall that in Section \ref{sec:token_perturb}, a secret token mapping $\mathcal{Z}$ is stored on the client side. 
In the online inference phase, the client leverages the secret token mapping $\mathcal{Z}$ to obfuscate input prompts. Specifically, the client first tokenizes a prompt locally into $l$ tokens $\mathcal{T} = \{ tok_1, \cdots, tok_l \}$, then encodes each token $tok_i$ into its obfuscated counterpart $\widetilde{tok}_i = \mathcal{Z}[tok_i]$. Once all obfuscated tokens are generated, the client detokenizes them to construct an obfuscated prompt $\widetilde{\mathcal{T}}$ and sends it to the server. The server tokenizes $\widetilde{\mathcal{T}}$ to produce obfuscated token indices $\{ \widetilde{x}_i \}_{i \leq l}$ as the input of the obfuscated model. The model then outputs an obfuscated response to the client. Finally, the client utilizes $\mathcal{Z}$ to decode the obfuscated response back to plaintext.

\linyu{\textbf{Remark.} Leveraging covariant obfuscation, which is capable of performing obfuscation in the data space, \texttt{AloePri} is compatible with data obfuscation techniques operating on the token space for LLM inference. For example, similar to RANTEXT \cite{tong2025inferdpt}, SANTEXT \cite{yue2021differential}, and CUSTEXT \cite{chen2023customized}, \texttt{AloePri} can perform token perturbation based on the similarity of token embeddings before applying secret token mapping during online inference.}

\subsection{\wqruan{Accuracy} Analysis}
We first analyze the obfuscation-induced bias of each LLM component individually, then leverage the composition theorems for covariant obfuscation to show that the inference accuracy of LLMs can be preserved. We adopt the symbol $\approx_{e_{C}}$ to denote the approximation relation with the obfuscation error $e_{C}$ as a constant upper bound.
% Note that we still use $\widetilde{\square}$ to note the obfuscated data and weights transformed by obfuscations $\phi_X, \phi_\Theta$.

\textbf{Embedding and Model Head.} Let $x$ denote an input token index, and $f_{\text{embed}}, \widetilde{f}_{\text{embed}}$ the embedding functions in the plaintext and obfuscated spaces, respectively. The covariant obfuscation of the embedding layer is formalized as: 
\(
    \phi^{\text{embed}}_X(x) = \tau(x), \phi^{\text{embed}}_\Theta(W_{e}) = \Pi W^{\star}_{\text{embed}}\hat{P}_{\text{embed}}, \phi^{\text{embed}}_Y(y) = y \hat{P}_{\text{embed}}, \psi^{\text{embed}}_Y(\widetilde{y}) = \widetilde{y} \hat{Q}.
\)
Accordingly, $\widetilde{f}_{\text{embed}}(\phi^{\text{embed}}_X(x), \phi^{\text{embed}}_\Theta(W_{e})) \approx_{e^{\text{embed}}_{C}} \phi^{\text{embed}}_Y\circ f_{\text{embed}}(x, W_{e})$, where $e^{\text{embed}}_{C}$ is the obfuscation error of this covariant obfuscation. Similarly, we derive $\widetilde{f}_{\text{head}}(\phi^{\text{head}}_X(x), \phi^{\text{head}}_\Theta(W_{h})) \approx_{e^{\text{head}}_{C}} \phi^{\text{head}}_Y\circ f_{\text{head}}(x, W_{h})$, with $e^{\text{head}}_{C}$ as the corresponding obfuscation error.

% \begin{align*}
%     f_{\text{embed}}(\widetilde{x_i}, \widetilde{W}_{\text{embed}}) &=\widetilde{W}_{\text{embed}}[\widetilde{x_i}] \\ &\approx W_{e}[x_i] \hat{P}_{\text{embed}} \\ 
%     &= f(x_i, W_{e}) \hat{P}_{\text{embed}}
% \end{align*}

\textbf{Attention.} We denote $\phi^{\text{attn}}_X(x) = x \hat{P}$ as the obfuscated input to the attention layer $\widetilde{\omega}_{\text{attn}}$. The attention score for the $i$-th head is computed as:
\(
\mathcal{G}(\widetilde{x} \widetilde{W}^{(i)}_{\text{q}}) \mathcal{G}(\widetilde{x}  \widetilde{W}^{\eta(i)}_{\text{k}})^T 
\approx_{e^{\text{attn}}_C}\mathcal{G}(x W^{(i')}_{\text{q}}) \mathcal{G}(x W^{\eta(i')}_{\text{k}} )^T,
\)
where $i'$ denotes the permuted index corresponding to the $i$-th head, and $e^{\text{attn}}_C$ is the obfuscation error induced by block permutation (parameterized by $\beta, \gamma$). Meanwhile, the attention value satisfies:
\(
\widetilde{x} \widetilde{W}^{\eta(i)}_{\text{v}} \widetilde{W}^{(i)}_{\text{o}} = x W^{\eta(i')}_{\text{v}} W^{(i')}_{\text{o}} \hat{P}_{\text{o}}.
\)
Thus, by integrating attention scores and values, we construct the covariant obfuscation as follows: $\phi^{\text{attn}}_X(x) = x \hat{P}$, $\phi^{\text{attn}}_\Theta(\omega_{\text{attn}}) = \widetilde{\omega}_{\text{attn}}$, $\phi^{\text{attn}}_Y(y) = y \hat{P}_{\text{o}}$, $\psi^{\text{attn}}_Y(\widetilde{y}) = \widetilde{y} \hat{Q}$, with $\widetilde{f}_{\text{attn}}$ denoting the attention function over obfuscated spaces. This obfuscation satisfies:
\(
\widetilde{f}_{\text{attn}}(\phi^{\text{attn}}_X(x), \phi^{\text{attn}}_\Theta(\omega_{\text{attn}})) \approx_{e^{\text{attn}}_{C}} \phi^{\text{attn}}_Y \circ f_{\text{attn}}(x, \omega_{\text{attn}}),
\)
where $f_{\text{attn}}(x, \omega_{\text{attn}}) \hat{P}_{\text{o}} = \phi^{\text{attn}}_Y \circ f_{\text{attn}}(x, \omega_{\text{attn}})$.

\textbf{FFN.} Taking an $\widetilde{x} = x \hat{P}$ as input, the forward computation of the obfuscated FFN layer holds that:
\(
 \widetilde{f}_{\text{ffn}}(\widetilde{x}, \widetilde{\omega}_{\text{ffn}}) 
%  = \left(
% \text{SiLU}(\widetilde{x}\widetilde{W}_{gate}) \odot (\widetilde{x}\widetilde{W}_{up})
% \right) \cdot \widetilde{W}_{down} \\
% = \left(
% \text{SiLU}(x \hat{P} \hat{Q}_{\text{gate}}W_{\text{gate}} \hat{Z}_{\text{ffn}}) \odot (x \hat{P} \hat{Q}_{\text{up}}W_{\text{up}} \hat{H}_{\text{ffn}}\hat{Z}_{\text{ffn}}) 
% \right) \cdot \hat{Z}^{-1}_{\text{ffn}} \hat{H}^{-1}_{\text{ffn}} W_{\text{down}} \hat{P}_{\text{down}} 
= \left(
\text{SiLU}(x W_{\text{gate}}) \odot (x W_{\text{up}}) 
\right) W_{\text{down}} \hat{P}_{\text{down}} 
=  f_{\text{ffn}}(x, \omega_{\text{ffn}}) \hat{P}_{\text{down}} 
= \phi^{\text{ffn}}_Y \circ f_{\text{ffn}}(x, \omega_{\text{ffn}})
\), 
where $\phi^{\text{ffn}}_Y(y) = y \hat{P}_{\text{down}}$ and $\widetilde{f}_{\text{ffn}}$ is the FFN function defined over the obfuscated spaces.

\textbf{Layer Normalization.} Let $W_{\text{norm}} = \text{Diag}(w_{\text{norm}})$, we construct the covariant obfuscation as: $\phi^{\text{norm}}_X(x) = x \hat{P}$, $\phi^{\text{norm}}_\Theta(w_{\text{norm}}) = \widetilde{w}_{\text{norm}}$, $\phi^{\text{norm}}_Y(y) = y W^{-1}_{\text{norm}} \hat{P}$, $\psi^{\text{norm}}_Y(\widetilde{y}) = \widetilde{y} \hat{Q} W_{\text{norm}}$, and $\widetilde{f}_\text{RMSNorm}$ denotes the RMSNorm function defined over the obfuscated spaces. We have 
\(
     \widetilde{f}_\text{RMSNorm}(\widetilde{x}, \widetilde{w}_{\text{norm}}) 
    =  \frac{\bm{x} \hat{P} }{\sqrt{ \frac{1}{d+2h}\sum^{d + 2h}_{i} \widetilde{x}^2_i}} \kappa I   
    \approx_{e^{\text{norm}}_{C}}   \frac{\bm{x} \hat{P} }{\sqrt{ \frac{1}{d}\sum^d_{i} x^2_i}}    
    % \approx & \frac{\bm{x} }{\sqrt{ \frac{1}{d}\sum^d_{i} x^2_i}} (\text{Diag}(w_{\text{norm}})W) \\
    % & =  f_\text{RMSNorm}(x, w_{\text{norm}}) W^{-1}_{\text{norm}}\hat{P} \\
    =\phi^{\text{norm}}_Y \circ f_\text{RMSNorm}(x, w_{\text{norm}})
\),
where $e^{\text{norm}}_{C}$ is the obfuscation error of the covariant obfuscation. 
% In the first approximation, we use $\kappa$ to convert the root mean square of the obfuscated input to that of the plaintext input. While this operation introduces a small error, we find in subsequent experiments that the error has a negligible impact on the model accuracy. Note that $W_{\text{norm}}$ is merged into adjacent layers before weight obfuscation.

% \textbf{Model Head.} The covariant obfuscation of model head is formalized as: $\phi^{\text{head}}_X(x) = x \hat{P}$, $\phi^{\text{head}}_\Theta(W_{h}) = \hat{Q}_{\text{head}} W^{\star}_{\text{head}} \Pi^T$, $\phi^{\text{head}}_Y(y) = y \Pi^T$, $\psi^{\text{head}}_Y(\widetilde{y}) = \widetilde{y} \Pi$, and $\widetilde{f}_\text{head}$ denotes the model head function defined over the obfuscated spaces. Therefore, it holds that \(\widetilde{f}_{\text{head}}(\phi^{\text{head}}_X(x), \phi^{\text{head}}_\Theta(W_{h})) \approx_{e^{\text{head}}_{C}} \phi^{\text{head}}_Y\circ f_{\text{head}}(x, W_{h})\), where $e^{\text{head}}_{C}$ is the obfuscation error for the covariant obfuscation. 

\textbf{Putting Together.} The above components are sequentially connected with residue connections in typical LLM structures, and the error of each component is relatively small. Since the components satisfy the bound conditions of Theorem \ref{the:sequantial} and Theorem \ref{the:summation}, we can integrate these components to construct the covariant obfuscation as: $\phi_X(x) = \tau(x)$, $\phi_\Theta(\theta) = (\widetilde{W}_{\text{embed}}, \widetilde{W}_{\text{head}}, \{\widetilde{\omega}_{\text{attn}}, \widetilde{\omega}_{\text{ffn}}, \widetilde{w}_{\text{norm}}\})$, $\phi_Y(y) = \tau(y)$, $\psi_Y(\widetilde{y}) = \tau^{-1}(\widetilde{y})$, and $\widetilde{f}$ is the LLM inference function for obfuscated spaces. It holds that
$\widetilde{f}_\Theta(\phi_X(x), \phi_\Theta(\theta)) \approx_{e^{\texttt{AloePri}}_{C}} \phi_Y\circ f_\Theta(x, \theta)$, where $e^{\texttt{AloePri}}_{C}$ is the obfuscation error of \texttt{AloePri}. 

As $e^{\texttt{AloePri}}_{C}$ is related to the structure of LLM, we take a typical dense model structure like Qwen2 \cite{qwen2.5} as an example. The obfuscation error of $i$-th decoder layer can be derived as 
\[
e^{\text{decoder}}_{C_i} \leq (M^{\text{norm}}_i(M^{\text{attn}}_i e^{\text{norm}}_{C_i} + e^{\text{attn}}_{C_i}) + e^{\text{norm}}_{C_i})M^{\text{norm}}_i M^{\text{FFN}}_i,
\]
where $M^{\text{norm}}, M^{\text{attn}}_i, M^{\text{FFN}}_i$ are the Lipschitz constants of $i$-th RMSNorm, attention, FFN layers with consideration of residue connection. Let $M^{\text{head}}$ and $M^{\text{decoder}}_i$ denote the Lipschitz constants of model head and $i$-th decoder layer, respectively.
It holds that $e^{\texttt{AloePri}}_{C} \leq \mathcal{M}_0 e^{\text{embed}}_C + \sum _{i=1}^{L} \mathcal{M}_i e^{\text{decoder}}_{C_i} + e^{\text{head}}_{C}$, where $\mathcal{M}_i=M^{\text{head}}\prod _{j=i+1}^{L} M^{\text{decoder}}_i$.

% Moreover, 
% let $e_0$ be the obfuscation
% error of embedding layer, 
% $e_i, M_i$ be the  obfuscation
% error and  Lipschitz constant of $i$-th transformer block  respectively for $i=1, \cdots, L$, 
% $e_{L+1}, M_{L+1}$ be the obfuscation
% error and Lipschitz constant of Model Head respectively.
% Then we have the obfuscation
% error of AloePri can bounded by
% \begin{align*}
% e_{\mathsf{AloePri}}\leq  \sum _{i=0}^{L+1} \mathcal{M}_ie_{i}
% \end{align*}
% where $\mathcal{M}_i=\prod _{j=i+1}^{L+1} M_i$ \qed

% \input{chap/5.analysis}

\section{Security Analysis}
\label{sec:dp_security}
\linyu{In this section, we introduce Rényi-metric Differential Privacy (RmDP). Building on RmDP, we first analyze the privacy budget of the secret permutation $\tau$ associated with data obfuscation. We then compare the privacy guarantees of sensitive tokens provided by \texttt{AloePri} and other data-only obfuscation methods.}

\subsection{R\'enyi-metric Differential Privacy}
% \qizhi{
\begin{Qizhi}

\linyu{DP is a natural framework for analyzing privacy protection methods, but it is not directly applicable to our setting. On the one hand, the privacy goal in LLM inference is to protect clients’ private texts, which differs fundamentally from the notion of adjacent datasets in DP. On the other hand, the high-dimensional Gaussian noise commonly used in this setting is difficult to characterize under standard DP. To address these two issues in a unified way, we introduce RmDP in Definition \ref{def:rmdp}.}

% In general, the concept of differential privacy is used to analyze the security of perturbation mechanisms. However, traditional differential privacy is inapplicable to our problem, mainly for two reasons:

% \begin{enumerate}
%     \item Traditional differential privacy only focuses on \emph{adjacent datasets}, but we care about the privacy of \emph{obfuscated keys}.
%     \item Traditional differential privacy excels at handling 1-dimensional Laplace noise, but is not well-suited for high-dimensional Gaussian noise (to handle Gaussian noise, one must extend $\epsilon$-DP to $\epsilon$-$\delta$ DP, whose composition theorems become far more complex).
% \end{enumerate}

% To clarify these two obstacles, we observe the following:

% \begin{itemize}
%     \item A metric can be defined on the permutation group $\mathbb{Z}_n^l$ where the obfuscated keys reside, forming a metric space. This allows us to use \emph{metric differential privacy} \cite{xie2025decademetricdifferentialprivacy}  to characterize the protection strength of obfuscated keys on $S_n$. However, it remains difficult to handle Gaussian noise.
%     \item \emph{R\'enyi differential privacy} \cite{Mironov_2017} can handle high-dimensional Gaussian noise, but its privacy protection metric is limited to \emph{adjacent datasets}.
% \end{itemize}

\begin{definition}[R\'enyi-metric Differential Privacy]
\label{def:rmdp}
Let $(\mathcal{X}, d)$ be a metric space. A randomized mechanism $\mathcal{M}\colon \mathcal{X} \to \mathcal{Y}$ satisfies \emph{$(\alpha,\epsilon,d)$-RmDP} if, for any $x,x' \in \mathcal{X}$,
\[
D_\alpha\!\left(p_{\mathcal{M}(x)} \,\|\, p_{\mathcal{M}(x')}\right)
\le \epsilon \, d(x, x'),
\]
where $\alpha>1$, $\epsilon \ge 0$, and $p_{\mathcal{M}(x)}$ denotes the distribution of the mechanism output on input $x$.
\end{definition}

For LLM inference, RmDP offers two key advantages:
\begin{itemize}
    \item RmDP enables quantifying the privacy protection of clients' sensitive tokens by defining a metric on $\mathbb{Z}_n^l$.
    \item By leveraging the $\alpha$-R\'enyi divergence, RmDP can handle high-dimensional Gaussian noise without resorting to a relaxed variant.
\end{itemize}
Table \ref{tab:compare_dp} compares DP, R\'enyi DP, metric DP, and R\'enyi-metric DP.

\begin{table}[h]
\centering
\caption{Comparison of DP, R\'enyi DP, metric DP, and R\'enyi-metric DP.}
\label{tab:compare_dp}
\begin{tabular}{lcccc}
\toprule
 & DP & R\'enyi DP & Metric DP & R\'enyi-metric DP \\
\midrule
Privacy notion & Adjacent datasets & Adjacent datasets & Metric space & Metric space \\
Gaussian mechanism analysis & Not natural & Natural & Not natural & Natural \\
\bottomrule
\end{tabular}
\end{table}

\subsection{Security Analysis of Private Tokens}
To characterize the privacy guarantee on token-sequence space $\mathbb{Z}_n^l$ with RmDP, we define a metric in Definition \ref{def:zn_perm_metric}.

\begin{definition}[Permutation Metric on $\mathbb{Z}^{l}_{n}$]
\label{def:zn_perm_metric}
For any two token sequences $x, x' \in \mathbb{Z}_n^l$, define $d(x,x')$ as the minimum number of transpositions needed to transform $x$ into $x'$. More precisely, $d(x,x')$ is the smallest integer $k$ such that there exists a sequence
\[
x = x_0 \rightarrow x_1 \rightarrow \cdots \rightarrow x_k = x'
\]
where, for each $i=0,\ldots,k-1$, there exists a transposition (a permutation that swaps exactly two elements) $g_i \in S_n$ satisfying
\(x_{i+1} = g_i x_i\).
\end{definition}

As stated in Section \ref{sec:aloepri_online}, \texttt{AloePri} is compatible with data obfuscation techniques based on token perturbation. We consider a standard exponential mechanism for token perturbation, denoted by $\mathcal{M}_1$, satisfying 
\[
p_{\mathcal{M}_1(x)}(y) \propto e^{-\epsilon _1 d(x,y)}.
\]
We then derive the privacy guarantee of \texttt{AloePri} in Theorem \ref{the:aloepri_rmdp} (full proof in Appendix \ref{appx:proof_rmdp}) when applying $\mathcal{M}_1$ in the online phase.

\begin{theorem} 
\label{the:aloepri_rmdp}
Let \(d(\cdot, \cdot) \) be the metric in Definition \ref{def:zn_perm_metric}. \texttt{AloePri} satisfies (\(\alpha\)-\(\epsilon\)-\(d\))-RmDP, where \(\alpha = 2\), and
\begin{equation*}
    \epsilon= \left \{ 
    \begin{array}{cc}\epsilon_1 - \frac{\epsilon _1 ^2 }{4(n-1)\epsilon _2},  &    \mathsf{ if }  \epsilon _1 \leq 2(n-1) \epsilon _2 \\
    (n-1) \epsilon _2,  & \mathsf{ otherwise}
    \end{array}
    \right .
\end{equation*}
Here
\[
\epsilon_2 = \pi^2 \cdot (\epsilon_e + \epsilon_h),
\epsilon_e = \frac{\alpha\left(\lambda_1^2(W_e) + \lambda_2^2(W_e)\right)}{4\sigma_e^2},
\epsilon_h = \frac{\alpha\left(\lambda_1^2(W_h) + \lambda_2^2(W_h)\right)}{4\sigma_h^2},
\]
where \(\lambda_1(W_e) \ge \lambda_2(W_e)\) are the largest and second-largest singular values of \(W_e\), and \(\lambda_1(W_h) \ge \lambda_2(W_h)\) are the largest and second-largest singular values of \(W_h\).
\medskip
\end{theorem}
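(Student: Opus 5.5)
The plan is to model what the honest-but-curious server observes as the output of a two-stage mechanism and to bound its order-$2$ Rényi divergence along the metric of Definition~\ref{def:zn_perm_metric}. Given a private sequence $x\in\mathbb{Z}_n^l$, the client first draws $y\sim\mathcal{M}_1(x)$. The server then sees the pair
\[
\mathcal{M}(x)=\bigl(\tau(y),\ \phi_\Theta(\theta)\bigr),
\]
where $\tau$ is uniform on $S_n$ and $\phi_\Theta(\theta)$ contains $\widetilde{W}_{\text{embed}}=\Pi W^\star_{\text{embed}}\hat P_{\text{embed}}$ and $\widetilde{W}_{\text{head}}=\hat Q_{\text{head}}W^\star_{\text{head}}\Pi^T$. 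All other obfuscated weights are independent of $\tau$ given these two, so by post-processing I would discard them. I would also treat the key matrices $\hat P,\hat Q$ as known to the adversary, which only strengthens the adversary, so that the leakage about $\tau$ comes solely from the Gaussian-noised, row- or column-permuted $W_e$ and $W_h$. Since $\mathcal{M}_1$ and the metric are invariant under relabeling by $S_n$, $\mathcal{M}(x)$ is a mixture over $\tau$. The goal is to bound $D_2(p_{\mathcal{M}(x)}\|p_{\mathcal{M}(x')})$ first for $d(x,x')=1$, and then to extend to general distances by chaining along a geodesic of transpositions.

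\textbf{Step 1 (leakage of $\tau$ through the weights).} For two permutations $\tau,\tau'=g\tau$ with $g$ a single transposition $(i\,j)$, the means of the noisy embedding differ by $(\Pi-\Pi')W_e$. This matrix is supported on two rows, so its squared Frobenius norm is at most $\lambda_1^2(W_e)+\lambda_2^2(W_e)$ up to the constant coming from $\|e_i-e_j\|^2=2$. The Gaussian Rényi formula $D_\alpha=\alpha\|\Delta\|_F^2/(2\sigma^2)$ then yields the per-transposition cost $\epsilon_e$, and the same computation on columns yields $\epsilon_h$. Composing the two independent releases gives a cost of $\epsilon_e+\epsilon_h$ per transposition of $\tau$. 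Next I would pass from a single transposition to general $\tau$. The factor $\pi^2$ should come from summing the exponential tail over how many transpositions are needed, in the form $\sum_k 1/k^2\le \pi^2/6$ or a comparable series. The result should be that the weights induce a Gibbs-like posterior on $\tau$ whose sensitivity per transposition is $\epsilon_2=\pi^2(\epsilon_e+\epsilon_h)$. In the RmDP sense, the weight release is then $(2,\epsilon_2,d_{S_n})$-private about $\tau$ relative to the Cayley metric on $S_n$.

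\textbf{Step 2 (combining with $\mathcal{M}_1$).} The observation $\tau(y)$ differs between inputs $x$ and $x'=gx$ in two ways, which can be traded against each other. The first is to attribute the change to $\mathcal{M}_1$, at cost $\epsilon_1$ per transposition. The second is to absorb it into $\tau$, replacing $\tau$ by $\tau g^{-1}$, which is paid for by the weight leakage of Step~1. A single token transposition can affect up to $n-1$ relabeling transpositions of $\tau$, and the Gaussian cost is quadratic in the shift. Splitting a fraction $t\in[0,1]$ of the move to the $\tau$ side therefore gives the bound $(1-t)\epsilon_1+t^2(n-1)\epsilon_2$. I would make this rigorous with the joint convexity of $e^{D_2}$ (an $\chi^2$-style inequality) applied to the mixture over $\tau$. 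Minimizing over $t$ gives $t^\ast=\epsilon_1/(2(n-1)\epsilon_2)$ when $\epsilon_1\le 2(n-1)\epsilon_2$, and hence $\epsilon_1-\epsilon_1^2/(4(n-1)\epsilon_2)$; otherwise the boundary choice $t=1$ gives $(n-1)\epsilon_2$. This matches the two cases in the statement, and chaining along geodesics extends the bound linearly in $d(x,x')$.

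\textbf{Main obstacle.} The hardest step is Step~2: justifying the fractional split for Rényi divergence of a mixture, since $D_2$ does not satisfy a clean triangle inequality. I would first try to interpolate with intermediate distributions that randomize $\tau$ partially, weighting the exponential mechanism's mass appropriately, and then bound $\mathbb{E}[(p/q)^2]$ directly as a product of the $\mathcal{M}_1$ likelihood ratio and the Gaussian moment-generating term. The pinning of the $\pi^2$ constant in Step~1 is secondary and may need adjustment.
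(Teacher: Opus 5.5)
Your overall architecture matches the paper's. The server's view is $(g\mathcal{M}_1(x),\mathcal{M}_2(g))$ with $g$ uniform, the key matrices and middle layers are removed by post-processing, $D_2$ adds over the two conditionally independent releases, and joint convexity of $\chi^2=e^{D_2}-1$ handles the mixture over $g$. The problem is that Step~2 is set up at the wrong granularity and does not go through.

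First, you cannot move ``a fraction $t$'' of a single transposition to the $\tau$ side. For any coupling $(g,g')$ of two uniform permutations, $g^{-1}g'$ is an actual permutation. Randomizing the coupling only gives convex combinations of $e^{\epsilon_1}-1$ and $e^{\epsilon_2}-1$ in the $\chi^2$ domain. So at $d(x,x')=1$ this machinery yields only $\min(\epsilon_1,\epsilon_2)$. That is strictly larger than your target $\epsilon_1-\epsilon_1^2/(4(n-1)\epsilon_2)$ whenever, e.g., $\epsilon_1<\epsilon_2$.

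Second, ``chaining along geodesics'' from $d=1$ to general $d$ is not available. R\'enyi divergence has no triangle inequality (you flag this yourself), so unit-step order-$2$ bounds do not sum to an order-$2$ bound linear in $d(x,x')$.

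The paper never chains. It fixes $r=d(x,x')$ and writes $x'=hx$ with a geodesic $1=h_0\to\cdots\to h_r=h$. It then uses the coupling $g'=g\,\Delta g_i$ with $\Delta g_i=h_{r-i}h^{-1}$. The exponential mechanism, whose bound $\epsilon_1 d(x,y)$ already holds at every distance, costs $\epsilon_1 d(x,\Delta g_i x')=\epsilon_1(r-i)$. The weights cost $\epsilon_2 d(g,g')^2=\epsilon_2 i^2$. The factor $n-1$ enters only when the quadratic-in-$r$ minimum of $\epsilon_1(r-i)+\epsilon_2 i^2$ is turned into a bound $\epsilon r$ via $r\le n-1$, the diameter of $S_n$ in the transposition metric. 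Your $t$ is really $i/r$, and ``one token swap affects $n-1$ relabelings'' plays no role. Even the paper reaches the exact closed form only by treating the integer $i$ as real (the ``$\approx$'' in its $\phi(r)$). With integer $i$ the argument gives $\min_{0\le i\le r}[\epsilon_1(r-i)+\epsilon_2 i^2]$.

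In Step~1, the constant $\pi^2$ does not come from a series $\sum 1/k^2$. The paper analyzes $g\mapsto g(W_e+E)$ on $U_n$ with the geodesic metric $d_g$. There the Lipschitz constant is $\sqrt{(\lambda_1^2+\lambda_2^2)/2}$, which gives $D_2\le\epsilon_e d_g^2$. It then uses $d_g\le\pi d_p$: a transposition matrix has eigenvalue $-1$, so it lies at geodesic distance $\pi$ from the identity, and squaring gives $\pi^2$.

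Your direct Frobenius computation is a valid alternative, but the constant you assign is wrong. For one transposition swapping rows $w_i,w_j$ of $W_e$, $\|(\Pi-\Pi')W_e\|_F^2=2\|w_i-w_j\|^2$, which can be as large as $4\lambda_1^2(W_e)$. A per-swap cost of $\epsilon_e$ would need this to be at most $(\lambda_1^2+\lambda_2^2)/2$, so it is not $\epsilon_e$.

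Done correctly, the direct route recovers the paper's constant. Let $P=\Pi^T\Pi'$ and $k=d_p(\tau,\tau')$. The matrix $2I-P-P^T$ has rank $k$ and eigenvalues in $[0,4]$, so $\|(\Pi-\Pi')W_e\|_F^2=\mathrm{tr}\bigl((2I-P-P^T)W_eW_e^T\bigr)\le 4\sum_{i\le k}\lambda_i^2(W_e)\le\frac{\pi^2}{2}\bigl(\lambda_1^2(W_e)+\lambda_2^2(W_e)\bigr)k^2$. This gives $D_2\le\pi^2\epsilon_e k^2$, and similarly for $W_h$.

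Finally, Step~2 needs the squared-metric statement $D_2\le\epsilon_2 d(g,g')^2$, which is the paper's RsmDP. It does not need the linear $(2,\epsilon_2,d_{S_n})$ form you wrote at the end of Step~1.
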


\begin{remark} \texttt{AloePri}'s consumed privacy budget \(\epsilon\) is strictly smaller than the consumed privacy budget \(\epsilon_1\) of data obfuscation $\mathcal{M}_1$.
\end{remark}

Theorem \ref{the:aloepri_rmdp} demonstrates that the privacy protection of \texttt{AloePri} exhibits two key properties.
When the data obfuscation mechanism \(\mathcal{M}_1\) provides satisfactory privacy guarantees (i.e., \(\epsilon_1 \leq (n-1)\epsilon_2\)), \texttt{AloePri} can further reduce the privacy budget. When the privacy protection offered by \(\mathcal{M}_1\) is limited, \texttt{AloePri} still ensures smaller privacy budget consumption through model obfuscation.

\end{Qizhi}

\section{Experiments}

In this section, we present the experimental results of \texttt{AloePri} and their explanation.

\label{sec:exp}
\subsection{Experimental Settings}

\textbf{Models and Datasets.} We conduct experiments on commonly-used open-source LLMs to evaluate the effectiveness of \texttt{AloePri}. For dense models, we select Qwen2.5/Qwen3 \cite{qwen2.5,qwen3technicalreport}, Llama3 \cite{touvron2023llama}, Deepseek-R1-Distill-Qwen (R1-Distill) \cite{deepseekai2025deepseekr1incentivizingreasoningcapability}. For MoE models, we choose Qwen3-MoE and Deepseek-V3.1-Terminus \cite{deepseekai2024deepseekv3technicalreport}, covering both moderate and large-scale models.

% \linyu{We evaluate the model utility using the OpenCompass framework \cite{2023opencompass}. 
We employ various representative LLM evaluation benchmarks to evaluate model accuracy, including SST2 \cite{socher-etal-2013-recursive} for sentence classification, MMLU \cite{hendryckstest2021} and C-Eval \cite{huang2023ceval} for general tasks, HumanEval \cite{chen2021evaluating} for code generation, IFEval \cite{zhou2023instructionfollowingevaluationlargelanguage} for alignment, and PIQA \cite{Bisk2020} for physical commonsense reasoning. Additionally, we use PUPA dataset \cite{siyan2024papillon} to validate the protection level for Personally Identifiable Information (PII). We use CCI3 \cite{wang2024cci30}, Huatuo26M-Lite \cite{li2023huatuo26m}, and MedDialog \cite{chen2020meddiag} datasets to evaluate information leakage related to token frequency patterns.

\noindent\textbf{Baseline Methods.} According to the discussion in Section \ref{sec:related}, we compare \texttt{AloePri} with four obfuscation-based privacy-preserving inference methods, which are the most related to our work. \texttt{SANTEXT} \cite{yue2021differential} and \texttt{RANTEXT} \cite{tong2025inferdpt} are based on token substitution, while \texttt{DP-Forward} \cite{du2023dp} and \texttt{SGT} \cite{roberts2025learning} are based on embedding transformation. \wqruan{Because cryptography-based methods are still far from practical usage and TEE-based methods lack hardware compatibility, we do not compare \texttt{AloePri} with them.} 
% Regarding STIP \cite{yuan2023secure}, it achieves private inference with near-plaintext accuracy through a distinct threat model where the plaintext model weights are kept secret from the server. This setup significantly reduces the attacker’s capabilities compared to our threat model while also limiting its applicable scenarios. For fairness, we do not include STIP in our experiments.

\noindent\textbf{Attacks.} We evaluate the resistance of \texttt{AloePri} against three types of typical attacks. (1) Obfuscation recovery: We adopt Vocabulary-Matching Attack (VMA) \cite{thomashidden} and Invariant Attack (IA) \cite{lin2024inversion} to verify whether the attacker can recover the secret mapping of \texttt{AloePri} based on the relationship between plaintext and obfuscated mode weights. (2) Training-based inversion: We use Internal State Attack (ISA) \cite{dong2025depth}, Inversion Model Attack (IMA) \cite{kugler2021invbert}, and nearest neighbor (NN) attack to investigate the protection of internal states. (3) Token frequency exploit: We use Token Frequency Matching Attack (TFMA) and Substitution Deciphering Attack (SDA) \cite{aldarrab-may-2021-sequence} to test the information leakage of token frequency. To the best of our knowledge, these attacks encompass almost all privacy threats faced by obfuscation-based privacy-preserving inference methods. We provide detailed information about the above attacks in Appendix \ref{appx:attack}.

\noindent\textbf{Privacy Metrics.} We adopt several metrics to quantify the privacy level. The \textit{Text Token Recovery Success Ratio (TTRSR)} measures the proportion of original tokens successfully recovered by the attacker, providing a direct measure of token-level privacy protection. The \textit{PII Recovery Success Ratio (PIIRSR)} quantifies the percentage of personally identifiable information (e.g., names, addresses, phone numbers) correctly extracted from the obfuscated outputs. The \textit{Recovered Text Cosine Similarity (CosSim)} calculates the cosine similarity between the embedding of the recovered text and the original plaintext. \textit{Top-k} accuracy assesses the accuracy of token matching based on the k-nearest neighbors. The \textit{BLEU-4} score measures the n-gram overlap between the adversary's recovered text and the original plaintext.

\noindent\textbf{Hyperparameter Configuration.}
In accordance with the recommended configurations of these models, we fix the generation parameters across all experiments with temperature = 0.65, top-k = 20, and top-p = 0.95. As for the privacy parameters, we tune them in subsequent experiments. If not specified, we set the matrix coefficient to $\lambda=0.3$, the expansion size to $h=128$, the noise coefficients to $\alpha_{e} = 1.0$ and $\alpha_{h} =0.2$, and the attention block-wise parameters to $\beta=8$ and $\gamma=1\mathrm{e}^3$ by default. We present detailed hyperparameter settings of subsequent experiments in Appendix \ref{appx:hyperparam}.

\begin{table*}[!htbp]
\centering
\small  
\caption{\qizhi{Accuracy and privacy comparison with baselines on Qwen2.5-14B-Instruct.}}
\label{tab:baseline_comp}
\begin{threeparttable}
\scalebox{0.8}{
\begin{tabular}{>{\centering\arraybackslash}p{2.5cm}|>
{\centering\arraybackslash}p{2cm}|>{\centering\arraybackslash}p{1.5cm}>{\centering\arraybackslash}p{1.5cm}>{\centering\arraybackslash}p{1.5cm}|>{\centering\arraybackslash}p{1.5cm}>{\centering\arraybackslash}p{1.5cm}>{\centering\arraybackslash}p{1.5cm}}
\toprule \toprule
\multirow{2}{*}{Method} & \multicolumn{1}{c|}{Classification Acc. $\uparrow$} & \multicolumn{3}{c|}{Generation Acc. $\uparrow$}  & \multicolumn{3}{c}{Privacy\tnote{1}}   \\
& SST2 & MMLU & PIQA & IFEval & Attack & TTRSR(\%) $\downarrow$ & CosSim $\downarrow$ \\
\midrule
\texttt{Plaintext} & 97.25 & 81.95 & 77.45 & 77.09 & - & - & -  \\ \midrule
\texttt{SANTEXT}  & 81.31 & 51.07 & 49.72 & 57.45 & (Direct obs.) & 62.90 & 0.93  \\ \midrule
\texttt{RANTEXT}  & 87.50 & 37.77 & 18.30 & 57.18 & (Direct obs.) & 56.92 & 0.88  \\ \midrule
\multirow{2}{*}{\texttt{DP-Forward}\tnote{2}}  & \multirow{2}{*}{93.60} & \multirow{2}{*}{-}  & \multirow{2}{*}{-} & \multirow{2}{*}{-} & NN  & 37.82 & 0.51 \\ 
& &  &  & & IMA  & 77.08 & 0.84   \\ \midrule
\multirow{2}{*}{\texttt{SGT}\tnote{3}} & \multirow{2}{*}{92.43} & \multirow{2}{*}{30.30} & \multirow{2}{*}{26.06} & \multirow{2}{*}{34.66} 
& NN & 11.12 & 0.69   \\
& & & & & IMA & 98.58 & 0.99   \\ \midrule

\multirow{5}{*}{\texttt{\texttt{AloePri}}} & \multirow{5}{*}{\textbf{97.13}}  & \multirow{5}{*}{\textbf{80.61}}   & \multirow{5}{*}{\textbf{75.23}} & \multirow{5}{*}{\textbf{79.49}} & VMA & 13.51 & 0.31  \\
& & & & & NN & 0.0 & 0.42   \\
& & & & & IMA & 0.0 & 0.36   \\
& & & & & IA & 5.95 & 0.40 \\
& & & & & ISA & 0.0 & 0.20   \\

\bottomrule \bottomrule
\end{tabular}
}
\begin{tablenotes}
    \item[1] Only count the privacy of the input text, as the baselines do not protect generated texts.
    \item[2] \texttt{DP-Forward} cannot support for auto-regressive generation tasks.
    \item[3] \texttt{SGT} does not release source codes. Therefore, we re-implement \texttt{SGT} according to the description in their paper~\cite{roberts2025learning}.
\end{tablenotes}
        
\end{threeparttable}
\end{table*}

\noindent\textbf{Experimental Environment.}
% To simulate real-world deployment scenarios, we establish distinct computational environments for the client and server components. 
The client-side environment is configured with a CPU-only setup equipped with two Intel(R) Xeon(R) Platinum 8457C processors (96 cores). The server-side environment employs a high-performance GPU cluster to handle the computational demands of large language model inference. We deploy the server-side inference service with vLLM 0.9.1 \cite{kwon2023efficient} and CUDA 12.4.

\subsection{Comparison with Previous Methods}
We compared \texttt{AloePri} with data obfuscation methods on model accuracy and privacy. The model accuracy is evaluated on MMLU, IFEval, and PIQA. We conducted various attacks (e.g. NN, IMA, VMA, IA, and ISA) to test privacy levels of different methods. We present the details and hyperparameter settings of baselines in Appendix \ref{appx:imple_details}. 
% The hyperparameters are configured to ensure their comparison under comparable utility or privacy levels. 

As shown in Table \ref{tab:baseline_comp}, \texttt{AloePri} significantly outperforms baselines in both model accuracy and data privacy. For \texttt{SANTEXT} and \texttt{RANTEXT}, we calculated the TTRSR by counting the number of plaintext tokens directly observed by the attacker. The two methods leak over 50\% of tokens while still suffer significant accuracy degradation. The reason is that the sensitive tokens replacement directly disrupts the key semantic information needed for answer generation. 
% Among embedding-level obfuscation methods, \texttt{SGT} outperforms \texttt{DP-Forward} in both model accuracy and privacy protection. 
% For \texttt{DP-Forward}, the embedding layer and the first decoder layer (nearly 1B parameters) are offloaded to the client side before noise addition to achieve better accuracy and privacy. Nevertheless, 
\texttt{DP-Forward} reaches about 94\% accuracy on SST2, but it is vulnerable to IMA with over 75\% TTRSR. In \texttt{SGT}, the client leverages a 1.6B-parameter model for noise generation. However, \texttt{SGT} still causes over 50\% accuracy loss. Meanwhile, \texttt{SGT} is also vulnerable to IMA, leading to over 90\% TTRSR. Based on covariant obfuscation, \texttt{AloePri} achieves a better balance between model accuracy and privacy. \texttt{AloePri} shows strong resistance to VMA, IMA, IA, and ISA, i.e., TTRSR is less than 15\%. Meanwhile, \texttt{AloePri} achieves less than 3\% accuracy loss on both classification and generation tasks.

% \qizhi{
% Additionally, we used the Contrastive Log-ratio Upper Bound (CLUB) \cite{cheng2020club} estimator (details presented in Appendix \ref{appx:mi_simulation}) to simulate the MI between pair-wise plaintext and obfuscated token, whose theoretical upper bound is $\log(n)$. Through MI, we can theoretically compare the information leakage levels of different methods. Due to the advantage of covariant obfuscation, \texttt{AloePri} reaches the smallest information leakage. For DP-Forward, excessive information leakage and the limitations of CLUB result in the simulated MI exceeding the upper bound $\log(n)$. It is worth noting that MI is not always correlated with the success rate of specific attacks. This is because each specific attack has inherent limitations and may fail to capture the leaked information. 
% }

\begin{figure}[ht]
    \centering
    % 第一个子图
    \begin{subfigure}[b]{0.4\textwidth}
        \centering
        \includegraphics[width=\textwidth]{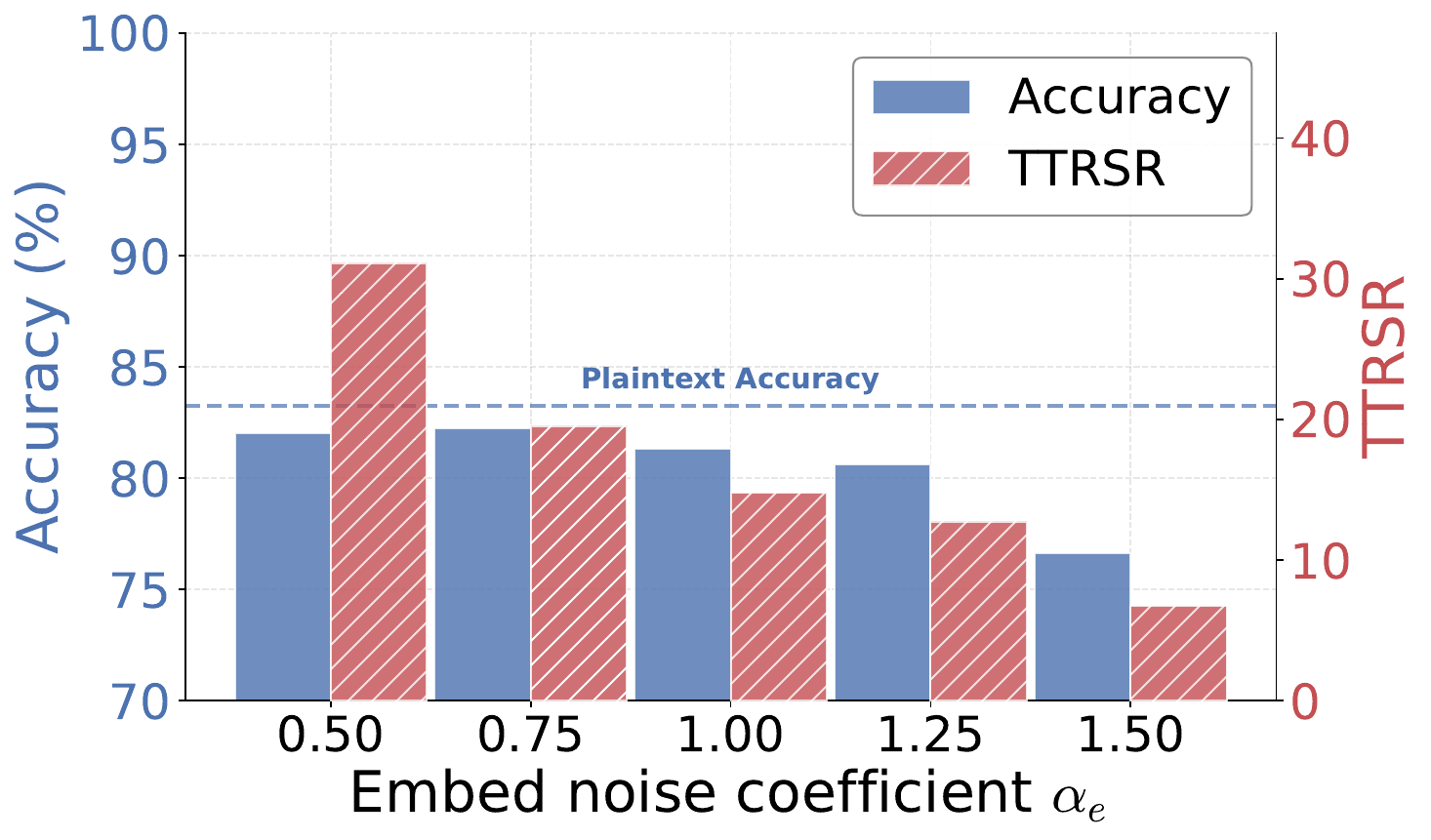}
        \caption{\small Various $\alpha_{\text{e}}$ with fixed $\alpha_{\text{h}} = 0.2$} 
        \label{fig:privacy-utility-1}
    \end{subfigure}
    % 第二个子图
    \begin{subfigure}[b]{0.4\textwidth}
        \centering
        \includegraphics[width=\textwidth]{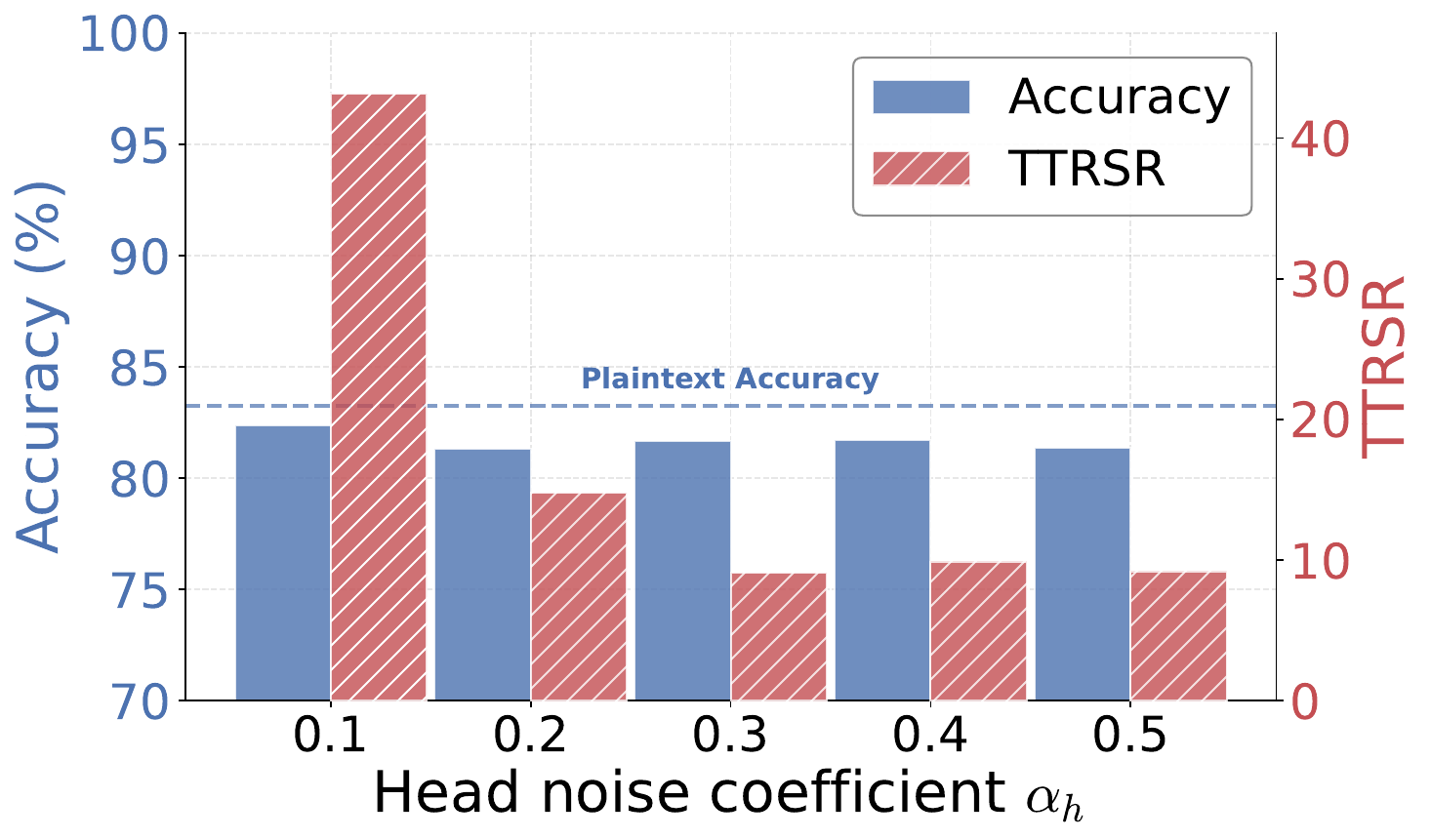}
        \caption{\small Various $\alpha_{\text{h}}$ with fixed $\alpha_{\text{e}} = 1.0$ }
        \label{fig:privacy-utility-2}
    \end{subfigure}
    \caption{Privacy (TTRSR) and accuracy under various noise parameters on Qwen2.5-14B-Instruct and C-Eval.}
    \label{fig:privacy-utility}
\end{figure}

\begin{figure}[ht]
    \centering
    % 第一个子图
    \begin{subfigure}[b]{0.4\textwidth}
        \centering
        \includegraphics[width=\textwidth]{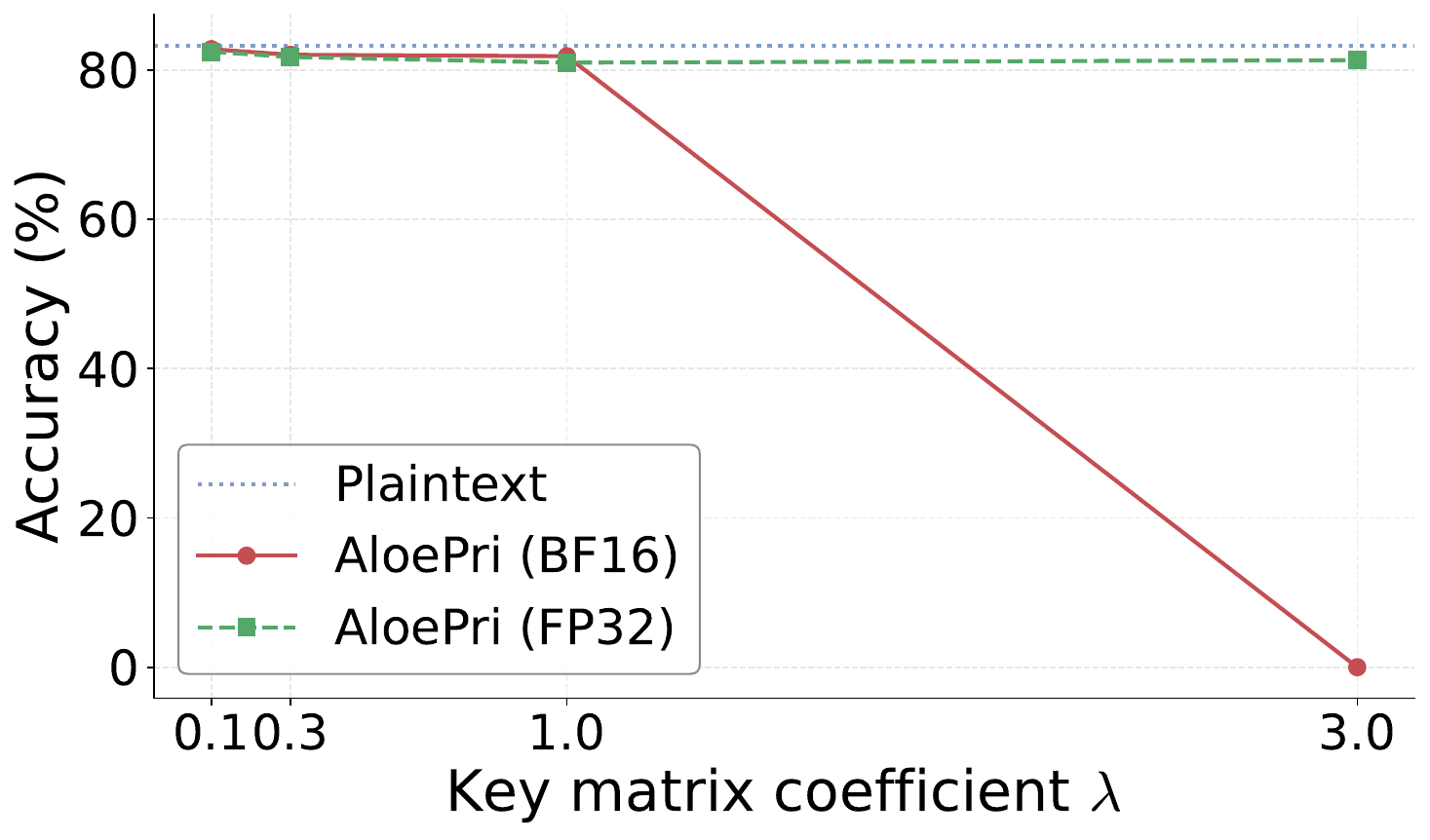}
        \caption{\small Accuracy of C-Eval.} 
        \label{fig:lambda_acc}
    \end{subfigure}
    % 第二个子图
    \begin{subfigure}[b]{0.4\textwidth}
        \centering
        \includegraphics[width=\textwidth]{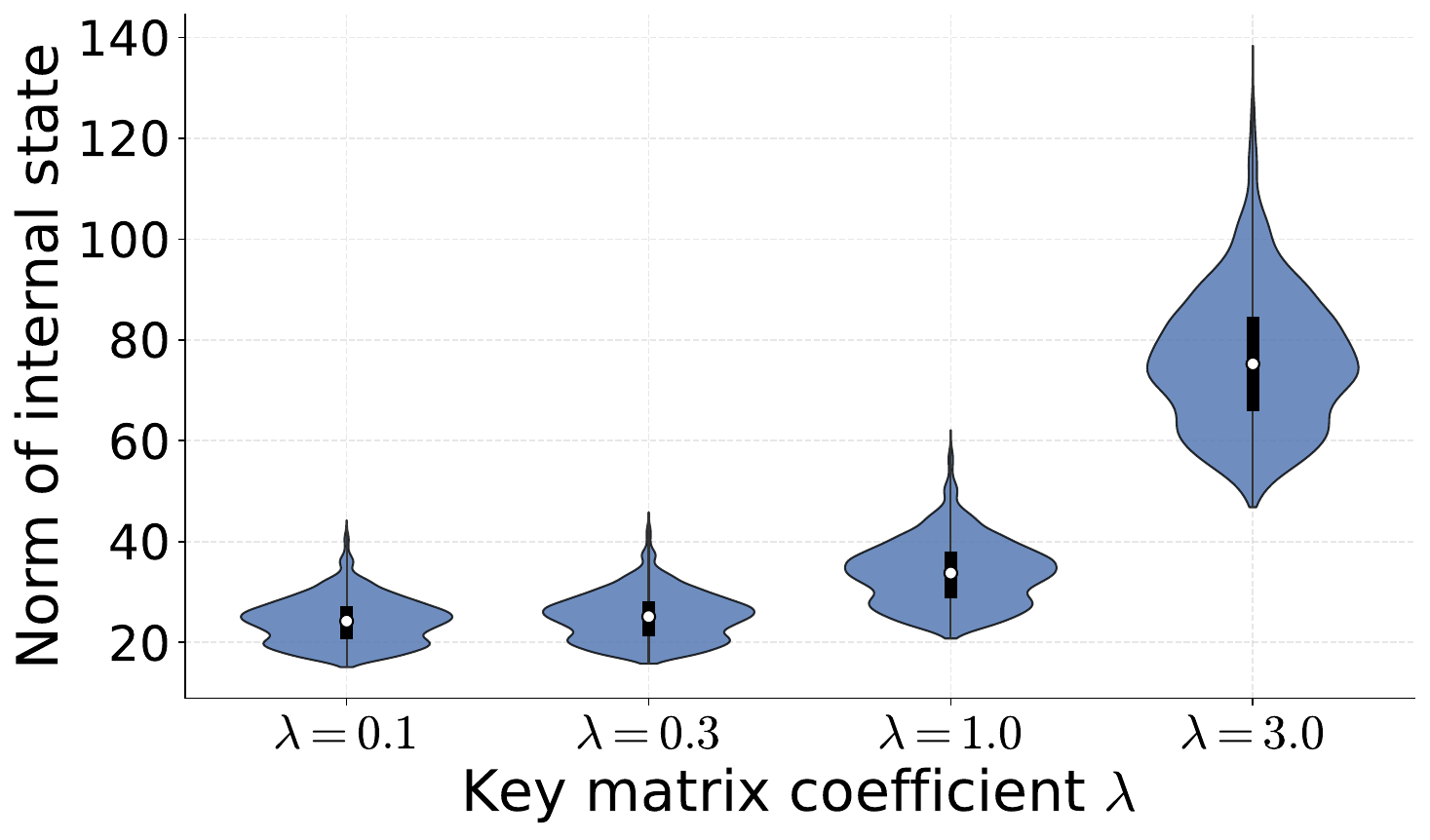}
        \caption{\small Norm distribution of internal states.}
        \label{fig:lambda_norm_dist}
    \end{subfigure}
    \caption{Impact of $\lambda$ on Qwen2.5-14B-Instruct.}
    \label{fig:lambda_impact}
\end{figure}

\begin{table*}[ht]
\centering
\scriptsize  
\caption{Accuracy and privacy of \texttt{AloePri} under various models and datasets. Privacy is evaluated with VMA on PUPA dataset.}
\label{tab:utility}
\scalebox{0.8}{
\begin{tabular}{cc|ccccc|ccccc|ccc}
\toprule \toprule

\multicolumn{2}{c}{\multirow{2}{*}{Model}} & \multicolumn{5}{c}{Plaintext accuracy $\uparrow$} & \multicolumn{5}{c}{\texttt{AloePri} accuracy $\uparrow$} & \multicolumn{3}{c}{\texttt{AloePri} privacy against VMA $\downarrow$} \\
\cmidrule(lr){3-7} \cmidrule(lr){8-12} \cmidrule(lr){13-15} 

& & MMLU & C-Eval & HumanEval & PIQA & IFEval  & MMLU & C-Eval & HumanEval & PIQA & IFEval & TTRSR(\%) & PIIRSR(\%) & BLEU-4 \\
\midrule
\multirow{2}{*}{R1-Distill}
& 14B 
& 78.91 & 86.18 & 96.34 & 88.14 & 72.83
& 79.2 & 85.93 & 93.9 & 88.79 & 72.27
& 12.36 & 2.37 & 0.52 \\
& 32B 
& 83.5 & 88.34 & 97.56 & 88.79 & 73.94
& 83.61 & 87.84 & 95.12 & 88.08 & 73.57
& 10.10 & 2.63 & 0.78 \\ \midrule
\multirow{2}{*}{Qwen3}
& 14B 
& 87.64 & 87.35 & 94.51 & 89.72 & 86.14 
& 84.57 & 87.12 & 95.12 & 90.26 & 85.40
& 25.05 & 1.62 & 1.72 \\ 
& 32B 
& 89.61 & 89.0 & 98.78 & 90.75 & 84.29
& 86.24 & 87.64 & 96.95 & 89.5 & 81.15
& 19.64 & 1.23 & 1.41 \\ \midrule
\multirow{1}{*}{Llama3} 
& 8B 
& 65.91  & 50.16 & 55.49 & 80.74 & 68.58 
& 63.5 & 44.55 & 57.32 & 78.94 & 67.84 
& 2.57 & 0 & 0.84 \\ \midrule

\multirow{1}{*}{Qwen3-MoE-Instruct}
& 30B-A3B 
& 84.75 & 86.97 & 85.37 & 89.55 & 80.41
& 84.49 & 86.79 & 83.54 & 82.21 &  78.56
& 5.91 & 2.43 & 0.74 \\ \midrule
\multirow{1}{*}{DS-V3.1-Terminus (no\_think)}
& 671B 
&  91.72 & 87.67  & 94.51 & 83.84 & 86.69
& 90.67 & 85.65 & 95.12 & 80.96 & 84.66 
& 4.80 & 1.12 & 0.40 \\ 
\bottomrule \bottomrule
\end{tabular}}
\end{table*}

\subsection{\wqruan{Impact of Privacy Parameters}}
We then evaluated the accuracy (measured by C-Eval) and privacy (against the VMA attack) of \texttt{AloePri} under different noise parameter settings. As depicted in Figure \ref{fig:privacy-utility},  $\alpha_{\text{e}}$ and $\alpha_{\text{h}}$ both play a critical role in preserving data privacy. For instance, the VMA attack can recover over 30\% tokens when $\alpha_{\text{e}}$ is set to $0.5$. The reason is that when the noise added to the weights of the embedding or model head is small, each pair of plain and obfuscated weights in Table \ref{tab:VMA_comb} differs almost only by row-wise and column-wise permutations. Consequently, VMA can easily recover the secret permutation \(\Pi\) through the approach of sorting followed by matching. Considering the trade-off between accuracy and privacy, we recommend setting $\alpha_{\text{e}}=1.0$ and $\alpha_{\text{h}}=0.2$ for practical applications.

In Figure \ref{fig:lambda_acc}, we evaluate the impact of $\lambda$ on model accuracy for both bfloat16 (BF16) and float32 (FP32) precisions. The results demonstrate that model accuracy degrades drastically at $\lambda = 3.0$ under the BF16 precision setting. Further observations of the norm distribution of internal states reported in Figure \ref{fig:lambda_norm_dist} reveal that this phenomenon arises because increasing $\lambda$ to enhance the obfuscation level of the key matrix simultaneously enlarges the distribution range of internal states, which makes numerical overflow more likely to occur during BF16-precision computations.

\subsection{Accuracy and Privacy under Various Models and Datasets}
To demonstrate the generality of \texttt{AloePri}, we tested the accuracy of \texttt{AloePri} on more models and datasets. We tested five commonly used datasets in the LLM benchmark (MMLU, C-Eval, HumanEval, PIQA, IFEval). For privacy, we used PUPA dataset, which is widely used to evaluate privacy levels of previous studies~\cite{siyan2024papillon}. 

As shown in Table~\ref{tab:utility}, \texttt{AloePri} can strongly protect input data while preserving model accuracy well for various models and tasks. Concretely, the accuracy loss of \texttt{AloePri} compared with plaintext model inference is less than 3\% in most datasets. Meanwhile, \texttt{AloePri} can effectively defend against VMA, i.e., PIIRSR of VMA on \texttt{AloePri} are less than 3\% for all models. These results further validate the generality of \texttt{AloePri} on models with various architectures.

% Meanwhile, we examine the privacy of \texttt{AloePri} across diverse models incorporating VMA, with the findings presented in Table \ref{tab:VMA_models}. The results indicate that \texttt{AloePri} maintains a stable privacy guarantee across models with varying parameter scales and architectural designs. Specifically, the VMA values of Qwen3-14B, Qwen3-32B, and the MoE-structured Qwen3-30B-A3B all fall within the range of xx–xx. This consistency arises from two key factors: \texttt{AloePri} exhibits compatibility with different model structures, and the noise magnitude it applies is correlated with the statistical characteristics of model weights. 

\begin{table}[ht]
\small  
\centering % 整个表格环境居中
% 第一个表格的minipage，宽度占文本宽度的48%
\begin{minipage}[t]{0.48\textwidth}
\centering
\caption{Ablative studies for internal state protection against ISA} 
\label{tab:ablation_isa} 
\scalebox{0.8}{
\begin{tabular}{c|cc}
\toprule 
\multirow{2}{*}{Applied mechanism}  & \multicolumn{2}{c}{TTRSR(\%) $\downarrow$} \\
\cmidrule(lr){2-3} 
& AttnScore & HiddenState \\ 
\midrule
Noise & 87.14  &  40.0 \\ \hline
Noise + KeyMat & 87.14  &  0.82 \\    \hline
Noise + KeyMat  + Head\&BlockPerm & 0.0 & 0.0  \\  
\bottomrule 
\end{tabular}}
\end{minipage}
\hfill % 两个minipage之间自动填充空白（居中对齐）
% 第二个表格的minipage，宽度占文本宽度的48%
\begin{minipage}[t]{0.48\textwidth}
\centering
\caption{Attack success rate of TFMA and SDA.}
\label{tab:freq_attack}
\scalebox{0.8}{
\begin{tabular}{cc|cc|c}
\toprule 
\multicolumn{2}{c|}{Attack setting} & \multicolumn{2}{c|}{TFMA(\%) $\downarrow$} & SDA $\downarrow$ \\
Pub. data & Priv. data & Top-10& Top-100 & BLEU-4 \\ \midrule
CCI3 & Huatuo26M & 0.14 & 1.01 & 0.01 \\
MedDialog & Huatuo26M & 0.43 & 3.83 &  0.55\\
Huatuo26M & Huatuo26M & 3.19 & 16.51 & 2.10 \\
\bottomrule 
\end{tabular}}
\end{minipage}
\end{table}

% \begin{table}[ht]
% \small  
% \centering
% \caption{Ablative studies for internal state protection against ISA} 
% \label{tab:ablation_isa} 
% \scalebox{0.8}{
% \begin{tabular}{c|cc}
% \toprule 
% \multirow{2}{*}{Applied mechanism}  & \multicolumn{2}{c}{TTRSR(\%) $\downarrow$} \\
% \cmidrule(lr){2-3} 
% & AttnScore & HiddenState \\ 
% \midrule
% Noise & 87.14  &  40.0 \\ \hline
% Noise + KeyMat & 87.14  &  0.82 \\    \hline
% % $Noise$ + $KeyMat$ + $BlockPerm$  &  &    \\   \hline
% Noise + KeyMat  + Head\&BlockPerm & 0.0 & 0.0  \\  
% \bottomrule 
% \end{tabular}}
% \end{table}

\subsection{Ablative Analysis of Internal States Protection}

We conducted ablative experiments to test the protective effect of \texttt{AloePri} on internal states (attention scores $AttnScore$ and decoder-layer hidden states $HiddenState$) against ISA. The obfuscation techniques tested include embedding noise $Noise$, key matrices $KeyMat$, attention block, and head permutation $Head\&BlockPerm$. 
As shown in Table \ref{tab:ablation_isa}, by combining embedding noise, key matrices, and attention block and head permutation, \texttt{AloePri} provides strong protection for internal states. Concretely, merely adding embedding noise is insufficient to prevent attackers from recovering tokens via internal states. After applying obfuscation of key matrices, the attacker can hardly exploit the hidden states output by decoder layers to perform ISA. However, attention scores remain vulnerable to ISA, as the attention scores generated by attention queries and keys are not obfuscated by the key matrices. Consequently, in \texttt{AloePri}, permutations of attention heads and blocks are introduced to further protect attention scores.

% \begin{table}[ht]
% \centering
% \small  
% \caption{Attack success rate of TFMA and SDA.}
% \label{tab:freq_attack}
% \scalebox{0.8}{
% \begin{tabular}{cc|cc|c}
% \toprule 
% \multicolumn{2}{c|}{Attack setting} & \multicolumn{2}{c|}{TFMA(\%) $\downarrow$} & SDA $\downarrow$ \\
% Pub. data & Priv. data & Top-10& Top-100 & BLEU-4 \\ \midrule

% CCI3 & Huatuo26M & 0.14 & 1.01 & 0.01 \\
% MedDialog & Huatuo26M & 0.43 & 3.83 &  0.55\\
% Huatuo26M & Huatuo26M & 3.19 & 16.51 & 2.10 \\
% \bottomrule 
% \end{tabular}}
% \end{table}

\subsection{Information Leakage of Token Frequency}
\label{sec:exp_token_freq}
We evaluated attackers’ ability to reconstruct private texts from token frequency via TFMA and SDA, defining three attack settings based on their prior knowledge of the client’s private data: \textbf{Zero-knowledge} (no client data information), where attackers leverage the public corpus CCI3 to recover the client’s medical domain dataset Huatuo26M-Lite; \textbf{Domain-aware} (access to client data domain information), where attackers use the medical domain dataset MedDiag to recover Huatuo26M-Lite; \textbf{Distribution-aware} (knowledge of the client data’s specific distribution), where attackers utilize Huatuo26M-Lite to recover a user dataset constructed from a different subset of the same dataset.

As shown in Table~\ref{tab:freq_attack}, in all three settings, \texttt{AloePri} can effectively prevent the attacker obtain private data through token frequency. Concretely, even when the attacker has prior knowledge of the user data distribution, it remains challenging to recover the client dataset by exploiting token frequency information. For instance, when using TFMA to directly perform matching based on token frequencies, the attacker’s Top-100 token recovery success rate is still no higher than 20\%. When employing the transformer-based SDA, the BLEU-4 score for the recovered text quality is only around 2, which is insufficient to form meaningful text.

% \begin{table}[ht]
% \centering
% \small  
% \caption{Runtime of Offline Model Obfuscation.}
% \label{tab:runtime}
% \scalebox{0.8}{
% \begin{tabular}{>{\centering\arraybackslash}p{4cm}|>{\centering\arraybackslash}p{2cm}}
% \toprule 
% Model & Time cost (min) \\
% \midrule
% R1-Distill-14B & 3.43 \\
% R1-Distill-32B & 9.28 \\
% Qwen3-MoE-30B-A3B & 4.58 \\
% Deepseek-V3.1-Terminus & 482.38 \\
% \bottomrule 
% \end{tabular}}
% \end{table}

% \begin{table}[ht]
% \centering
% \small  
% \caption{Online Inference TTFT (ms) and TPOT (ms)}
% \label{tab:efficiency}
% \scalebox{0.8}{
% \begin{tabular}{cc|cccc}
% \toprule 
% \multirow{2}{*}{Model} & \multirow{2}{*}{Concurrency} & \multicolumn{2}{c}{Plaintext} & \multicolumn{2}{c}{\texttt{AloePri}} \\
% & & TTFT & TPOT & TTFT & TPOT \\
% \midrule
% \multirow{2}{*}{R1-Distill-14B} & 1 & 19.49 & 6.93 & 20.56  & 6.72  \\
% & 4 & 33.34 & 7.41 & 36.64 & 7.22 \\ \midrule
% \multirow{2}{*}{Deepseek-V3.1-Terminus} & 1 & 166.08 & 19.14 & 172.04 & 19.63 \\
% & 4 & 185.11 & 21.42 & 199.97 & 22.41 \\
% \bottomrule 
% \end{tabular}}
% \end{table}

\begin{table}[ht]
\small  
\centering % 整个表格环境居中
% 第一个表格（耗时）的minipage，宽度适配列数较少的特点
\begin{minipage}[t]{0.38\textwidth}
\centering
\caption{Runtime of Offline Model Obfuscation.}
\label{tab:runtime}
\scalebox{0.8}{
\begin{tabular}{>{\centering\arraybackslash}p{4cm}|>{\centering\arraybackslash}p{2cm}}
\toprule 
Model & Time cost (min) \\
\midrule
R1-Distill-14B & 3.43 \\
R1-Distill-32B & 9.28 \\
Qwen3-MoE-30B-A3B & 4.58 \\
Deepseek-V3.1-Terminus & 482.38 \\
\bottomrule 
\end{tabular}}
\end{minipage}
\hfill % 自动填充两个表格间的空白
% 第二个表格（效率）的minipage，宽度适配列数较多的特点
\begin{minipage}[t]{0.58\textwidth}
\centering
\caption{Online Inference TTFT (ms) and TPOT (ms)}
\label{tab:efficiency}
\scalebox{0.8}{
\begin{tabular}{cc|cccc}
\toprule 
\multirow{2}{*}{Model} & \multirow{2}{*}{Concurrency} & \multicolumn{2}{c}{Plaintext} & \multicolumn{2}{c}{\texttt{AloePri}} \\
& & TTFT & TPOT & TTFT & TPOT \\
\midrule
\multirow{2}{*}{R1-Distill-14B} & 1 & 19.49 & 6.93 & 20.56  & 6.72  \\
& 4 & 33.34 & 7.41 & 36.64 & 7.22 \\ \midrule
\multirow{2}{*}{Deepseek-V3.1-Terminus} & 1 & 166.08 & 19.14 & 172.04 & 19.63 \\
& 4 & 185.11 & 21.42 & 199.97 & 22.41 \\
\bottomrule 
\end{tabular}}
\end{minipage}
\end{table}

\subsection{Efficiency of \texttt{AloePri}}

As shown in Table \ref{tab:runtime}, to evaluate the efficiency of \texttt{AloePri}, we test the runtime of offline model obfuscation and online inference on four models. Experimental results show that \texttt{AloePri} delivers practical efficiency for real-world deployment. For offline model obfuscation, it takes only about 10 minutes for a 32B dense model and about 8 hours for the 671B Deepseek-V3.1-Terminus. As this is an offline process, the runtime overhead is acceptable for practical use. For online inference, we compare \texttt{AloePri} with plaintext inference in text generation efficiency using \textit{Time to First Token (TTFT)} and \textit{Time Per Output Token (TPOT)} metrics. We test R1-Distill-14B with tensor parallelism set to 4, average 17-token prompts, 100-token generated texts and varying client request concurrency levels. Results in Table \ref{tab:efficiency} show that \texttt{AloePri} achieves identical online efficiency to plaintext inference.

\begin{figure}[ht]
    \centering
    \begin{subfigure}[b]{0.4\textwidth}
        \centering
        \includegraphics[width=\textwidth]{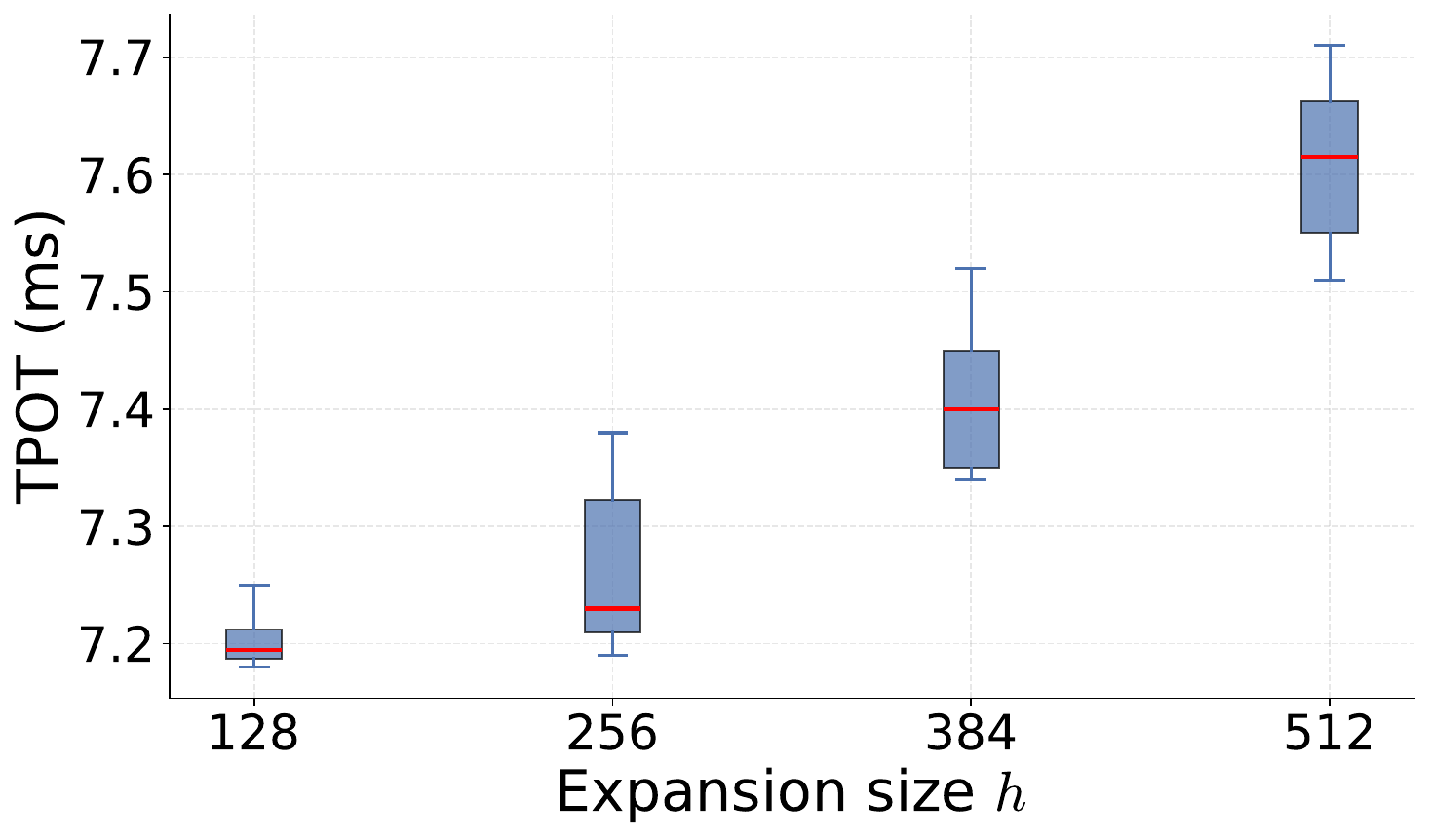}
        \caption{\small TPOT} 
        \label{fig:tpot_vs_h}
    \end{subfigure}
    % 第二个子图
    \begin{subfigure}[b]{0.4\textwidth}
        \centering
        \includegraphics[width=\textwidth]{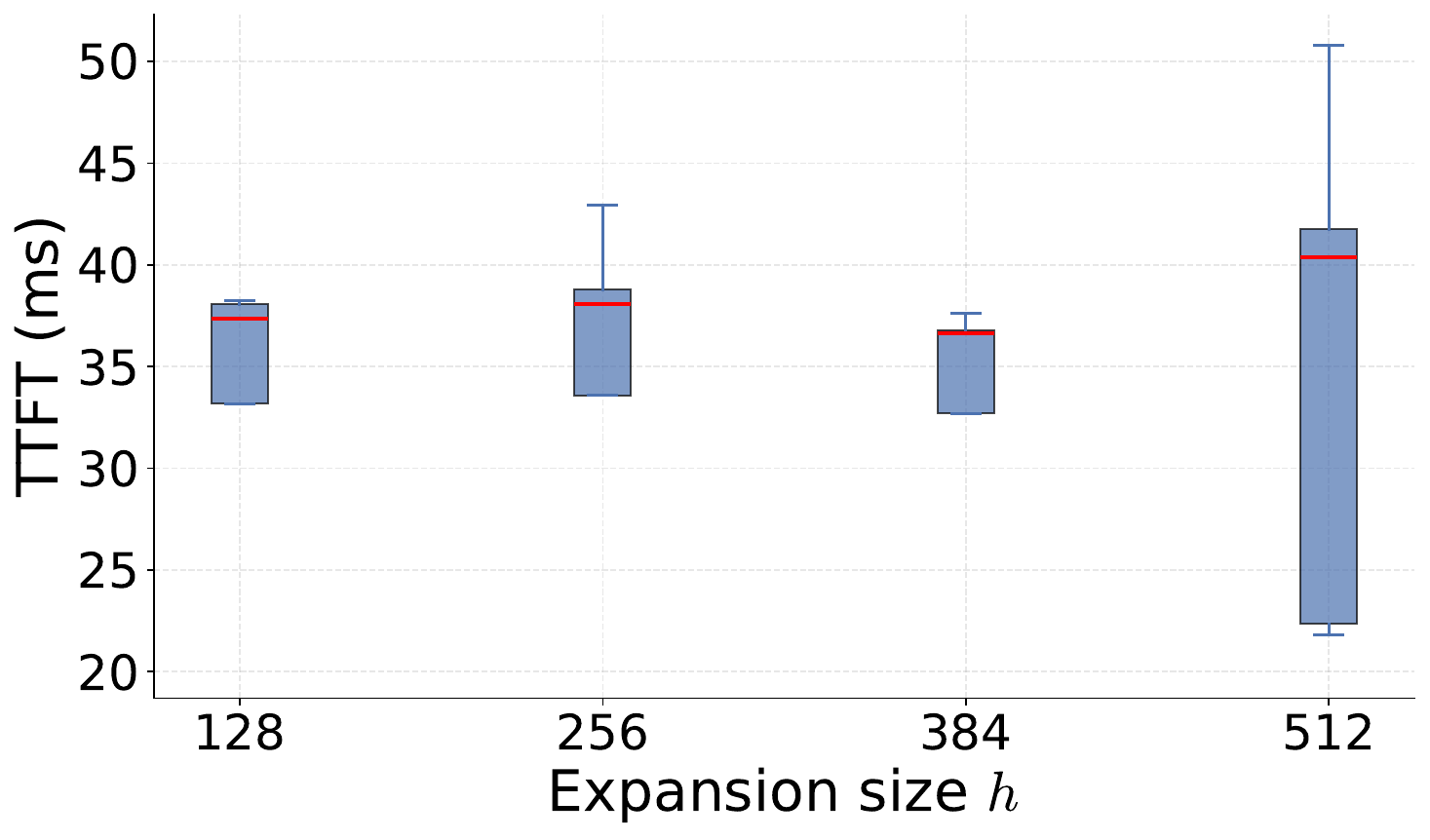}
        \caption{\small TTFT }
        \label{fig:ttft_vs_h}
    \end{subfigure}
    \caption{TPOT and TTFT vs. $h$ on R1-Distill-14B}
    \label{fig:latency_under_h}
\end{figure}

In Figure \ref{fig:latency_under_h}, we further investigate the impact of the expansion size on inference efficiency using R1-Distill-14B with request concurrency set to 4. Experimental results demonstrate that increasing the expansion size only slightly adds token latency, as modern LLM frameworks incorporate a variety of optimizations for parallel matrix computation. For example, the increase in TPOT is less than 10\% even when $h$ is scaled up to 512.

\section{Discussion and Future Work}
\label{sec:discussion}
\noindent\textbf{Generalization to Other Models.} We focus primarily on text-generative LLMs in this work, but our method’s design principles generalize to emerging Transformer variants and other neural network architectures (e.g., multi-modal LLMs, CNNs) under the covariant obfuscation mechanism. Most model architectures can be formulated as directed acyclic graphs (DAGs), with nodes (model components) connected sequentially or in parallel. Thus, the composition theorems of covariant obfuscation remain applicable for designing obfuscation schemes for these models, making carefully designed covariant obfuscation feasible for them.

\noindent\textbf{Protection of Model Weights.}  \texttt{AloePri} can be further extended to scenarios where only the server can access the model. In this case, CPU-TEE or FHE can be leveraged to allow the client to perform obfuscation without accessing the plaintext model. Taking FHE as an example: the server encrypts model weights via FHE and transmits them to the client, who obfuscates the encrypted weights and returns the results to the server. In this way, model obfuscation is achieved while keeping the model confidential from the client.

% \noindent\textbf{Privacy Analysis.}

% \noindent\textbf{Privacy Enhancement.} Comprehensive empirical studies validate \texttt{AloePri} against existing threats, though undiscovered attacks may still pose potential risks. Fortunately, the covariant obfuscation mechanism enables the integration of obfuscation techniques to further strengthen its privacy guarantees. For example, lightweight training can be incorporated into the model obfuscation pipeline to better resist inversion and obfuscation recovery attacks, as such attacks exploit similarities between obfuscated and plaintext model weights.

% From theoretical perspectives, we will further extend the composition theorems of covariant obfuscation to not only quantify the accuracy loss of composed covariant obfuscation mechanisms, but also their information leakage. Meanwhile, from application perspectives, we will develop privacy-preserving inference methods for more types of LLMs (e.g., multi-modal LLMs) to strengthen the privacy protection in additional LLM-driven application scenarios (e.g., medical image diagnosis).

\section{Conclusion }
In this paper, we propose covariant obfuscation and design a privacy-preserving LLM inference method \texttt{AloePri} for industrial applications. We theoretically demonstrate the superiority of covariant obfuscation over data obfuscation methods and propose three composition theorems for covariant obfuscation construction. Subsequently, by constructing covariant obfuscation for each model component and composing them based on composition theorems, \texttt{AloePri} can strongly protect user privacy in LLM inference while preserving model accuracy well.  Through comprehensive evaluations, we show that \texttt{AloePri} can maintain high model accuracy while effectively defending against representative attacks. For example, for Deepseek-V3.1-Terminus, with an accuracy loss smaller than 3.5\% and similar efficiency to plaintext inference, attackers can only recover less than 5\% of text tokens through inversion attacks. Our results demonstrate that utilizing internal structures of the model inference process based on covariant obfuscation could be a better way to design accurate, low-cost, and efficient privacy-preserving LLM inference methods.

%Bibliography
\bibliographystyle{unsrt}  
\bibliography{references}  

\appendix

\section{Notation Table}
\label{appx:notation}
The notations used in the paper is concluded in Table \ref{tab:notations}.

\begin{table}[ht]
\scriptsize
\centering
\caption{Notations and their Descriptions}
\label{tab:notations}
\renewcommand{\arraystretch}{1.2} % 调整行高，提升可读性
\scalebox{0.8}{
\begin{tabular}{p{2cm}|p{7cm}}
\hline
Notation & Description \\
\hline
$n$ & Number of vocabulary's tokens  \\
$d, d_{\text{head}}, d_{\text{ffn}}$ & Hidden size, attention head dimension, and FFN intermediate dimension \\
$h$ & Expansion size for the key matrix \\
$d'$ & Expanded hidden size \\
$m, m_{\text{kv}}$ & Number of attention heads, and number of attention key-value heads \\
$m_{\text{exp}}$ & Number of MoE experts \\
$\theta, \mathcal{V}$ & Weights and vocabulary of the language model \\
$L$ & Number of decoder layers \\
$W_{e}, W_{h}$ & Weights of the embedding layer and model head layer \\
$\boldsymbol{w}_{\text{norm}}$ & Weights of the layer normalization \\
$\omega_{\text{attn}}$ & Weights of an attention layer (including $W_{\text{query}}, W_{\text{key}}, W_{\text{value}}, W_{\text{out}}$) \\
$\omega_{\text{ffn}}$ & Weights of the FFN layer (including $W_{\text{gate}}, W_{\text{up}}, W_{\text{down}}$) \\
$\lambda$ & Coefficient for matrix sampling. \\
$\alpha_{\text{e}}, \alpha_{\text{h}}$ & Noise parameters for the embedding layer and model head layer \\
$\beta, \gamma$ & Maximum window size and sampling parameter for the attention block \\
$\tau, \Pi$ & Secret permutation of the vocabulary and its corresponding permutation matrix \\
$\hat{P}, \hat{Q}$ & Randomly generated key matrix and inverse key matrix \\
% $id_X$ & identity transformation on space $X$ \\
\hline
\end{tabular}}
\end{table}

\section{Proof of Covariant Obfuscation}
% \subsection{Proof of Theorem \ref{the:cov_obfs}}
% \label{appx:proof_of_cov_obfs}

% \begin{proof}
% Without loss of generality, we can let $X = \mathbb{Z}_n^{l}$ be
%  the input token space,  linear space $\Theta$ be the space of model weights,  $Y = \mathbb{Z}_n$ be the output token space. We denote $S_n$ as the token transformation group, which left-acts on $\Theta$. We define $H$ as the parameter transformation group, which right-acts on $\Theta$. 
%  The data-only obfuscation mechanism $D$ is defined by $D(x) := \sigma x$.
% We can construct a covariant obfuscation mechanism formulated as follows:
% \[
% \begin{array}{l}
% \phi_X(x) := \tau \sigma x, 
% \phi_\Theta(\theta) := \tau\theta, 
% \phi_Y(y) := \tau y, \\
% \psi_Y(y) := \tau^{-1}y, 
% \tilde{f} = f, 
% \end{array}
% \]
% where $ \tau \in S_n $ is a token permutation sampled from the uniform distribution. 
% Hence we have $\tilde{f}(\phi_X(x), \phi_\Theta(\theta))=f(\tau\sigma x, \tau\theta)=f(\sigma x, \theta)=f(D(x), \theta)$, hence $e_C=e_D$.
% On the other hand, $L_C=\mathcal{I}(x; \tau\sigma x, \tau\theta) < \mathcal{I}(x; \tau\sigma x, \tau \theta, \tau)=\mathcal{I}(x; \sigma x, \theta, \tau)= \mathcal{I}(x; \sigma x)=L_D$. 
% \end{proof}

% \subsection{Proof of Composition Theorems}
\label{appx:proof_of_comp_the}

We present the proofs of composition theorems of covariant obfuscation.

\noindent \textbf{Proof of Theorem \ref{the:sequantial}:}
\begin{proof}
We separately prove the commutation and de-obfuscation conditions of covariant obfuscation.

% {\footnotesize
% \[
% \begin{aligned}
% \tilde{h}(\phi_X(x), (\phi_\Theta(\theta), \phi_\Xi(\xi)))
% &= \tilde{g}(\tilde{f}(\phi_X(x), \phi_\Theta(\theta)), \xi) \\
% &\approx \tilde{g}(\phi_Y \circ f(x, \theta), \xi) \\
% &\approx \phi_Z \circ g(f(x, \theta), \xi) \\
% &= \phi_Z \circ h(x, (\theta, \xi)).
% \end{aligned}
% \]
% }

% \noindent Alternatively, the proof can be shown via diagram chasing as follows:
% \[
% \begin{CD}
% X @>f(\theta)>> Y @>g(\xi)>> Z \\
% @V\phi_X VV       @V\phi_Y VV       @V\phi_Z VV \\
% \tilde{X} @>\tilde{f}(\tilde{\theta})>> \tilde{Y} @>\tilde{g}(\tilde{\xi})>> \tilde{Z}
% \end{CD}
% \]

\noindent \textit{A. Commutation Condition:} If the error of \( C_1 \) is \( e_{C_1} \), the error of \( C_2 \) is \( e_{C_2} \), and \( \tilde{g}(\tilde{x}, \tilde{\xi}) \) satisfies the Lipschitz condition 
$
d(\tilde{g}(\tilde{x}_1, \tilde{\xi}), \tilde{g}(\tilde{x}_2, \tilde{\xi})) \leq M_g d(\tilde{x}_1, \tilde{x}_2)
$ with respect to \( \tilde{x} \), 
then we have
{
\[
\begin{aligned}
&d(\tilde{h}(\phi_X(x), (\phi_\Theta(\theta), \phi_\Xi(\xi))), \phi_Z \circ h(x, (\theta, \xi))) \\
&\leq d(\tilde{g}(\tilde{f}(\phi_X(x), \phi_\Theta(\theta)), \xi), \phi_Z \circ g(f(x, \theta), \xi)) \\
&\leq d(\tilde{g}(\tilde{f}(\phi_X(x), \phi_\Theta(\theta)), \xi), \tilde{g}(\phi_Y \circ f(x, \theta), \xi)) \\
&+ d(\tilde{g}(\phi_Y \circ f(x, \theta), \xi), \phi_Z \circ g(f(x, \theta), \xi))) \\
&\leq M_g d(\tilde{f}(\phi_X(x), \phi_\Theta(\theta)), \phi_Y \circ f(x, \theta)) \\
& + d(\tilde{g}(\phi_Y \circ f(x, \theta), \xi), \phi_Z \circ g(f(x, \theta), \xi))).
\end{aligned}
\]
}
Thus \(e_{C_2 \circ C_1} \leq M_g e_{C_1} + e_{C_2}\).

\noindent \textit{B. De-obfuscation Condition:} \( \psi_Z \circ \phi_Z = \text{id}_Z \) can be directly obtained from the de-obfuscation condition of \( (\phi_Y^2, \phi_\Xi, \phi_Z, \psi_Z, \tilde{g}) \).
\end{proof}

\noindent \textbf{Proof of Theorem \ref{the:parallel}:}

\begin{proof}
\noindent \textit{A. Commutation Condition}:
% \[
% \begin{CD}
% h: X \times (\Theta \times \Xi) @>>> Y \times Z \\
% @V(\phi_X, (\phi_\Theta, \phi_\Xi)) VV   @V(\phi_Y, \phi_Z) VV \\
% \tilde{h}: \tilde{X} \times (\tilde{\Theta} \times \tilde{\Xi}) @>>> \tilde{Y} \times \tilde{Z}
% \end{CD}
% \]
% {\footnotesize
% \[
% \begin{aligned}
% &\tilde{h}(\phi_X(x), (\phi_\Theta(\theta), \phi_\Xi(\xi))) = (\tilde{f}(\phi_X(x), \phi_\Theta(\theta)), \tilde{g}(\phi_X(x), \phi_\Xi(\xi))) \\
% &\approx (\phi_Y \circ f(x, \theta), \phi_Z \circ g(x, \xi)) = (\phi_Y, \phi_Z) \circ h(x, (\theta, \xi)).
% \end{aligned}
% \]
% }

 If the distance function \( d_{Y \times Z} \) on \( Y \times Z \) satisfies the control condition with respect to the distance functions on \( Y \) and \( Z \), then
{
\[
% \footnotesize
\begin{aligned}
&d_{Y \times Z}(\tilde{h}(\phi_X(x), (\phi_\Theta(\theta), \phi_\Xi(\xi))), (\phi_Y, \phi_Z) \circ h(x, (\theta, \xi))) \\
% =& d_{Y \times Z}((\tilde{f}(\phi_X(x), \phi_\Theta(\theta)), \tilde{g}(\phi_X(x), \phi_\Xi(\xi))), (\phi_Y \circ f(x, \theta), \phi_Z \circ g(x, \xi))) \\
\leq & d_Y(\tilde{f}(\phi_X(x), \phi_\Theta(\theta), \phi_Y \circ f(x, \theta)) \\
& + d_Z(\tilde{g}(\phi_X(x), \phi_\Xi(\xi)), \phi_Z \circ g(x, \xi))).
\end{aligned}
\]
}
% Thus, \( e_{C_1 \cup C_2} \leq e_{C_1} + e_{C_2} \).

\noindent \textit{B. De-obfuscation Condition:}
{
\[
% \footnotesize
(\psi_Y, \psi_Z) \circ (\phi_Y, \phi_Z) = (\psi_Y \circ \phi_Y, \psi_Z \circ \phi_Z) = (\text{id}_Y, \text{id}_Z) = \text{id}_{Y \times Z}.
\] 
}
\renewcommand{\qed}{} % 仅当前proof环境取消方框 
\end{proof}

\noindent \textbf{Proof of Theorem \ref{the:summation}:}

\begin{proof}
    \noindent \textit{A. Commutation Condition}:
% \[
% \begin{CD}
% h: X \times (\Theta \times \Xi) @>>> Y \\
% @V(\phi_X ,(\phi_\Theta, \phi_\Xi)) VV @V\phi_Y VV \\
% \tilde{h}: \tilde{X} \times (\tilde{\Theta} \times \tilde{\Xi}) @>>> \tilde{Y}
% \end{CD}
% \]
% {\footnotesize
% \[
% \begin{aligned}
%  \tilde{h}(\phi_X(x), &(\phi_\Theta(\theta), \phi_\Xi(\xi))) 
% =  \tilde{f}(\phi_X(x), \phi_\Theta(\theta)) + \tilde{g}(\phi_X(x), \phi_\Xi(\xi)) \\
% \approx & \phi_Y \circ f(x, \theta) + \phi_Y \circ g(x, \xi) = \phi_Y(f(x, \theta) + g(x, \xi)) \\
% = & \phi_Y \circ h(x, (\theta, \xi)).
% \end{aligned}
% \]
% }

If the distance function on \( Y \) satisfies translation invariance, then for any \( a_1, a_2, b_1, b_2 \in Z \), we have
{
\begin{align*}
d(a_1 + b_1, a_2 + b_2) 
\leq  & d(a_1 + b_1, a_2 + b_1) + d(a_2 + b_1, a_2 + b_2) \\
 = & d(a_1, a_2) + d(b_1, b_2).
\end{align*}
}
Thus,
{
\[
\begin{aligned}
&d_Y(\tilde{h}(\phi_X(x), (\phi_\Theta(\theta), \phi_\Xi(\xi))), \phi_Y \circ h(x, (\theta, \xi))) \\
=& d_Y( \tilde{f}(\phi_X(x), \phi_\Theta(\theta)) + \tilde{g}(\phi_X(x), \phi_\Xi(\xi)), \\ & \quad \phi_Y(f(x, \theta) + g(x, \xi)) ) \\
\leq & d_Y(\tilde{f}(\phi_X(x), \phi_\Theta(\theta)), \phi_Y \circ f(x, \theta))  \\
 & + d_Y(\tilde{g}(\phi_X(x), \phi_\Xi(\xi)), \phi_Y \circ g(x, \xi)).
\end{aligned}
\]
}
Therefore, \( e_{C_1 + C_2} \leq e_{C_1} + e_{C_2} \).

\noindent \textit{B. De-obfuscation Condition:} Directly follows from the de-obfuscation conditions of \( (\phi_X, \phi_\Theta, \phi_Z, \psi_Z, \tilde{f}) \) and \( (\phi_Y, \phi_\Xi, \phi_Z, \psi_Z, \tilde{g}) \).
\renewcommand{\qed}{} % 仅当前proof环境取消方框 
\end{proof}

\section{\qizhi{Proof of Theorem \ref{the:aloepri_rmdp}}}
% \begin{Qizhi}
\label{appx:proof_rmdp}

\subsection{R\'enyi Squared Metric Differential Privacy}
We first extend RmDP to R\'enyi squared metric Differential Privacy (RsmDP).
\begin{definition}[R\'enyi squared metric Differential Privacy]
\label{def:rm2dp}
Let $(\mathcal{X}, d)$ be a metric space. A randomized mechanism $\mathcal{M}\colon \mathcal{X} \to \mathcal{Y}$ satisfies \emph{$(\alpha,\epsilon,d^2)$-RsmDP} if, for any $x,x' \in \mathcal{X}$,
\[
D_\alpha\!\left(p_{\mathcal{M}(x)} \,\|\, p_{\mathcal{M}(x')}\right)
\le \epsilon \, d(x, x')^2,
\]
where $\alpha>1$, $\epsilon \ge 0$, and $p_{\mathcal{M}(x)}$ denotes the distribution of the mechanism output on input $x$.
\end{definition}

\subsection{Gauss Mechanism}
\noindent
\begin{theorem}[Gaussian Mechanism for RsmDP]
\label{Gaussian Mechanism}
Let $(\mathcal{X}, d)$ be a metric space, and let $f: \mathcal{X} \to \mathbb{R}^d$ be a query function satisfying
\[
L_2(f) := \max_{x, x' \in \mathcal{X}} \frac{\|\Sigma^{-1/2}(f(x) - f(x'))\|_2}{d(x,x')} < \infty,
\]
where $\Sigma$ is a symmetric positive definite matrix.

The Gaussian mechanism is defined by adding Gaussian noise $\mathcal{N}(0, \sigma^2 \Sigma)$ to the query output:
\[
\mathcal{M}(x) = f(x) + \mathcal{N}(0, \sigma^2 \Sigma).
\]
This mechanism satisfies $(\alpha, \epsilon, d^2)$-RsmDP with
\[
\epsilon = \frac{\alpha L_2(f)^2}{2\sigma^2}.
\]
\end{theorem}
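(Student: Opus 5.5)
The plan is to reduce the statement to the closed-form R\'enyi divergence between two Gaussians that share a covariance. Fix $x,x' \in \mathcal{X}$ and write $\mu = f(x)$, $\mu' = f(x')$. Then $p_{\mathcal{M}(x)} = \mathcal{N}(\mu, \sigma^2\Sigma)$ and $p_{\mathcal{M}(x')} = \mathcal{N}(\mu', \sigma^2\Sigma)$.

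The first step is a whitening change of variables. The map $z \mapsto \sigma^{-1}\Sigma^{-1/2} z$ is a bijection of $\mathbb{R}^d$, and R\'enyi divergence is invariant under measurable bijections. It therefore suffices to compute
\[
D_\alpha\bigl(\mathcal{N}(a, I_d)\,\|\,\mathcal{N}(b, I_d)\bigr), \qquad a = \sigma^{-1}\Sigma^{-1/2}\mu,\quad b = \sigma^{-1}\Sigma^{-1/2}\mu'.
\]
Here $\Sigma^{-1/2}$ exists and is symmetric positive definite because $\Sigma$ is.

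The second step evaluates this divergence directly from
\[
D_\alpha(P\|Q) = \frac{1}{\alpha-1}\log \int p^\alpha q^{1-\alpha}.
\]
The integrand is a Gaussian kernel with exponent
\[
-\tfrac{1}{2}\bigl(\alpha\|z-a\|^2 + (1-\alpha)\|z-b\|^2\bigr).
\]
Completing the square around $\alpha a + (1-\alpha) b$ leaves the constant
\[
-\tfrac{1}{2}\alpha(1-\alpha)\|a-b\|^2 = \tfrac{\alpha(\alpha-1)}{2}\|a-b\|^2,
\]
and the remaining normalized Gaussian integrates to $1$. This gives
\[
D_\alpha = \frac{\alpha\|a-b\|^2}{2}.
\]
The one step needing care is confirming that the coefficient of $\|z\|^2$ is $\alpha + (1-\alpha) = 1$, so the integral converges for every $\alpha>1$ and no condition on $\alpha$ is needed.

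The third step substitutes back, giving
\[
D_\alpha = \frac{\alpha\,\|\Sigma^{-1/2}(f(x)-f(x'))\|_2^2}{2\sigma^2}.
\]
By the definition of $L_2(f)$, $\|\Sigma^{-1/2}(f(x)-f(x'))\|_2 \le L_2(f)\, d(x,x')$ whenever $x \neq x'$. When $x = x'$ both sides are $0$. Hence
\[
D_\alpha \le \frac{\alpha L_2(f)^2}{2\sigma^2}\, d(x,x')^2,
\]
which is exactly $(\alpha,\epsilon,d^2)$-RsmDP in the sense of Definition~\ref{def:rm2dp}. Note that the ratio defining $L_2(f)$ is taken over pairs with $x \neq x'$.

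No step is a genuine obstacle; the argument is a direct calculation. The only points to write out carefully are the completing-the-square computation and the justification of the whitening invariance. For the latter I would either cite the data-processing inequality applied in both directions or simply perform the substitution inside the integral, where the Jacobian factors cancel because they appear with total exponent $\alpha + (1-\alpha) = 1$.
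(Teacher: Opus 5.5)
Your proposal is correct and follows essentially the same route as the paper. Both identify $p_{\mathcal{M}(x)}$ and $p_{\mathcal{M}(x')}$ as Gaussians with common covariance $\sigma^2\Sigma$, evaluate $D_\alpha = \frac{\alpha}{2\sigma^2}\|\Sigma^{-1/2}(f(x)-f(x'))\|_2^2$, and bound this by $\frac{\alpha L_2(f)^2}{2\sigma^2}\,d(x,x')^2$; the differences are only that you derive the closed-form Gaussian R\'enyi divergence by whitening and completing the square where the paper simply quotes it, and that you explicitly handle the case $x=x'$, which the paper's definition of $L_2(f)$ (a maximum that includes $x=x'$, where the ratio is undefined) glosses over.
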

\begin{proof}
Let $z := f(x) + \mathcal{N}(0, \sigma^2 \Sigma)$. For any $x, x' \in \mathcal{X}$, we have
\[
p_{z|x}(z) \sim \mathcal{N}(f(x), \sigma^2 \Sigma), \quad
p_{z|x'}(z) \sim \mathcal{N}(f(x'), \sigma^2 \Sigma).
\]
% According to the analysis of covariance obfuscation and information leakage under R\'enyi-metric differential privacy, 
Hence we obtain
\[
\begin{aligned}
D_\alpha(p_{z|x} \parallel p_{z|x'})
&= \frac{\alpha}{2} (f(x) - f(x'))^\top \sigma^{-2}\Sigma^{-1} (f(x) - f(x')) \\
&= \frac{\alpha}{2\sigma^2} \cdot \frac{\|\Sigma^{-1/2}(f(x)-f(x'))\|_2^2}{d(x,x')^2} \cdot d(x,x')^2 \\
&\leq \frac{\alpha L_2(f)^2}{2\sigma^2} \, d(x,x')^2
= \epsilon \, d(x,x')^2.
\end{aligned}
\]
\end{proof}

\noindent
\begin{theorem}[Gaussian Mechanism on Matrix Space for RsmDP]
\label{Gaussian Mechanism on Matrix Space}
Let $\Theta := M_{n,d}$. Define the randomized mechanism $M: O_n \to \Theta$ by
\[
M(g) := g(\theta + E),
\]
where each row of $E$ is independently and identically distributed as $\mathcal{N}(0, \sigma^2 I)$. Then the following statements hold:

\begin{enumerate}
    \item $M$ satisfies $(\alpha, \epsilon, d^2)$-RsmDP for any $\alpha > 1$, where
    \[
    L_2(f) := \max_{g,g' \in O_n,\; g \neq g'} \frac{\| g\theta - g'\theta \|_2}{d(g,g')}, 
    \qquad
    \epsilon = \frac{\alpha L_2(f)^2}{2\sigma^2}.
    \]

    \item If $d(g,g') := \|g - g'\|$, then $L_2(f) = \lambda_1$, where $\lambda_1$ is the largest singular value of $\theta$.

    \item If $d(g,g')$ is the geodesic distance \cite{etingof2024liegroupsliealgebras}, then
    \[
    L_2(f) \leq \sqrt{\frac{\lambda_1^2 + \lambda_2^2}{2}},
    \]
    where $\lambda_1 \ge \lambda_2$ are the largest and second-largest singular values of $\theta$.
\end{enumerate}
\end{theorem}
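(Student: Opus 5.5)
The plan is to reduce all three statements to Theorem~\ref{Gaussian Mechanism}, taking $\Sigma=I$ and identifying $\Theta=M_{n,d}$ with $\mathbb{R}^{nd}$ under the Frobenius norm. Part 2 and Part 3 then become bounds on the Lipschitz constant $L_2(f)$ of the map $f(g)=g\theta$.

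\textbf{Part 1 (reduction to the Gaussian mechanism).} Write $M(g)=g\theta+gE$. The rows of $E$ are i.i.d.\ $\mathcal{N}(0,\sigma^2 I)$, so the entries of $E$ are i.i.d.\ $\mathcal{N}(0,\sigma^2)$. Equivalently, each column $E_{:,j}$ is $\mathcal{N}(0,\sigma^2 I_n)$, independently across $j$. For orthogonal $g$, the column $gE_{:,j}$ is again $\mathcal{N}(0,\sigma^2 I_n)$, and the columns stay independent, so $gE\overset{d}{=}E$. Hence $M(g)\sim g\theta+\mathcal{N}(0,\sigma^2 I_{nd})$, which is exactly the Gaussian mechanism of Theorem~\ref{Gaussian Mechanism} with query $f(g)=g\theta$ and $\Sigma=I$. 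That theorem then gives $(\alpha,\alpha L_2(f)^2/(2\sigma^2),d^2)$-RsmDP.

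\textbf{Part 2 ($d$ is the Frobenius distance).} Put $A=g-g'$. For the upper bound, $\|A\theta\|_F=\|\theta^T A^T\|_F\le \|\theta^T\|_{op}\|A\|_F=\lambda_1\|A\|_F$, so $L_2(f)\le\lambda_1$. For attainment, let $u$ be a top left singular vector of $\theta$ and take $g'=I$, $g=I-2uu^T$, a reflection, hence orthogonal. Then $A=-2uu^T$ has $\|A\|_F=2$, and $\|A\theta\|_F=2\|u^T\theta\|=2\lambda_1$. The ratio is therefore exactly $\lambda_1$, so $L_2(f)=\lambda_1$.

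\textbf{Part 3 ($d$ is the geodesic distance).} I take the bi-invariant metric induced by the Frobenius inner product on $\mathfrak{so}(n)$. Then a minimizing geodesic from $g'$ to $g$ has the form $\gamma(t)=g'\exp(tX)$ for $t\in[0,1]$, with $X$ skew-symmetric and $d(g,g')=\|X\|_F$. (A different normalization would only rescale the constant.) Integrating along $\gamma$ gives
\[
\|g\theta-g'\theta\|_F\le\int_0^1\|g'\exp(tX)X\theta\|_F\,dt=\|X\theta\|_F ,
\]
since $g'\exp(tX)$ is orthogonal. It remains to show $\|X\theta\|_F^2\le\frac{\lambda_1^2+\lambda_2^2}{2}\|X\|_F^2$ for every skew $X$.

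This inequality is the main obstacle, and the plan is to use the real canonical form of a skew matrix. The nonzero eigenvalues of $X$ are $\pm i\mu_k$. Hence $X^TX=\sum_k\mu_k^2\Pi_k$, where the $\Pi_k$ are mutually orthogonal rank-$2$ projections, and $\|X\|_F^2=2\sum_k\mu_k^2$. Next,
\[
\|X\theta\|_F^2=\operatorname{tr}(X^TX\,\theta\theta^T)=\sum_k\mu_k^2\operatorname{tr}(\Pi_k\theta\theta^T).
\]
By Ky Fan's maximum principle, the trace of $\theta\theta^T$ compressed to any $2$-dimensional subspace is at most the sum of its two largest eigenvalues, $\lambda_1^2+\lambda_2^2$. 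This yields
\[
\|X\theta\|_F^2\le(\lambda_1^2+\lambda_2^2)\sum_k\mu_k^2=\frac{\lambda_1^2+\lambda_2^2}{2}\|X\|_F^2,
\]
which proves the bound. The remaining care is in the geodesic/exponential-map conventions: existence of a minimizing geodesic of the form $g'\exp(tX)$ on the connected component, and how $d$ is defined when $g$ and $g'$ lie in different components of $O_n$.
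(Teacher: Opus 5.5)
Your proposal is correct and follows the paper's route. Part 1 reduces to the Gaussian mechanism with $f(g)=g\theta$, Part 2 is the bound by $\|\theta\|_{OP}$, and Part 3 writes $g^{-1}g'=e^X$ and chains $\|e^X\theta-\theta\|\le\|X\theta\|\le\sqrt{(\lambda_1^2+\lambda_2^2)/2}\,\|X\|$. The paper states these steps without justification, so your additions supply exactly what its proof leaves out: the invariance $gE\overset{d}{=}E$, attainment of $\lambda_1$ via a reflection, the path-integral bound, and the canonical-form plus Ky Fan argument for real skew $X$.
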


\begin{proof}
We prove each statement separately:
\begin{enumerate}
    \item This follows directly from Theorem \ref{Gaussian Mechanism} by setting $f(g) := g\theta$.

    \item Since
    \[
    \|g\theta - g'\theta\| \leq \|\theta\|_{OP}\,\|g - g'\|,
    \]
    where $\|\theta\|_{OP}$ denotes the operator norm of $\theta$, which is equal to its largest singular value $\lambda_1$, we have
    \[
    L_2(f) = \|\theta\|_{OP} = \lambda_1.
    \]

    \item Let $X := \log(g^{-1}g')$, which is a skew-symmetric matrix. Let $A := \theta$. Then
    \[
    \|g\theta - g'\theta\|^2 = \|g^{-1}g'A - A\|^2 = \|e^X A - A\|^2.
    \]
    We have
    \[
    \|e^X A - A\|^2 \leq \|XA\|^2 \leq \frac{\lambda_1^2 + \lambda_2^2}{2}\|X\|^2.
    \]
    Therefore,
    \[
    \|g\theta - g'\theta\|^2 \leq \frac{\lambda_1^2 + \lambda_2^2}{2}\|X\|^2
    = \frac{\lambda_1^2 + \lambda_2^2}{2}d(g,g')^2,
    \]
    which implies
    \[
    \frac{\|g\theta - g'\theta\|}{d(g,g')} \leq \sqrt{\frac{\lambda_1^2 + \lambda_2^2}{2}}.
    \]
    Since this holds for all $g,g'$, we conclude that
    \[
    L_2(f) \leq \sqrt{\frac{\lambda_1^2 + \lambda_2^2}{2}}.
    \]
\end{enumerate}
\end{proof}

\subsection{Exponential Mechanism}

\begin{theorem}[Exponential Mechanism for RmDP]
\label{Exponential Mechanism}
For a metric space $(X, d)$ where $X$ is a finite set, define the mechanism $\mathcal{M}: X \to X$ by $\mathcal{M}(x) = y$, where $y$ is a random variable on $X$ with probability distribution satisfying $p_{\mathcal{M}}(y|x) \propto e^{-\epsilon d(x,y)}$.
The mechanism $\mathcal{M}$ satisfies $(\alpha, \epsilon, d)$-RmDP for any $\alpha > 1$.
\end{theorem}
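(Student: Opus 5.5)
The plan is to bound the Rényi divergence by the max-divergence and then compare the two output distributions pointwise. Recall that $D_\alpha(P\|Q)$ is nondecreasing in $\alpha$ and satisfies $D_\alpha(P\|Q)\le D_\infty(P\|Q)=\log\max_{y}\frac{P(y)}{Q(y)}$ for every $\alpha>1$. Since $X$ is finite, all sums below are finite. So it suffices to show, for fixed $x,x'\in X$, that $\log\frac{p_{\mathcal M}(y|x)}{p_{\mathcal M}(y|x')}\le \epsilon\, d(x,x')$ for every $y\in X$. This gives the claim simultaneously for all $\alpha>1$.

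Write $Z(x)=\sum_{y\in X}e^{-\epsilon d(x,y)}$. The ratio then splits into two factors:
\[
\frac{p_{\mathcal M}(y|x)}{p_{\mathcal M}(y|x')}=e^{\epsilon\,(d(x',y)-d(x,y))}\cdot\frac{Z(x')}{Z(x)} .
\]
The first factor is at most $e^{\epsilon d(x,x')}$ by the triangle inequality $d(x',y)\le d(x',x)+d(x,y)$. For the second factor, the same triangle inequality applied termwise gives $e^{-\epsilon d(x',y)}\le e^{\epsilon d(x,x')}e^{-\epsilon d(x,y)}$, hence $Z(x')\le e^{\epsilon d(x,x')}Z(x)$. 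A slightly finer route for finite $\alpha$ is to write
\[
\sum_y p^\alpha q^{1-\alpha}=\frac{Z(x')^{\alpha-1}}{Z(x)^{\alpha}}\sum_y a_y\Big(\frac{a_y}{b_y}\Big)^{\alpha-1},
\]
with $a_y=e^{-\epsilon d(x,y)}$ and $b_y=e^{-\epsilon d(x',y)}$. Using $a_y/b_y\le e^{\epsilon d(x,x')}$, this leads to $D_\alpha\le \epsilon d(x,x')+\log\frac{Z(x')}{Z(x)}$.

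\textbf{Main obstacle: the normalizer term $\log\frac{Z(x')}{Z(x)}$.} The generic bound above only yields $2\epsilon\, d(x,x')$. To obtain the constant $\epsilon$ as stated, this term must vanish or be absorbed. My first attempt would be to show $Z(x)=Z(x')$, which holds when the isometry group of $(X,d)$ acts transitively. This is the case for the permutation metric of Definition \ref{def:zn_perm_metric} restricted to an $S_n$-orbit, which is the setting in which the result is used for Theorem \ref{the:aloepri_rmdp}.

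I do not expect this step to go through for an arbitrary finite metric space, and I would check it against a small example before committing. Take $x$ isolated, and $N$ points clustered at $x'$, with $d(x,x')=\delta$ and $Ne^{-\epsilon\delta}\ll 1$. Then $p(x|x)\approx 1$ while $p(x|x')\approx e^{-\epsilon\delta}/N$. The single term $y=x$ already gives $D_\alpha\gtrsim \epsilon\delta+\log N$, which exceeds $\epsilon\delta$.

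So the write-up would proceed as follows. First, prove the unconditional bound $(\alpha,2\epsilon,d)$ by the two-factor argument. Second, prove the stated constant $\epsilon$ under the cancellation $Z(x)=Z(x')$, i.e.\ homogeneity of the metric space. Third, flag explicitly that the stated general version needs this extra hypothesis, since the example above violates the claimed constant.
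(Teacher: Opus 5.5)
Your proposal is correct, and on the decisive point it is more careful than the paper's own proof. The paper takes your first route: bound the likelihood ratio pointwise via the triangle inequality, then bound $D_\alpha$ by that pointwise bound. But it writes
\[
\left(\frac{p_{\mathcal{M}}(y|x)}{p_{\mathcal{M}}(y|x')}\right)^{\alpha-1}=\exp\bigl[\epsilon(\alpha-1)\bigl(d(x',y)-d(x,y)\bigr)\bigr],
\]
which silently drops the factor $\bigl(Z(x')/Z(x)\bigr)^{\alpha-1}$. (It also has a stray $\tfrac{1}{\alpha}$ in place of $\tfrac{1}{\alpha-1}$ in the formula for $D_\alpha$, which is harmless.) So the paper's argument is valid exactly under your cancellation hypothesis $Z(x)=Z(x')$.

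Your counterexample is sound, so the statement fails for general finite metric spaces. A three-point space already suffices: take $d(a,b)=d(a,c)=\delta$, $d(b,c)=\eta\to 0$ and $\epsilon\delta$ large. Then $D_2\bigl(p_{\mathcal{M}}(\cdot|a)\,\|\,p_{\mathcal{M}}(\cdot|b)\bigr)\ge\log\frac{p_{\mathcal{M}}(a|a)^2}{p_{\mathcal{M}}(a|b)}\approx\epsilon\delta+\log 2$. Your unconditional $(\alpha,2\epsilon,d)$ bound is the right general statement. The factor $2$ cannot be removed in general: clustering about $e^{\epsilon\delta}$ points instead of two gives $D_\alpha\ge 2\epsilon\delta-\tfrac{\alpha}{\alpha-1}\log 2$.

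Your repair also leaves the downstream use intact. For the permutation metric of Definition~\ref{def:zn_perm_metric}, $S_n$ acts on $\mathbb{Z}_n^l$ by isometries. Indeed, if $x'=g_k\cdots g_1x$ then $hx'=(hg_kh^{-1})\cdots(hg_1h^{-1})\,hx$, and conjugates of transpositions are transpositions. The action is also transitive on each orbit, and points in different orbits are at infinite distance, so they contribute nothing to $Z$. Hence $Z(hx)=Z(x)$ for all $h$, the normalizers cancel, and the appeal to this result with constant $\epsilon_1$ inside the proof of Theorem~\ref{Composability-Sub-Mechanisms-Token} is legitimate. What must change is the statement of the theorem itself. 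Either add the homogeneity hypothesis ($Z$ constant on each class of points at finite distance), or replace $\epsilon$ by $2\epsilon$.
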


\begin{proof}
    We have
    \[
    \left( \frac{p_{\mathcal{M}}(y|x)}{p_{\mathcal{M}}(y|x')} \right)^{\alpha-1}
    = \exp\left[ \epsilon(\alpha-1)\left( d(x',y) - d(x,y) \right) \right]
    \leq \exp\left[ \epsilon(\alpha-1) d(x,x') \right].
    \]
    
    Therefore,
    \[
    D_\alpha\big(p_{\mathcal{M}}(y|x) \parallel p_{\mathcal{M}}(y|x')\big)
    = \frac{1}{\alpha} \log \int \left( \frac{p_{\mathcal{M}}(y|x)}{p_{\mathcal{M}}(y|x')} \right)^{\alpha-1} p_{\mathcal{M}}(y|x) dy
    \leq \frac{1}{\alpha-1} \log \exp\left[ \epsilon(\alpha-1) d(x,x') \right]
    = \epsilon d(x,x').
    \]
\end{proof}

\subsection{Permutation Metric on $S_n$}
Define the permutation metric $d_p(g,g')$ on $S_n$ as the minimum number $k$ of transpositions required to transform $g$ into $g'$, i.e.,
\[
g = g_0 \rightarrow g_1 \rightarrow \cdots \rightarrow g_k = g'.
\]
It is easy to see that the permutation metric on $\mathbb{Z}_n^l$ presented in Definition \ref{def:zn_perm_metric} is related to that on $S_n$ as follows:
\[
d(x, x') = \min_{g \in S_n:\; gx = x'} d_p(g, \mathrm{id}),
\]
where $\mathrm{id}$ denotes the identity permutation.

Moreover, we establish the relationship between the permutation metric and the geodesic metric in Theorem \ref{thm:geo_vs_perm}.

\begin{theorem}[Geodesic Metric vs. Permutation Metric]
\label{thm:geo_vs_perm}
We have
\begin{enumerate}
    \item For any $g, g' \in S_n$,
    \[
    d_g(g, g') \leq \pi \, d_p(g, g'),
    \]
    where $d_g$ denotes the geodesic metric on the unitary group $U_n$, and $d_p$ denotes the permutation metric on $S_n$.

    \item If a mechanism on $S_n$ satisfies $(\alpha, \epsilon, d_g^2)$-RsmDP, it also satisfies $(\alpha, \epsilon \pi^2, d_p^2)$-RsmDP.
\end{enumerate}
\end{theorem}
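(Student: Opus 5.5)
Part 1 reduces to the case of a single transposition, after which the triangle inequality finishes the argument. The geodesic metric on $U_n$ is the one induced by the bi-invariant Riemannian metric coming from the Frobenius inner product on the Lie algebra $\mathfrak{u}_n$. Left translation by a unitary is an isometry of this metric, so $d_g(g,g') = d_g(\mathrm{id}, g^{-1}g')$, and the same invariance holds for $d_p$.

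Take a shortest chain $g=g_0\to g_1\to\cdots\to g_k=g'$ with $k=d_p(g,g')$ and $g_{i+1}=t_i g_i$ for transpositions $t_i$. By right-invariance of $d_g$ we get $d_g(g_i,g_{i+1})=d_g(\mathrm{id},t_i)$. The triangle inequality then gives
\[
d_g(g,g')\le \sum_{i=0}^{k-1} d_g(\mathrm{id},t_i).
\]
It therefore suffices to show $d_g(\mathrm{id},t)\le\pi$ for every transposition $t$.

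For $t=(a\,b)$, the permutation matrix has eigenvalue $-1$ on the vector $e_a-e_b$ and eigenvalue $1$ on the orthogonal complement of that vector. Choose the skew-Hermitian generator $X = i\pi\, vv^{*}$, where $v=(e_a-e_b)/\sqrt2$. Then $e^{X}=I-2vv^{*}=t$, and the one-parameter curve $s\mapsto e^{sX}$, $s\in[0,1]$, runs from $\mathrm{id}$ to $t$ inside $U_n$. Its length is $\|X\|_F=\pi\|vv^*\|_F=\pi$, so $d_g(\mathrm{id},t)\le\pi$. Summing over the $k$ transpositions gives $d_g(g,g')\le\pi k=\pi d_p(g,g')$.

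The main obstacle is normalization. The constant $\pi$ comes out exactly only if the geodesic distance is measured with the Frobenius norm of the Lie algebra element; that is the same convention used in Theorem~\ref{Gaussian Mechanism on Matrix Space}, where $d(g,g')=\|\log(g^{-1}g')\|$. I would state this convention explicitly. I would also note that we only need an upper bound on the geodesic distance, so exhibiting a single path suffices and no minimality argument is required. One might worry that the path leaves the real orthogonal group, which is where Theorem~\ref{Gaussian Mechanism on Matrix Space} operates. Here the unitary group is the ambient space named in the statement, and $U_n$ contains $O_n$ with a compatible metric, so a path in $U_n$ is legitimate.

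Part 2 follows immediately from part 1. If a mechanism satisfies $(\alpha,\epsilon,d_g^2)$-RsmDP, then for all $g,g'$,
\[
D_\alpha(p_{\mathcal M(g)}\|p_{\mathcal M(g')})\le \epsilon\, d_g(g,g')^2 \le \epsilon\pi^2 d_p(g,g')^2.
\]
This is exactly the $(\alpha,\epsilon\pi^2,d_p^2)$-RsmDP condition of Definition~\ref{def:rm2dp}.
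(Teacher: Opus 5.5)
Your proof is correct and follows the paper's argument. You apply the triangle inequality along a shortest transposition chain with each step costing $\pi$, then substitute directly for part 2; the paper simply asserts $d_g(g_i,g_{i+1})=\pi$, and your explicit generator $X=i\pi vv^{*}$, Frobenius normalization, and remark that the path must run in $U_n$ rather than $O_n$ (a transposition has determinant $-1$) fill in what the paper leaves implicit.
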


\begin{proof}
We prove each statement separately:
\begin{enumerate}
    \item Suppose $d_p(g,g') = k$. Then there exists a chain of transpositions
    \[
    g = g_0 \rightarrow g_1 \rightarrow \cdots \rightarrow g_k = g'.
    \]
    By the triangle inequality,
    \[
    d_g(g, g') \leq \sum_{i=0}^{k-1} d_g(g_i, g_{i+1})
    = \sum_{i=0}^{k-1} \pi
    = k\pi
    = \pi d_p(g, g').
    \]

    \item By assumption,
    \[
    D_\alpha\big(p(y|x) \,\|\, p(y|x')\big)
    \leq \epsilon \, d_g(x, x')^2.
    \]
    Combining this with the result in part (1), we obtain
    \[
    d_g(x,x')^2 \leq \pi^2 d_p(x,x')^2,
    \]
    which implies
    \[
    D_\alpha\big(p(y|x) \,\|\, p(y|x')\big)
    \leq \epsilon \pi^2 d_p(x, x')^2.
    \]
\end{enumerate}
\end{proof}

% \vspace{0.5em}
% \noindent
% \textbf{Theorem (Translation Invariance of Permutation Metric)}:
% For any $h \in G$, we have
% \[
% d(hg, hg') = d(gh, g'h) = d(g, g').
% \]

% \vspace{0.3em}
% \noindent
% \textit{Proof}:
% There exists a chain of transpositions $g = g_0 \rightarrow g_1 \rightarrow \cdots \rightarrow g_k = g'$
% if and only if there exists a chain $hg = hg_0 \rightarrow hg_1 \rightarrow \cdots \rightarrow hg_k = hg'$,
% and if and only if there exists a chain $gh = g_0 h \rightarrow g_1 h \rightarrow \cdots \rightarrow g_k h = g'h$.

\subsection{RsmDP for \texttt{AloePri}'s Offline Phase}

\begin{theorem}
The offline phase of \texttt{AloePri} satisfies $(\alpha,\epsilon,d^2)$-RsmDP for protecting the secret permutation $\tau$, where $\alpha=2$, $\epsilon=\epsilon_e+\epsilon_h$, and $d$ is the geodesic metric on $S_n \subset U_n$. Specifically,
\[
\epsilon_e = \frac{\alpha\big(\lambda_1^2(W_e) + \lambda_2^2(W_e)\big)}{4\sigma_e^2},
\]
where $\lambda_1(W_e) \ge \lambda_2(W_e)$ are the largest and second-largest singular values of $W_e$, and
\[
\epsilon_h = \frac{\alpha\big(\lambda_1^2(W_h) + \lambda_2^2(W_h)\big)}{4\sigma_h^2},
\]
where $\lambda_1(W_h) \ge \lambda_2(W_h)$ are the largest and second-largest singular values of $W_h$.
\end{theorem}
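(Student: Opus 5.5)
We view the offline phase as a randomized mechanism $\mathcal{M}$ whose secret input is the permutation $\tau\in S_n\subset U_n$ (through $\Pi$) and whose output is the pair of obfuscated weights $(\widetilde{W}_{\text{embed}},\widetilde{W}_{\text{head}})$. The plan is to reduce each component to the matrix Gaussian mechanism of Theorem~\ref{Gaussian Mechanism on Matrix Space}, part 3, and then add the two bounds.

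\textbf{Step 1: embedding and model head as two Gaussian mechanisms.} For the embedding we write $\widetilde{W}_{\text{embed}} = \Pi(W_e+\alpha_e\mathcal{E}_{\text{embed}})\hat{P}_{\text{embed}}$. The key matrix $\hat{P}_{\text{embed}}$ is sampled independently of $\tau$, so right multiplication by it is a post-processing that does not depend on $\tau$. By the data-processing inequality for R\'enyi divergence, $D_2$ can only decrease under it. Strictly, $\hat{P}$ is random and not injective on row space issues, but data processing holds for any $\tau$-independent Markov kernel. We therefore only need to bound the mechanism $\tau\mapsto \Pi(W_e+E)$, with rows of $E$ i.i.d.\ $\mathcal{N}(0,\sigma^2 I_d)$. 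Here $\sigma$ is the noise scale, namely $\alpha_e\sigma_e$; we track it as $\sigma_e$, absorbing $\alpha_e$ as in the statement. This is exactly the mechanism $M(g)=g(\theta+E)$ with $g=\Pi\in O_n$ and $\theta=W_e$. The model head is handled the same way by transposing: $\widetilde{W}_{\text{head}}^T=\Pi(W_h^T+\alpha_h\mathcal{E}_{\text{head}}^T)\hat{Q}_{\text{head}}^T$. The singular values of $W_h^T$ equal those of $W_h$.

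\textbf{Step 2: per-component constants.} Apply Theorem~\ref{Gaussian Mechanism on Matrix Space}, part 1, with the geodesic metric. Part 3 gives $L_2\le\sqrt{(\lambda_1^2+\lambda_2^2)/2}$, hence
\[
\epsilon=\frac{\alpha L_2^2}{2\sigma^2}\le \frac{\alpha(\lambda_1^2+\lambda_2^2)}{4\sigma^2}.
\]
This yields $\epsilon_e$ and $\epsilon_h$ with $\alpha=2$. The geodesic metric here is that of $U_n$ restricted to the permutation matrices; Theorem~\ref{Gaussian Mechanism on Matrix Space} is stated on $O_n$. The inequality $\|e^XA-A\|\le\|XA\|$ uses only that $X$ is skew-Hermitian. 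Moreover, the principal logarithm of a permutation matrix is skew-Hermitian with eigenvalues in $i(-\pi,\pi]$. So we would note that part 3 transfers verbatim to $S_n\subset U_n$.

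\textbf{Step 3: composition.} The noises $\mathcal{E}_{\text{embed}}$, $\mathcal{E}_{\text{head}}$ and the key matrices are mutually independent given $\tau$. Hence the joint output distribution factorizes, and the R\'enyi divergence of a product is the sum of the divergences:
\[
D_2(p_{\tau}\|p_{\tau'})\le(\epsilon_e+\epsilon_h)\,d(\tau,\tau')^2.
\]
The remaining weights ($\phi^{\text{attn}},\phi^{\text{ffn}}$, norms) are generated independently of $\tau$, so by the same post-processing argument they add nothing.

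\textbf{Main obstacle.} The delicate point is Step 1's treatment of the key matrices and of sharing. Both $\hat{P}_{\text{embed}}$ and $\hat{Q}_{\text{head}}$ come from the same \texttt{Init} output $(B,E,F,Z)$, so they are correlated with each other. They are nonetheless jointly independent of $\tau$ and of the noises. To handle this, we would first condition on all key-matrix randomness, so that for fixed keys the output is a deterministic function of $(\Pi W^{\star}_{\text{embed}},\Pi W^{\star}_{\text{head}}{}^T)$. We then bound the divergence conditionally and use joint convexity (quasi-convexity) of $D_\alpha$ under mixing over a $\tau$-independent variable. A second check is part 3's step $\|XA\|^2\le\frac{\lambda_1^2+\lambda_2^2}{2}\|X\|^2$ for skew $X$. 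We would take it as given from the earlier theorem.
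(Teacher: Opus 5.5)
Your route is the paper's. You strip $\hat P_{\text{embed}}$, $\hat Q_{\text{head}}$ and all middle weights off by data processing, apply part~3 of Theorem~\ref{Gaussian Mechanism on Matrix Space} to the embedding and the (transposed) head, and add the two bounds using independence of the noises. Your transposition of the head is more accurate than the paper's $g(\theta_h+E)P^{-1}$. Your ``main obstacle'' is not really one. The paper treats $(A,B)\mapsto(AP,\tilde{\mathcal{M}}(P),BP^{-1})$, with all key randomness drawn jointly and independently of $\tau$, as a single Markov kernel. One application of data processing therefore already covers the correlation between $\hat P_{\text{embed}}$ and $\hat Q_{\text{head}}$; your conditioning plus quasi-convexity is the same fact unpacked.

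The step that is wrong is the claim that part~3 transfers verbatim from $O_n$ to $S_n\subset U_n$. You are right that a transfer is needed: an odd permutation has determinant $-1$ and no real logarithm, so the paper's ``$X=\log(g^{-1}g')$ is skew-symmetric'' fails for half of all pairs. You are also right that $\|e^XA-A\|_F\le\|XA\|_F$ survives for skew-Hermitian $X$. But the second inequality, $\|XA\|_F^2\le\frac{\lambda_1^2+\lambda_2^2}{2}\|X\|_F^2$, is exactly where realness is used. For real skew-symmetric $X$ the nonzero eigenvalues of $X^TX$ come in equal pairs, and von Neumann's trace inequality then produces $\lambda_1^2+\lambda_2^2$ per pair.

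For a complex logarithm this fails. For a transposition, $X=i\pi vv^T$ with $v=(e_1-e_2)/\sqrt2$. If the rows of $A$ are $a_1=-a_2=u$ (others zero), then $\|XA\|_F^2=\pi^2\lambda_1^2$, twice the claimed bound. So ``take it as given from the earlier theorem'' is not available.

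The conclusion is still true for permutation matrices, but you must bypass $\|XA\|$. Let $p=g^{-1}g'$ have eigenvalues $e^{i\theta_u}$ with $\theta_u\in(-\pi,\pi]$. Set $c_u=|e^{i\theta_u}-1|^2\le\theta_u^2$, the eigenvalues of $(p-I)^T(p-I)$, and let $c_{(1)}=\max_u c_u$ and $S=\|X\|_F^2=\sum_u\theta_u^2$. Von Neumann gives
\[
\|(p-I)A\|_F^2\le c_{(1)}(\lambda_1^2-\lambda_2^2)+\lambda_2^2\sum_u c_u\le c_{(1)}(\lambda_1^2-\lambda_2^2)+\lambda_2^2 S .
\]
Moreover $c_{(1)}\le S/2$. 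A real eigenvalue $-1$ gives $c=4<\pi^2/2\le S/2$. A non-real one has its conjugate partner, so $S\ge 2\theta^2\ge 2c$. Together these yield $\frac{\lambda_1^2+\lambda_2^2}{2}S$, as required.

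Finally, ``absorbing $\alpha_e$'' is a reinterpretation rather than a proof step. The injected noise has standard deviation $\alpha_e\sigma_e$ (resp.\ $\alpha_h\sigma_h$), so the bound you actually obtain has $\alpha_e^2\sigma_e^2$ and $\alpha_h^2\sigma_h^2$ in the denominators. It matches the stated $\epsilon_e,\epsilon_h$ only if $\sigma_e,\sigma_h$ are read as the injected scales, which is what the paper's proof tacitly does.
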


\vspace{0.5em}
\noindent
\textit{Proof}.
First, decompose the model parameter space into the embedding part, the middle part, and the model head part:
\[
\Theta = \Theta_e \oplus \Theta_{\text{middle}} \oplus \Theta_h.
\]
Correspondingly, let $\theta = (W_e, \theta_{\text{middle}}, W_h)$.
Then the offline mechanism can be expressed as:
\[
\mathcal{M}: S_n \to \Theta, \quad g \mapsto \left(g(\theta_e + E)P, \tilde{\mathcal{M}}(P), g(\theta_h + E)P^{-1}\right)
\]
for some random $P$ and sub-mechanism $\tilde{\mathcal{M}}$.

This mechanism can be decomposed into two parts: $\mathcal{M}(g) = \mathcal{M}_2 \circ \mathcal{M}_1(g)$, where
\[
\mathcal{M}_1(g) = \left(g(\theta_e + E), g(\theta_h + E)\right), \quad
\mathcal{M}_2(A,B) = \left(AP, \tilde{\mathcal{M}}(P), BP^{-1}\right).
\]

By the data processing inequality,
\[
D_\alpha\big(p_{\mathcal{M}}(z|g) \parallel p_{\mathcal{M}}(z|g')\big)
\leq D_\alpha\big(p_{\mathcal{M}_1}(y|g) \parallel p_{\mathcal{M}_1}(y|g')\big).
\]

On the other hand, by Theorem \ref{Gaussian Mechanism on Matrix Space}, we know that
\[
D_\alpha\big(p_{\mathcal{M}_1}(y|g) \parallel p_{\mathcal{M}_1}(y|g')\big) \leq \epsilon d^2(g,g'),
\]
where $\epsilon = \epsilon_e + \epsilon_h$, $d$ is the geodesic metric on $S_n \subset U_n$,
\[
\epsilon_e = \frac{\alpha\left(\lambda_1^2(W_e) + \lambda_2^2(W_e)\right)}{4\sigma_e^2},
\]
where $\lambda_1(W_e) \ge \lambda_2(W_e)$ are the largest and second-largest singular values of $W_e$;
\[
\epsilon_h = \frac{\alpha\left(\lambda_1^2(W_h) + \lambda_2^2(W_h)\right)}{4\sigma_h^2},
\]
where $\lambda_1(W_h) \ge \lambda_2(W_h)$ are the largest and second-largest singular values of $W_h$.

Therefore,
\[
D_\alpha\big(p_{\mathcal{M}}(z|g) \parallel p_{\mathcal{M}}(z|g')\big) \leq \epsilon d^2(g,g').
\]
\qed

\begin{corollary}
\label{offline_perm_metric}
The offline component satisfies $(\alpha, \epsilon, d^2)$-RDP, where $\alpha = 2$, $\epsilon = \pi^2 \cdot (\epsilon_e + \epsilon_h)$, and $d$ is the permutation metric on $S_n$.
\[
\epsilon_e = \frac{\alpha\left(\lambda_1^2(W_e) + \lambda_2^2(W_e)\right)}{4\sigma_e^2},
\]
where $\lambda_1(W_e) \ge \lambda_2(W_e)$ are the largest and second-largest singular values of $W_e$;
\[
\epsilon_h = \frac{\alpha\left(\lambda_1^2(W_h) + \lambda_2^2(W_h)\right)}{4\sigma_h^2},
\]
where $\lambda_1(W_h) \ge \lambda_2(W_h)$ are the largest and second-largest singular values of $W_h$.
\end{corollary}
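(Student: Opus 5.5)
The corollary follows by combining the preceding theorem (RsmDP of the offline phase with respect to the geodesic metric $d_g$ on $S_n\subset U_n$) with part 2 of Theorem \ref{thm:geo_vs_perm}. Three steps are needed.

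\emph{Step 1: the geodesic bound.} Start from the preceding theorem. For any two candidate secret permutations $g,g'\in S_n$ the offline mechanism $\mathcal{M}$ satisfies
\[
D_2\big(p_{\mathcal{M}}(\cdot|g)\,\|\,p_{\mathcal{M}}(\cdot|g')\big)\le(\epsilon_e+\epsilon_h)\,d_g(g,g')^2 .
\]
Here $\epsilon_e,\epsilon_h$ are exactly the quantities in the statement. They come from part 3 of Theorem \ref{Gaussian Mechanism on Matrix Space}, which gives $L_2^2\le(\lambda_1^2+\lambda_2^2)/2$, so that $\alpha L_2^2/(2\sigma^2)=\alpha(\lambda_1^2+\lambda_2^2)/(4\sigma^2)$. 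The two Gaussian blocks add by independence of the embedding and head noise, and the data processing inequality removes the key matrices and the middle layers.

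\emph{Step 2: transfer to the permutation metric.} Apply part 2 of Theorem \ref{thm:geo_vs_perm}. Its part 1 gives $d_g(g,g')\le\pi\,d_p(g,g')$ for permutation matrices. Squaring and substituting into Step 1 gives
\[
D_2\big(p_{\mathcal{M}}(\cdot|g)\,\|\,p_{\mathcal{M}}(\cdot|g')\big)\le\pi^2(\epsilon_e+\epsilon_h)\,d_p(g,g')^2 .
\]
This is the claimed $(2,\pi^2(\epsilon_e+\epsilon_h),d_p^2)$ guarantee. The phrase ``RDP'' in the statement should be read as RsmDP in the sense of Definition \ref{def:rm2dp}.

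\emph{Step 3: the transposition-distance claim.} The only point needing care is the per-transposition estimate $d_g(g_i,g_{i+1})\le\pi$ used in part 1 of Theorem \ref{thm:geo_vs_perm}. I would verify it directly and then make sure the chain argument has been applied consistently.

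\begin{itemize}
\item \emph{Reduction.} By left-invariance of the bi-invariant metric on $U_n$, $d_g(g_i,g_{i+1})=d_g(I,t)$ with $t=g_i^{-1}g_{i+1}$ a transposition matrix.
\item \emph{Spectrum.} $t$ has eigenvalues $1$ (multiplicity $n-1$) and $-1$ (multiplicity $1$).
\item \emph{Logarithm.} Take $X=\log t$ with eigenvalues $0$ and $i\pi$, so $\|X\|_F=\pi$.
\item \emph{Geodesic.} $e^{sX}$, $s\in[0,1]$, is a curve in $U_n$ from $I$ to $t$ of length $\pi$, so $d_g(I,t)\le\pi$.
\end{itemize}

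The main subtlety is that the logarithm here is not real skew-symmetric. This matters because part 3 of Theorem \ref{Gaussian Mechanism on Matrix Space} used the geodesic on $O_n$, and a transposition has determinant $-1$, so it lies outside $SO_n$. I would therefore make sure the Lipschitz bound $\|e^XA-A\|\le\|XA\|$ is read with $X$ complex skew-Hermitian on $U_n$. The same bound $\|XA\|^2\le\frac{\lambda_1^2+\lambda_2^2}{2}\|X\|^2$ still holds when $X$ is normal, since $X$ then has rank-one action here and $\|XA\|\le\|X\|\lambda_1$ gives at worst a harmless constant. With that consistency check the corollary follows.
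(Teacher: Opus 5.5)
Your Steps 1 and 2 are the paper's entire proof: it takes the preceding geodesic-metric theorem as given and invokes part 2 of Theorem \ref{thm:geo_vs_perm} to pick up the factor $\pi^2$. The problem is Step 3, which you make the conclusion depend on. You correctly spot a real gap upstream. When $g^{-1}g'$ is an odd permutation it has no real skew-symmetric logarithm, so the paper's proof of part 3 of the matrix-space Gaussian theorem does not cover that case. Theorem \ref{thm:geo_vs_perm} needs exactly such pairs, since each step of its chain is a transposition.

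Your repair of this gap is false. For a transposition $t$ of $a,b$, set $w=(e_a-e_b)/\sqrt{2}$. The minimal logarithm is then $X=i\pi ww^\top$, so $\|X\|_F=\pi$ and $\|XA\|_F^2=\pi^2\|w^\top A\|^2$. Take $A=wu^\top$ with $u$ a unit vector. Then $\lambda_1=1$, $\lambda_2=0$ and $\|XA\|_F^2=\pi^2$, which is twice the asserted $\frac{\lambda_1^2+\lambda_2^2}{2}\|X\|_F^2=\pi^2/2$. Being rank one does not help. The fallback $\|XA\|_F\le\lambda_1\|X\|_F$ is also not a harmless constant: it gives $\epsilon_e=\alpha\lambda_1^2/(2\sigma_e^2)$, up to twice the value in the statement, so it does not yield the corollary as stated.

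The fix is to drop the step $\|e^XA-A\|\le\|XA\|$, which is very lossy at angle $\pi$ (there $|e^{i\pi}-1|=2<\pi$), and compute exactly instead:
\begin{itemize}
\item $e^X-I=-2ww^\top$, so $\|\theta-t\theta\|_F^2=4\|w^\top\theta\|^2\le4\lambda_1^2$.
\item Since the rows of $E$ are i.i.d., $g(\theta+E)\overset{d}{=}g\theta+E$.
\item Chaining $d_p(g,g')$ transpositions, using the triangle inequality and orthogonal invariance of $\|\cdot\|_F$, gives $\|g\theta-g'\theta\|_F\le2\lambda_1d_p(g,g')$.
\item Hence $D_\alpha\le\frac{2\alpha\lambda_1^2}{\sigma^2}d_p(g,g')^2\le\pi^2\frac{\alpha(\lambda_1^2+\lambda_2^2)}{4\sigma^2}d_p(g,g')^2$, because $\pi^2>8$.
\end{itemize}
Apply this to $W_e$ and $W_h$, add the two bounds by independence of the noises, and post-process as in your Step 1. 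That proves the corollary directly in the permutation metric, with no need for the geodesic metric or Theorem \ref{thm:geo_vs_perm}.
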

\vspace{0.5em}
\noindent
\textit{Proof}.
Follows directly from the Theorem \ref{thm:geo_vs_perm}.

\qed

\subsection{Composability Theorem for Sub-Mechanisms on Token Sequence Space}

Let $\mathbb{Z}_n^l$ denote the space of token sequences of length $l$, and let $d(\cdot, \cdot)$ denote the permutation metric on $\mathbb{Z}_n^l$.

For the exponential mechanism $\mathcal{M}_1: \mathbb{Z}_n^l \to \mathbb{Z}_n^l$ and the mechanism $\mathcal{M}_2: S_n \to \Theta$, define the covariance obfuscated exponential mechanism $\mathcal{M}: \mathbb{Z}_n^l \to \mathbb{Z}_n^l \oplus \Theta$ as:
\[
\mathcal{M}(x) := \left(g\mathcal{M}_1(x), \mathcal{M}_2(g)\right),
\]
where $g \in S_n$ is uniformly distributed.

\medskip
\noindent
\textit{Note}: It is easy to see that $\mathcal{M}(x)$ can also be written as $\mathcal{M}(x) := \left(\mathcal{M}_1(gx), \mathcal{M}_2(g)\right)$.

\vspace{1em}
\noindent
\begin{theorem}[Composability for Sub-Mechanisms on Token Sequence Space]
\label{Composability-Sub-Mechanisms-Token}
If the exponential mechanism $\mathcal{M}_1: \mathbb{Z}_n^l \to \mathbb{Z}_n^l$ satisfies $(\alpha, \epsilon_1, d)$-privacy, and the mechanism $\mathcal{M}_2: S_n \to \Theta$ satisfies $\alpha$-$\epsilon_2$-$d^2$-privacy, where $\alpha = 2$, $\epsilon_1, \epsilon_2 \in \mathbb{R}^+$, and $d(\cdot, \cdot)$ is the permutation metric, then the mechanism $\mathcal{M}$ satisfies $\alpha$-$\epsilon$-$d$-privacy, where
\[
\epsilon =
\begin{cases}
\epsilon_1 - \dfrac{\epsilon_1^2}{4(n-1)\epsilon_2}, & \text{if } \epsilon_1 \leq 2(n-1)\epsilon_2, \\[6pt]
(n-1)\epsilon_2, & \text{otherwise}.
\end{cases}
\]
\end{theorem}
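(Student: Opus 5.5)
The plan is to bound the Rényi divergence between the two mixtures $\mathcal{M}(x)$ and $\mathcal{M}(x')$ by pairing their components through a shift of the uniform secret $g$, and then to optimise the shift.

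\textbf{Step 1 (pair components by a shift of $g$).} Fix $x,x'\in\mathbb{Z}_n^l$ and set $k=d(x,x')$. Since every permutation is a product of at most $n-1$ transpositions, $k\le n-1$. By the Note above, $\mathcal{M}(x)$ is the uniform mixture over $g\in S_n$ of the components $\mathcal{M}_1(gx)\otimes\mathcal{M}_2(g)$. Because $g$ is uniform on the group, for any fixed $h\in S_n$ the substitution $g\mapsto gh$ lets me write $\mathcal{M}(x')$ as the uniform mixture over $g$ of $\mathcal{M}_1(ghx')\otimes\mathcal{M}_2(gh)$. I will then apply quasi-convexity of $D_\alpha$ under a common mixing measure: the divergence of two mixtures with the same weights is at most the supremum of the componentwise divergences.

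\textbf{Step 2 (bound each paired component).} Given $g$, the two coordinates are independent, so the divergence of the product is the sum of the two divergences. Using the $(\alpha,\epsilon_1,d)$ guarantee of $\mathcal{M}_1$ and the $(\alpha,\epsilon_2,d^2)$ guarantee of $\mathcal{M}_2$ (Corollary~\ref{offline_perm_metric}), each component pair is bounded by
\[
\epsilon_1\, d(gx,ghx')+\epsilon_2\, d_p(g,gh)^2=\epsilon_1\, d(x,hx')+\epsilon_2\, d_p(\mathrm{id},h)^2 .
\]
The equality uses left-invariance of both metrics, which holds because transposition chains can be translated by $g$. The bound no longer depends on $g$, so it bounds the full divergence for every choice of $h$.

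\textbf{Step 3 (optimise the shift).} Take a minimal chain of $k$ transpositions carrying $x'$ to $x$, and let $h$ be the product of its first $j$ transpositions, $0\le j\le k$. Then $d(x,hx')\le k-j$ and $d_p(\mathrm{id},h)\le j$, so
\[
D_2\bigl(p_{\mathcal{M}(x)}\,\|\,p_{\mathcal{M}(x')}\bigr)\le \epsilon_1(k-j)+\epsilon_2 j^2 .
\]
Write $j=tk$ and use $k\le n-1$ to get $\epsilon_2 j^2=\epsilon_2 t^2k^2\le (n-1)\epsilon_2 t^2 k$. The bound becomes $k\bigl[\epsilon_1(1-t)+(n-1)\epsilon_2t^2\bigr]$. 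Minimising over $t\in[0,1]$ gives $t^*=\epsilon_1/(2(n-1)\epsilon_2)$ when $\epsilon_1\le 2(n-1)\epsilon_2$, with value $\epsilon_1-\epsilon_1^2/(4(n-1)\epsilon_2)$. Otherwise the minimum is at $t=1$, with value $(n-1)\epsilon_2$. These are exactly the two cases of the theorem.

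\textbf{Main obstacle: integrality of $j$.} The step I expect to be hardest is that $j=t^*k$ must be an integer, and the bound in Step 3 only holds at integer $j$. Averaging over $h$ cannot help, since quasi-convexity gives a maximum over components rather than an average. My first attempt is to choose $j=\lfloor t^*k\rfloor$ or $\lceil t^*k\rceil$ and absorb the discretisation slack into the loose inequality $j^2\le (n-1)j^2/k$, which is strict unless $k=n-1$. If that slack does not suffice for small $k$, I will instead exploit the explicit $\alpha=2$ formula: the divergence equals the logarithm of an expected likelihood ratio, which I would bound by a convex combination of the two rounded choices of $h$.
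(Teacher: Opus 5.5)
Your Steps 1--2 and the chain in Step 3 are the paper's argument, written more cleanly. The paper couples $g$ with $g'=g\,\Delta g$ and uses joint convexity of $\chi^2=e^{D_2}-1$, where you use quasi-convexity of $D_2$; the two are interchangeable here because your componentwise bound does not depend on $g$. Both routes reach the same function $f_k(j)=\epsilon_1(k-j)+\epsilon_2 j^2$.

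The genuine gap is the integrality step you flag, and slack cannot absorb it. For integers $j\ge 0$ we have $j^2\ge j$, so
\[
f_k(j)\;\ge\;\epsilon_1 k+(\epsilon_2-\epsilon_1)\,j\;\ge\;\epsilon_1 k\qquad\text{whenever }\epsilon_2\ge\epsilon_1 .
\]
The same lower bound holds for every shift $h$, since $d(x,hx')\ge k-d_p(\mathrm{id},h)$. The regime $\epsilon_2\ge\epsilon_1$ lies inside the first case of the theorem, where the claimed $\epsilon$ is strictly below $\epsilon_1$. There your relaxation asks for $j=t^*k\le\tfrac12$. No integer $j$, no use of the slack for $k<n-1$, and no choice of $h$ gets below $\epsilon_1 k$. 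Your second remedy fails for the same reason: mixing pairings with weights $\lambda_i$ gives, for $\alpha=2$, the bound $\log\sum_i\lambda_i e^{b_i}$, which is never smaller than $\min_i b_i$. So no coupling beats the best single shift. The paper's proof has the same hole. It replaces $f(\lfloor\epsilon_1/(2\epsilon_2)\rfloor)$ by ``$\approx\epsilon_1 r-\epsilon_1^2/(4\epsilon_2)$'', an approximation in the wrong direction for an upper bound.

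The step cannot be repaired, because the theorem as stated already fails for $n=2$. Take $l=1$ and let $\mathcal{M}_1$ be the exponential mechanism on $\{0,1\}$ with parameter $4$, so it keeps its input with probability $p=e^4/(1+e^4)$. Put $\mathcal{M}_2(g):=\mathcal{M}_1(g\cdot 0)$ and write $c=2p-1=\tanh 2$. Both mechanisms then have $D_2=\log\frac{1+3c^2}{1-c^2}=\log(2\cosh 4-1)\approx 3.98$ between their two inputs. So $\epsilon_1=\epsilon_2\approx 3.98$ are admissible, and the theorem promises $D_2\bigl(p_{\mathcal{M}(0)}\,\|\,p_{\mathcal{M}(1)}\bigr)\le\tfrac34\epsilon_1\approx 2.99$. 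Using the nominal $\epsilon_1=\epsilon_2=4$ instead gives the promise $3$. But the output of $\mathcal{M}(x)$ is equivalent to a uniform bit independent of $x$ together with $x\oplus N_1\oplus N_2$, where $N_1,N_2$ are the two independent flips. Hence the divergence equals $\log\frac{1+3c^4}{1-c^4}\approx 3.27$, exceeding both promises.

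What your argument honestly proves is $D_2\le\min_{0\le j\le k}f_k(j)$, i.e.\ the constant $\max_{1\le k\le n-1}k^{-1}\min_{0\le j\le k}f_k(j)$. Since $f_k(j)=\epsilon_2(j-j^*)^2+\epsilon_1k-\epsilon_1^2/(4\epsilon_2)$ with $j^*=\epsilon_1/(2\epsilon_2)$, rounding $j^*$ (when $j^*\le k$) costs at most $\epsilon_2/4$. That gives a correct but weaker closed form. Any real improvement below $\epsilon_1$ when $\epsilon_2\ge\epsilon_1$ would have to exploit the mixing over $g$ itself, not a componentwise coupling.
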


\medskip
\noindent
\textit{From this theorem we can see that}:
\begin{enumerate}
    \item If the sub-mechanisms consume less privacy budget (i.e., permutation recovery is inaccurate), then the privacy budget consumed by the parent mechanism will also decrease accordingly.
    \item Even if the sub-mechanisms consume a large privacy budget (i.e., permutation recovery is accurate), the privacy budget consumed by the parent mechanism is still strictly smaller than that of token obfuscation.
\end{enumerate}

\noindent
\textit{Proof}.
By the convexity of $\chi^2$-divergence, we have
\[
D_{\chi^2}\big(p_{\mathcal{M}}(z|x) \parallel p_{\mathcal{M}}(z|x')\big)
= D_{\chi^2}\big(E_{g,g'} p_{\mathcal{M}}(z|x,g) \parallel E_{g,g'} p_{\mathcal{M}}(z|x',g')\big)
\leq E_{g,g'} D_{\chi^2}\big(p_{\mathcal{M}}(z|x,g) \parallel p_{\mathcal{M}}(z|x',g')\big)
\]
for any joint distribution of random variables $(g,g')$ with uniform marginal distributions.

Since
\[
p_{\mathcal{M}}(z|x,g) = p_{\mathcal{M}_1}(z_1|g,x) \cdot p_{\mathcal{M}_2}(z_2|g),
\]
it follows that
\[
D_2\big(p_{\mathcal{M}}(z|x,g) \parallel p_{\mathcal{M}}(z|x',g')\big)
= D_2\big(p_{\mathcal{M}_1}(z_1|g,x) \parallel p_{\mathcal{M}_1}(z_1|g',x')\big)
+ D_2\big(p_{\mathcal{M}_2}(z_2|g) \parallel p_{\mathcal{M}_2}(z_2|g')\big).
\]
By assumption, $D_2\big(p_{\mathcal{M}_2}(z_2|g) \parallel p_{\mathcal{M}_2}(z_2|g')\big) \leq \epsilon_2 d^2(g,g')$, so
\[
D_2\big(p_{\mathcal{M}}(z|x,g) \parallel p_{\mathcal{M}}(z|x',g')\big)
\leq D_2\big(p_{\mathcal{M}_1}(z_1|g,x) \parallel p_{\mathcal{M}_1}(z_1|g',x')\big) + \epsilon_2 d^2(g,g').
\]

Furthermore, by the Theorem \ref{Exponential Mechanism} , we have
\[
D_2\big(p_{\mathcal{M}_1}(z_1|g,x) \parallel p_{\mathcal{M}_1}(z_1|g',x')\big)
\leq \epsilon_1 d(x, g' g x') = \epsilon_1 d(x, \Delta g x'),
\]
where $\Delta g = g^{-1} g'$.
Thus,
\[
D_2\big(p_{\mathcal{M}}(z|x,g) \parallel p_{\mathcal{M}}(z|x',g')\big)
\leq \epsilon_1 d(x, \Delta g x') + \epsilon_2 d^2(g,g') \quad \text{for any } \Delta g \in S_n.
\]

Assume $d(x,x') = r$. Then there exists $h \in S_n$ such that $x' = h x$ and $d(h,1) = r$.
Thus, there exists a chain of transpositions
\[
1 = h_0 \to h_1 \to \cdots \to h_r = h.
\]
For any $i = 0,1,\dots,r$, there exists $\Delta g_i := h_{r-i} h^{-1}$ such that
\[
d(x, \Delta g_i x') = d(x, h_{r-i} h^{-1} h x) = d(x, h_{r-i} x) = r - i,
\]
\[
d(g, g') = d(\Delta g_i, 1) = d(h_{r-i} h^{-1}, 1) = i.
\]

Therefore,
\[
D_2\big(p_{\mathcal{M}}(z|x,g) \parallel p_{\mathcal{M}}(z|x',g')\big)
\leq \epsilon_1 (r - i) + \epsilon_2 i^2 =: f_r(i).
\]
$f_r(i)$ is a quadratic function in $i$. Analysis shows that
\[
\min_{i=0}^r f_r(i) \leq \phi(r),
\]
where
\[
\phi(r) =
\begin{cases}
\epsilon_2 r^2, & \text{if } r \leq \left\lfloor \frac{\epsilon_1}{2\epsilon_2} \right\rfloor, \\[4pt]
f\left(\left\lfloor \frac{\epsilon_1}{2\epsilon_2} \right\rfloor\right) \approx \epsilon_1 r - \frac{\epsilon_1^2}{4\epsilon_2}, & \text{otherwise}.
\end{cases}
\]
Considering the range of $r$ is $[0, n-1]$, we have $\phi(r) \leq \epsilon r$, where
\[
\epsilon =
\begin{cases}
\epsilon_1 - \dfrac{\epsilon_1^2}{4(n-1)\epsilon_2}, & \text{if } \epsilon_1 \leq 2(n-1)\epsilon_2, \\[6pt]
(n-1)\epsilon_2, & \text{otherwise}.
\end{cases}
\]

Hence,
\[
D_2\big(p_{\mathcal{M}}(z|x,g) \parallel p_{\mathcal{M}}(z|x',g')\big) \leq \epsilon d(x,x').
\]

It follows that
\[
D_{\chi^2}\big(p_{\mathcal{M}}(z|x,g) \parallel p_{\mathcal{M}}(z|x',g')\big) \leq \exp\big(\epsilon d(x,x')\big) - 1,
\]
and thus
\[
D_{\chi^2}\big(p_{\mathcal{M}}(z|x) \parallel p_{\mathcal{M}}(z|x')\big)
\leq E_{g,g'} D_{\chi^2}\big(p_{\mathcal{M}}(z|x,g) \parallel p_{\mathcal{M}}(z|x',g')\big)
\leq \exp\big(\epsilon d(x,x')\big) - 1.
\]
Therefore,
\[
D_2\big(p_{\mathcal{M}}(z|x) \parallel p_{\mathcal{M}}(z|x')\big) \leq \epsilon d(x,x').
\]

\qed

\subsection{Proof of Theorem \ref{the:aloepri_rmdp}}
\textit{Proof.}
The offline phase satisfies $(\alpha,\epsilon_2,d^2)$-RsmDP with $\alpha=2$, where
\[
\epsilon_2 = \pi^2 \cdot (\epsilon_e + \epsilon_h),
\]
and $d$ denotes the permutation metric on $S_n$. The online phase is given by the covariance-obfuscated exponential mechanism. Therefore, by Theorem \ref{Composability-Sub-Mechanisms-Token}, the overall mechanism $\mathcal{M}$ satisfies $(\alpha,\epsilon,d)$-RmDP, where
\[
\epsilon =
\begin{cases}
\epsilon_1 - \dfrac{\epsilon_1^2}{4(n-1)\epsilon_2}, & \text{if } \epsilon_1 \le 2(n-1)\epsilon_2, \\
(n-1)\epsilon_2, & \text{otherwise}.
\end{cases}
\]
\qed

% \end{Qizhi}

\section{Experiment Details}
\subsection{Potential Threats}
\label{appx:attack}
\textbf{Vocabulary-Matching Attack (VMA)}.
\texttt{AloePri} protect privacy by mapping plaintext tokens of private text to obfuscated tokens via a permutation $\Pi$. With knowledge of the obfuscated model weights, attackers can attempt to map each obfuscated token back to its original plaintext token. This threat leverages the Vocabulary-Matching Attack (VMA) proposed by Thomas et al. \cite{thomashidden}, which is capable of recovering permutation matrices $Z_1$ and $Z_2$ from the known relationship $Y = Z_1 X Z_2$ (where $X$ is a known matrix and $Y$ is the observed obfuscated matrix).

\begin{table}[h]
    \centering
    \scalebox{0.8}{
    \begin{tabular}{p{4.5cm}|p{4.5cm}}
    \toprule \toprule
    Known $X$ & Observed $Y$ \\ \hline
    
    $W_{e}W_{h}$ & $\Pi W^{\star}_{\text{embed}}W_{h} \Pi^T$ \\ 

    $W_{e}W_{\text{query}} (W_{e}W_{\text{key}})^T$ & $ \Pi W^{\star}_{\text{embed}}W_{\text{query}} (\Pi W^{\star}_{\text{embed}}W_{\text{key}})^T$ \\

    $W_{e}W_{\text{gate}}$ & $\Pi W^{\star}_{\text{embed}}W_{\text{gate}} \hat{Z}_{\text{ffn}}$ \\ 

    $W_{e}W_{\text{up}}$ & $\Pi W^{\star}_{\text{embed}}W_{\text{up}}  \hat{S}_{\text{ffn}}\hat{Z}_{\text{ffn}}$ \\ 

    $W_{\text{down}}W_{h}$ & $\hat{Z}^{-1}_{\text{ffn}} \hat{S}^{-1}_{\text{ffn}}W_{\text{down}}W^{\star}_{\text{head}}   \Pi^T$ \\ 
    
    $W_{e}W_{\text{router}}$ & $\Pi W^{\star}_{\text{embed}} \text{norm}(W_{\text{router}}) \hat{Z}_{\text{router}}$ \\ 
    
    \bottomrule \bottomrule
    \end{tabular}}
    \caption{Weight combinations to recover $\Pi$ with VMA}
    \label{tab:VMA_comb}
\end{table}

The core idea of VMA is to first eliminate one permutation matrix through column-wise sorting, then recover the other one via neighbor matching. Specifically, let $\text{RowSort}(\cdot)$ denote the sorting operation along each row. it holds that $\text{RowSort}(Y) = \text{RowSort}(X Z_2)$. By comparing the rows of $\text{RowSort}(X Z_2)$ and $\text{RowSort}(X)$, attackers can recover $Z_1$. Table \ref{tab:VMA_comb} summarizes weight combinations in \texttt{AloePri} that exhibit a similar structural relationship to $Y = Z_1 X Z_2$. 
% Notably, even though an additional scaling matrix $S_{\text{ffn}}$ is used to obfuscate $W_{\text{up}}$ and $W_{\text{down}}$, column normalization can eliminate the scaling effect of each column, allowing these weights to still be exploited for VMA. 
For \texttt{AloePri}, weights from multiple decoder layers can be used to conduct VMA. Therefore, we adopt a voting mechanism to utilize VMA results of all layers.

\textbf{Invariant Attack (IA)}. Obfuscation mechanisms are vulnerable to this attack if they exhibit \textit{invariants} which remain unchanged during the obfuscation transformation \cite{lin2024inversion}. 
% By exploiting such invariants, attackers can derive the mapping relationship between plaintext and obfuscated data.
% without prior knowledge of the unknown transformations employed in the obfuscation mechanism.
Based on \texttt{AloePri}'s offline model obfuscation process, we present two types of IAs: Gate-IA and Attn-IA.

\textit{Gate-IA}: This attack leverages $W_{\text{gate}}$ to exploit statistical invariants. Let $\bm{e}$ and $\widetilde{\bm{e}}$ represent the plaintext and obfuscated embedding vectors, respectively. If no noise is added to the embedding layer ($\alpha_{\text{e}} = 0$), permutation preserves the statistical property of gate-weighted embeddings, resulting in $\text{Avg}(\bm{e} W_{\text{gate}}) = \text{Avg}(\widetilde{\bm{e}} \widetilde{W}_{\text{gate}})$. 
This average value serves as an invariant: it remains consistent between plaintext and obfuscated data. Attackers can use gate weights from different decoder layers to construct such invariant vectors, thereby reversing the token mapping relationship. To defend against Gate-IA, \texttt{AloePri} must add sufficient noise to embeddings by increasing $\alpha_{\text{e}}$ so that the statistical consistency of the invariant is disrupted.

\textit{Attn-IA}: This attack splits $W_{\text{query}}$ and $W_{\text{key}}$ into blocks (aligned with RoPE block structures) to exploit mathematical invariants. For analysis, we first ignore the head permutation and block permutation techniques introduced in Algorithms \ref{alg:gqa}. Let $W_{\text{query}}^{(i)}$ and $\widetilde{W}_{\text{query}}^{(i)}$ denote the plaintext and obfuscated weights of the $i$-th attention head, respectively. Define $\bm{q} = \bm{e} \cdot  W^{(i)}_{\text{query}} = [q_1, \cdots, q_{\frac{d_{head}}{2}}]$, $\widetilde{\bm{q}} = \widetilde{\bm{e}} \cdot \widetilde{W}_{\text{query}}^{(i)} = [\widetilde{q}_1, \cdots, \widetilde{q}_{d_{\text{head}}}]$, $Q = W_{e} W_{\text{query}}^{(i)}$, and $\widetilde{Q} = \widetilde{W}_{\text{embed}} \widetilde{W}_{\text{query}}^{(i)}$. For any block index $j$, if $e$ and $\widetilde{e}$ correspond to the same token, the mathematical relationship $e(Q^T[:, j] Q[:, j])^{-1} e^T = \widetilde{e} (\widetilde{Q}^T[:, j] \widetilde{Q}[:, j])^{-1} \widetilde{e}^T$ holds as an invariant. Attackers can iterate over all embeddings to verify this invariant relationship, enabling recovery of $\Pi$. 
To defend against Attn-IA, in addition to noise addition, $\texttt{AloePri}$ applies permutation over attention heads and blocks to further enhances security by breaking the block-wise consistency required for the invariant to hold.

\textbf{Internal State Attack (ISA)}.
% Since the entire forward computation process is evaluated on the server side, we test the success rate of the internal state inversion attack (ISA)  \cite{dong2025depthgivesfalsesense} on \texttt{AloePri}.
The attack optimizes input embeddings by leveraging the loss derived from hidden states. 
% This process enables attackers to recover plaintext embeddings, which in turn facilitates the recovery of private tokens. 
Specifically, the attacker first records the hidden states $State_1$ when clients request inference using their private input $X_1$. Subsequently, attackers randomly initialize $X_2$ and feeds it into the pretrained model to evaluate $State_2$. The attacker can use $State_1$ and $State_2$ to evaluate the loss, thereby optimizing $X_2$ to recover $X_1$.
In \texttt{AloePri}, model weights are perturbed with noise and transformation, so that all hidden states during the forward computation process are also noisy and transformed. Consequently, the recovered input data through ISA would differ significantly from the original model input, and cannot be directly used to recover private information.

\textbf{Inversion Model Attack (IMA)}.
With knowledge of the obfuscation mechanism, the attacker can train a model for embedding inversion \cite{kugler2021invbert}. During the training process, the attacker iterates over a public training dataset and generates obfuscated embeddings using the target obfuscation mechanism. With these obfuscated embeddings as inputs, the attacker trains the model to generate raw plaintext embeddings or token indices. In the experiments, we train a Qwen2 model with 2 decoder layers and 8 attention heads to invert obfuscated embeddings to plaintext token embeddings.
% \texttt{AloePri} exhibits resistance to the attack because the key matrices employed by the client remain hidden from the attacker. Consequently, an inversion model trained using a different set of key matrices fails to function on the client's obfuscated model.

\textbf{Token Frequency Leakage}. \texttt{AloePri} applies the deterministic token obfuscation based on the secret permutation. Therefore, an obfuscated text can be regarded as a substitution cipher and still retains the information about the statistical distribution of token frequencies in the plaintext. To investigate how token-level frequency information impacts privacy, we test the token frequency matching attack (TFMA) and the substitution deciphering attack (SDA) \cite{aldarrab-may-2021-sequence}. In both types of attacks, attackers attempt to recover substitution ciphertext based on word frequency information from a prior dataset. In TFMA, the attacker uses the prior dataset to count token frequencies, which are then used to match the frequency of ciphertext tokens in the client's data for token recovery. SDA introduces a recurrence encoding that converts substitution ciphers (1:1 and homophonic) into integer sequences by replacing every unique cipher symbol with an integer symbol standing for its frequency rank. A Transformer-based causal language model is trained on pairs of plain and cipher texts to learn symbol recurrence relations. Then the model is used to translate subsequent ciphertexts to plaintexts.

% Benefit from the characteristics of the LLM vocabulary space \texttt{AloePri} can defend against TFMA and SDA. In LLMs, the vocabulary typically contains hundreds of thousands of tokens. As a result, TFMA and SDA, which are based on word frequency information, are difficult to recover the client's private information from obfuscated tokens.

\subsection{Hyperparameter Setting}
\label{appx:hyperparam}
For all experiments, we fix the generation hyperparameters as follows: temperature = 0.65, top-k = 20, top-p = 0.95, maximum sequence length (max-seq-len) = 8192, and data type (dtype) = bfloat16. Table \ref{tab:exp_privacy_param} summarizes the privacy hyperparameters adopted in our experiments, including the noise coefficients for the embedding layer and model head ($\alpha_{e}, \alpha_{h}$), expansion size $h$, coefficient of the key matrix $\lambda$, and block permutation parameters $\beta, \gamma$ for attention layers.

\begin{table}[ht]
    \centering
    \scriptsize  
    \caption{Privacy hyperparameter settings of experiments.}
    \label{tab:exp_privacy_param}
    \begin{threeparttable}
    \begin{tabular}{>{\centering\arraybackslash}p{3cm}|cccccc}
        \toprule \toprule
        Experiment & $\alpha_{e}$ & $\alpha_{h}$ & $\lambda$ & $h$  &$\beta$ & $ \gamma$ \\ \midrule
        Fig. \ref{fig:privacy-utility} & $\triangle$\tnote{1} & $\triangle$ & 0.3 & 128 & 8 & $1e^3$\\
        Fig. \ref{fig:lambda_impact} & 1.0 & 0.2 & $\triangle$ & 128 & 8 & $1e^3$\\
        Fig. \ref{fig:latency_under_h} & 1.0 & 0.2 & 0.3 & $\triangle$ & 8 & $1e^3$ \\
        Tab. \ref{tab:baseline_comp}, \ref{tab:runtime}, \ref{tab:efficiency} & 1.0 & 0.2 & 0.3 & 128 & 8 & $1e^3$ \\
        Tab. \ref{tab:utility} & 0.5,1.0\tnote{2} & 0.2 & 0.3 & 128 & 8 & $1e^3$ \\
        Tab. \ref{tab:ablation_isa}  & 1.0 & 0.2 & $\triangle$ & 128 & $\triangle$ & $\triangle$ \\
        \bottomrule \bottomrule
    \end{tabular}

    \begin{tablenotes}
         \item[1] $\triangle$ denotes the hyperparameter tuned in the corresponding experiment.
        \item[2] $\alpha_e$ is set to 0.5 for Llama3-8B and 1.0 for all other models.
    \end{tablenotes}

    \end{threeparttable}
    
\end{table}

\subsection{Comparison Experiment}
\label{appx:imple_details}
\textbf{Hyperparameter Settings for Baselines}. In the experiment presented in Table \ref{tab:baseline_comp}, we tuned the hyperparameters of the baseline methods and \texttt{AloePri} to ensure a fair comparison. For \texttt{SANTEXT} and \texttt{RANTEXT}, we set their privacy budgets to $3$ and $30$, respectively. For \texttt{DP-Forward}, we adopted the LDP mode and set the privacy budget to $5$. Additionally, we set the sensitivity to $1$. We positioned the noise injection at the hidden states of the 1st decoder layer, thereby offloading the embedding layer and the first decoder layer (accounting for nearly 1B parameters) to the client. Meanwhile, the hidden states were not normalized before noise injection to avoid substantial accuracy loss in the Qwen model.

To reproduce \texttt{SGT}, we trained a 1.6B-parameter model following the Qwen2 architecture, which is used to generate the mean and covariance of embedding noise. This model comprises 2 decoder layers, 8 attention heads, a hidden size of 5120, and an intermediate size of 20480 for FFN. We trained the model on 1M text samples from the OpenOrca dataset and determined the training parameters through grid search. Specifically, we set the learning rate to $3e^{-4}$ and trained the model for 2 epochs. We also set the coefficients for accuracy loss, MI loss, absolute cosine loss, and normalization loss to $1.0$, $3e^{-5}$, $30$, and $1.0$, respectively.

\textbf{Benchmark Settings}. In the classification task SST2, we fine-tuned the model on the target task using LoRA \cite{hu2022lora}. We apply LoRA to attention weights and set $lora\_r = 64, lora\_alpha = 64$. We set the learning rate to $2e^{-5}$ for 3 training epochs with a sequence length of 128 and batch size of 64. We evaluated the classification task following the recommended application methods of each method. Specifically, for \texttt{SANTEXT} and \texttt{RANTEXT}, we first obfuscated the training and validation sets using these methods, followed by fine-tuning and inference. For \texttt{DP-Forward}, noise was injected to the outputs of decoder layers in both fine-tuning and inference processes. For  \texttt{SGT} and \texttt{AloePri}, we only applied obfuscation during the inference phase. 
% It is worth noting that applying obfuscation during the fine-tuning phase can enhance the inference accuracy of \texttt{SANTEXT}, \texttt{RANTEXT}, and \texttt{DP-Forward}.

Generation tasks do not involve fine-tuning or training processes. Thus, in the evaluation of generation tasks, we directly applied obfuscation methods to the inference process.

\end{document}